\newcommand{\labeltext}[2]{%
  \@bsphack
  \csname phantomsection\endcsname 
  \def\@currentlabel{#1}{\label{#2}}%
  \@esphack
}
\definecolor{shadecolor}{gray}{0.88}
\newcommand{\NP}{\operatorname{NP}}
\newcommand{\YES}{\textsc{Yes}}
\newcommand{\NO}{\textsc{No}}
\newcommand{\Test}[2]{
\def\temp{#2}\ifx\temp\empty
  \operatorname{Test}_{#1}
\else
  \operatorname{Test}_{#1}^{#2}
\fi
}
\newcommand{\bin}[2]{
{{#1}\choose{#2}}
}
\newcommand{\A}{\mathbf{A}}
\newcommand{\B}{\mathbf{B}}
\newcommand{\C}{\mathbf{C}}
\newcommand{\GG}{\mathbf{G}}
\newcommand{\HH}{\mathbf{H}}
\newcommand{\JJ}{\mathbf{J}}
\newcommand{\K}{\mathbf{K}}
\newcommand{\X}{\mathbf{X}}
\newcommand{\Vset}{V}
\newcommand{\Eset}{E}
\newcommand{\N}{\mathbb{N}}
\newcommand{\R}{\mathbb{R}}
\newcommand{\Q}{\mathbb{Q}}
\newcommand{\Z}{\mathbb{Z}}
\renewcommand{\vec}[1]{\mathbf{#1}}
\newcommand{\ba}{\vec{a}}
\newcommand{\bh}{\vec{h}}
\newcommand{\bx}{\vec{x}}
\newcommand{\bv}{\vec{v}}
\newcommand{\by}{\vec{y}}
\newcommand{\bw}{\vec{w}}
\newcommand{\be}{\vec{e}}
\newcommand{\bz}{\vec{z}}
\newcommand{\blambda}{{\bm{\lambda}}}
\newcommand{\bmu}{{\bm{\mu}}}
\DeclareMathOperator{\Tr}{Tr}
\DeclareMathOperator{\spann}{span}
\DeclareMathOperator{\dist}{dist}
\DeclareMathOperator{\diag}{diag}
\DeclareMathOperator{\Aut}{Aut}
\DeclareMathOperator{\Orb}{\mathscr{O}}
\DeclareMathOperator{\SDA}{SDA}
\DeclareMathOperator{\BLP}{BLP}
\DeclareMathOperator{\AIP}{AIP}
\DeclareMathOperator{\BA}{BA}
\DeclareMathOperator{\SDP}{SDP}
\DeclareMathOperator{\PCSP}{PCSP}
\DeclareMathOperator{\CSP}{CSP}
\DeclareMathOperator{\supp}{supp}
\DeclareMathOperator{\adj}{\mathscr{A}}
\DeclareMathOperator{\Alg}{Alg}
\newcommand{\bone}{\mathbf{1}}  
\newcommand{\bzero}{\mathbf{0}} 
\newcommand\Frob[2]{{\ensuremath\langle #1,#2\rangle}_{\operatorname{F}}}
\theoremstyle{plain}
\newtheorem{thm}{Theorem}
\newtheorem*{thm*}{Theorem}
\newtheorem{lem}[thm]{Lemma}
\newtheorem*{lem*}{Lemma}
\newtheorem{prop}[thm]{Proposition}
\newtheorem*{prop*}{Proposition}
\newtheorem{cor}[thm]{Corollary}
\newtheorem*{cor*}{Corollary}
\newtheorem*{conj*}{Conjecture}
\newtheorem{fact}[thm]{Fact}
\theoremstyle{definition}
\newtheorem{defn}[thm]{Definition}
\newtheorem*{defn*}{Definition}
\newtheorem{rem}[thm]{Remark}
\newcommand{\propparteq}[2]
{
\ensuremath{\stackrel{\operatorname{P}.\ref{#1}#2}{\;\;=\;\;}}
}
\newcommand{\equationeq}[1]
{
\ensuremath{\stackrel{\eqref{#1}}{\;\;=\;\;}}
}
\newcommand{\spaceeq}
{
\ensuremath{\stackrel{}{\;\;=\;\;}}
}
\begin{document}

\author{Lorenzo Ciardo\\
University of Oxford\\
\texttt{lorenzo.ciardo@cs.ox.ac.uk}
\and
Stanislav \v{Z}ivn\'y\\
University of Oxford\\
\texttt{standa.zivny@cs.ox.ac.uk}
}

\title{Semidefinite programming and linear equations \\ vs.\\ homomorphism problems\thanks{An extended abstract of this work appeared in the Proceedings of the 
56th Annual ACM Symposium on Theory of Computing (STOC'24).
This work was supported by UKRI EP/X024431/1. For the purpose of Open Access, the authors have applied a CC BY public copyright licence to any Author Accepted Manuscript version arising from this submission. All data is provided in full in the results section of this paper.}}

\date{\today}
\maketitle

\begin{abstract}
   \noindent We introduce a relaxation for homomorphism problems that combines semidefinite programming with linear Diophantine equations, and propose a framework for the analysis of its power based on the spectral theory of association schemes. We use this framework to establish an unconditional lower bound against the 
   semidefinite programming + linear
   equations model,
   by showing that the relaxation does not solve the approximate graph homomorphism problem and thus, in particular,
   the approximate graph colouring problem.
\end{abstract}

\section{Introduction}

\noindent Semidefinite programming plays a central role in the design of efficient algorithms and 
in dealing with $\NP$-hardness. For many fundamental problems, the best known (and sometimes provably best possible) approximation
algorithms are achieved via  relaxations based on semidefinite programs~\cite{GW95,Karger98:jacm,Raghavendra08:everycsp,Arora09:jacm,KT17,KawarabayashiTY24}.
In this work, we focus on computational problems of the following general form:  Given two structures (say, two digraphs) $\X$ and $\A$, is there a homomorphism from $\X$ to $\A$?
A plethora of different computational problems---in particular, those involving satisfiability of constraints---can be cast in this form.
The semidefinite programming paradigm is naturally applicable to this type of problems, and it yields relaxations that are \emph{robust to noise}: They are able to find a near-satisfying assignment even when the instance is almost---but not perfectly---satisfiable~\cite{Barto16:sicomp} (see also~\cite{bgs_robust23stoc}). On the other hand, 
certain homomorphism problems can be solved exactly in polynomial time but are inherently \emph{fragile to noise}---the primary example being systems of linear equations, which are tractable via Gaussian elimination but 
whose noisy version is $\NP$-hard~\cite{Hastad01}. Problems that behave like linear equations are hopelessly stubborn against the semidefinite programming model~\cite{Schoenebeck08,Tulsiani09:stoc,Chan15:jacm}.
It is then natural, in the context of homomorphism problems, to consider stronger versions of semidefinite programming relaxations that are equipped with a built-in linear-equation solver.

Consider a homomorphism\footnote{Letting $\Vset(\X)$ and $\Eset(\X)$ denote the vertex and edge sets of $\X$, a map $f:\Vset(\X)\to\Vset(\A)$ is a homomorphism if $(f(x),f(y))\in\Eset(\A)$ whenever $(x,y)\in\Eset(\X)$. The expression ``$\X\to\A$'' denotes the existence of a homomorphism.} $f:\X\to\A$. Letting $\lvert \Vset(\X)\rvert =p$ and $\lvert \Vset(\A)\rvert =n$, we can encode $f$ in a $pn\times pn$ matrix $M_f$ containing blocks of size $n\times n$, where the blocks are indexed by pairs of vertices of $\X$, and the entries in a block by pairs of vertices of $\A$.
For $x,y\in\Vset(\X)$ and $a,b\in\Vset(\A)$,
the $(a,b)$-th entry of the $(x,y)$-th block
is $1$ if $a=f(x)$ and $b=f(y)$, and $0$ otherwise. Let us explore the structure of $M_f$. Each block has nonnegative entries summing up to $1$,
and diagonal blocks are diagonal matrices. Since $f$ is a homomorphism, 
the $(a,b)$-th entry of the $(x,y)$-th block
is $0$ when $(x,y)\in\Eset(\X)$ and $(a,b)\not\in\Eset(\A)$. Finally, $M_f$ is positive semidefinite since it is symmetric and, for a $pn$-vector $\bv$, it satisfies $\bv^TM_f\bv
=(\sum_{x}v_{x,f(x)})^2\geq 0$.
The \emph{standard semidefinite programming relaxation} (SDP) of the homomorphism problem ``\emph{$\X\to\A$?}'' consists in 
looking for a real matrix $M$ with the properties described above.
We write $\SDP(\X,\A)=\YES$ if such a matrix $M$ exists.

Any \emph{Constraint Satisfaction Problem} (CSP)
may be expressed as the homomorphism problem of checking whether an \emph{instance} structure $\X$ homomorphically maps to a \emph{template} structure $\A$. Up to polynomial-time equivalence, $\X$ and $\A$ can be assumed to be digraphs without loss of generality~\cite{Feder98:monotone}. (As was shown in~\cite{BG21}, a similar fact holds for the promise version of CSP, which we shall encounter in a while.)
The power of semidefinite programming in the realm of CSPs is well understood: The CSPs solved by SDP are exactly those having \emph{bounded width}~\cite{Feder98:monotone,Barto16:sicomp}. Crucially, for CSPs, boosting SDP via the so-called \emph{lift-and-project technique}~\cite{Laurent03} does not increase its power:  Any semidefinite programming relaxation of polynomial size---in particular, any constant number of rounds of the \emph{Lasserre ``Sum-of-Squares'' hierarchy}~\cite{Lasserre02}---solves precisely the same CSPs as SDP~\cite{Barto16:sicomp,tz18}. 
The positive resolution of Feder--Vardi CSP Dichotomy Conjecture~\cite{Feder98:monotone} by Bulatov~\cite{Bulatov17:focs} and Zhuk~\cite{Zhuk20:jacm} 
implies that any tractable CSP is a (nontrivial) combination of $(i)$ bounded-width CSPs and $(ii)$ CSPs that can simulate linear equations (which have unbounded width).
The aim to find a \emph{universal solver} for all tractable CSPs has then driven a new generation of algorithms that combine $(i)$ techniques suitable for exploiting bounded width with $(ii)$ variants of Gaussian elimination (which solves linear equations). This line of work was pioneered by~\cite{BG19,bgwz20}, with the description of the algorithm BA
mixing a linear-programming-based relaxation with Gaussian elimination. 
Variants of this algorithm
were later considered in~\cite{cz23sicomp:clap,OConghaileC22:mfcs,Dalmau24:lics}.

The algorithm we propose in this work (which we call SDA) can be described as follows.
First, notice that the matrix $M_f$ encoding a homomorphism $f:\X\to\A$ has entries in $\{0,1\}$, and all of the properties of $M_f$ highlighted above are in fact linear equations, with the exception of the nonnegativity of its entries and the positive semidefiniteness. Hence, a different relaxation can be obtained by looking for a matrix $M'$ that respects the linear conditions, and whose entries are integers. We end up with a linear Diophantine system, that can be solved efficiently through integer variants of Gaussian elimination, see~\cite{schrijver1998theory}. 
We write $\SDA(\X,\A)=\YES$ if both $M$ and $M'$ exist, and a technical refinement condition constraining the
supports of $M$ and $M'$ holds (see Section~\ref{sec_preliminaries} for the formal definition of the algorithm).\footnote{The ``A'' in SDA stands for ``affine'' integer programming, the name by which the CSP relaxation based on
linear Diophantine equations is sometimes referred to in the literature~\cite{BBKO21}.} 

The first main goal of our work is to \emph{introduce a technique based on the spectral theory of association schemes 
for the analysis of this relaxation model}.
Our approach aims to describe how the algorithm exploits the symmetry of the problem under relaxation. To that end, we gradually refine and abstract the way symmetry is expressed. 
Starting from \emph{automorphisms}, which capture symmetry of $\X$ and $\A$,
we lift the analysis to 
the \emph{orbitals} of $\X$ and $\A$ under the action of the automorphism groups and, finally, we endow the orbitals with the algebraic structure of \emph{association schemes}.
The progressively more abstract language for expressing the symmetry of the problem yields a progressively cleaner description of the impact of symmetry on the relaxation.
For the SDP part of SDA,
the abstraction process 
``{automorphisms} $\to$ {orbitals} $\to$ {association schemes}''
may be viewed
in purely linear-algebraic terms, as the \emph{quest for a convenient (i.e., low-dimensional) vector space where the output of the algorithm lives, and a suitable basis for this space}.
The last stage of this metamorphosis of symmetry discloses a new algebraic perspective on the 
relaxation. 
In particular, for certain classes of digraphs, association schemes allow
turning SDP into a linear program. On a high level, this is an instance of a general \emph{invariant-theoretic} phenomenon:
The presence of a rich group of symmetries makes it possible to reduce the size of semidefinite programs~\cite{KlerkPS07,gatermann2004symmetry,kanno2001group} and, in certain cases, to describe their feasible regions in terms of linear inequalities~\cite{delsarte1973algebraic,Goemans99:computing}, see also~\cite{sturmfels2008algorithms,derksen2015computational}.
The non-convex nature of Diophantine equations makes the linear part of SDA process the symmetry of the inputs in a quite different way. We exploit the dihedral structure of the automorphism group of cycles to show that each associate in their scheme can be assigned an integral matrix with a small support; this, in turn, can be used to produce a solution $M'$ to the linear system.

This approach allows for a direct transfer of the results available in algebraic combinatorics on association schemes to the study of relaxations of homomorphism problems. 
For example, the explicit expression for the \emph{character table}
of a specific scheme known as the \emph{Johnson scheme} shall be crucial for establishing a lower bound against the SDA model.
One peculiarity of this framework is that \emph{it is not forgetful of the structure of the instance $\X$}. This  contrasts with the techniques for  describing relaxations of CSPs~\cite{ktz15:sicomp,Barto16:sicomp,tz17:sicomp,Dalmau18:jcss} based on the \emph{polymorphic approach}~\cite{Jeavons97:closure,Jeavons98:algebraic,Bulatov05:classifying}, whose gist is that the complexity of a CSP depends on the identities satisfied by the \emph{polymorphisms} of the CSP template $\A$~\cite{BOP18}. 
The polymorphic approach yields elegant characterisations of the power of some relaxations, in the sense that a CSP is solved by a certain algorithm if and only if its polymorphisms satisfy identities typical of the algorithm. (As established in~\cite{BBKO21}, a similar approach also works for the promise version of CSP that we shall discuss shortly.) These ``instance-free'' characterisations rely on having access to both the identities typical of the algorithm---not available in the case of SDP and, thus, SDA---and a succinct description of the polymorphisms of the template---which is missing in the case of the approximate homomorphism problems we shall see next.
In contrast, the description based on association schemes does take the structure of the instance into account, which results in a higher control over the behaviour of the algorithm on certain highly symmetric instances.

The second main goal of our work is to apply the framework of association schemes to \emph{obtain an unconditional lower bound against SDA} (and, a fortiori, against SDP). We consider the \emph{Approximate Graph Homomorphism} problem (AGH): Given two (undirected) graphs $\A$ and $\B$ such that $\A\to\B$ and an instance $\X$, distinguish between the cases $(i)$ $\X\to\A$ and $(ii)$ $\X\not\to\B$.\footnote{This is the \emph{decision} version of the problem. In the \emph{search} version, the goal is to find an explicit homomorphism $\X\to\B$ assuming that $\X\to\A$. The former version reduces to the latter, so our non-solvability result applies to both.} 
This problem is commonly studied in the context of \emph{Promise CSPs}~\cite{AGH17,BG21,BBKO21,KOWZ22}, and we shall thus denote it by $\PCSP(\A,\B)$.
Observe that $\PCSP(\A,\B)$ is well defined for any pair of digraphs (and, in fact, relational structures) $\A\to\B$.
If we let $\A=\K_n$ (the $n$-clique) and $\B=\K_{n'}$ where $n\leq n'$, AGH specialises to the \emph{Approximate Graph Colouring} problem (AGC): Distinguish whether a given graph is $n$-colourable or not even $n'$-colourable. The computational complexity of these problems is a long-standing open question. In contrast, the complexity of the non-approximate versions of AGC and AGH
(i.e., the cases $n=n'$ and $\A=\B$, respectively) was already classified by Karp~\cite{Karp72} and Hell--Ne\v{s}et\v{r}il~\cite{HellN90}, respectively.
In 1976, Garey and Johnson conjectured that AGC is always $\NP$-hard if $3\leq n$ (the case $n=2$ reduces to $2$-colouring and is thus tractable).
\begin{conj*}[\cite{GJ76}]
    Let $3\leq n\leq n'$ be integers. Then $\PCSP(\K_n,\K_{n'})$ is $\NP$-hard.
\end{conj*}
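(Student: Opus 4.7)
The plan is to attack the conjecture through the algebraic framework of Barto--Bul\'in--Krokhin--Opr\v{s}al~\cite{BBKO21}: the $\NP$-hardness of $\PCSP(\K_n,\K_{n'})$ reduces, via a log-space reduction, to showing that the polymorphism minion $\operatorname{Pol}(\K_n,\K_{n'})$---whose $k$-ary elements are the homomorphisms $\K_n^k\to\K_{n'}$, i.e., the proper $n'$-colourings of the categorical $k$-th power of the $n$-clique---admits no minion homomorphism to the minion of projections on any domain. The strategy would be to rule out such a minion homomorphism by extracting, from every polymorphism, a short list of ``distinguished'' coordinates that can be decoded into a classical reduction from a known hard problem.

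First, I would carry out a structural analysis of polymorphisms via the topological method of Krokhin--Opr\v{s}al--Wrochna--\v{Z}ivn\'y~\cite{KOWZ22}. For each arity $k$, the polymorphisms form the vertex set of the $\operatorname{Hom}$-complex $\operatorname{Hom}(\K_n^k,\K_{n'})$, equipped with the natural $S_k$-action permuting coordinates. Combining Lov\'asz-type chromatic lower bounds with Babson--Kozlov equivariant invariants, the aim is to prove that every $k$-ary polymorphism is \emph{essentially a junta}, that is, depends substantively on at most $C=C(n,n')$ coordinates, uniformly in $k$. Second, I would convert this junta structure into a gadget reduction from Gap Label Cover: each Label Cover variable is replaced by a long-code-style block of $\PCSP(\K_n,\K_{n'})$-constraints, and any $n'$-colouring of the resulting instance $\X(\varphi)$ induces local polymorphisms whose heavy coordinates, read off via the first step, decode into a labelling satisfying a constant fraction of the Label Cover constraints, contradicting the assumed gap.

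The main obstacle---and the reason the conjecture has resisted proof since 1976---lies squarely in the first step. The known topological invariants of $\operatorname{Hom}$-complexes of categorical powers of cliques become uninformative as $n'$ grows with respect to $n$: roughly beyond the threshold $n'\geq 2n-1$, the equivariant index collapses, and in the regime $n'\gg n$ there exist highly symmetric polymorphisms that respect the $\Aut(\K_n)=S_n$-action and carry no detectable $\mathbb{Z}_2$-index obstruction. A speculative way forward, suggested by the philosophy of the present paper, is to replace topological invariants by spectral or invariant-theoretic ones built on the association-scheme structure of $\operatorname{Pol}(\K_n,\K_{n'})$ under the $S_n$-action, exploiting the decomposition of the corresponding Bose--Mesner algebra to produce refined obstructions to minion homomorphisms to projections. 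Whether such invariants are strong enough to close the full gap---as opposed to merely ruling out convex-relaxation algorithms such as $\SDA$---is the central open difficulty that the present approach does not claim to resolve.
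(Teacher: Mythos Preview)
The statement you are attempting to prove is a \emph{conjecture} that the paper explicitly presents as open: immediately after stating it, the paper writes that ``they both remain wide open in their full generality.'' There is no proof in the paper to compare your proposal against. The paper's contribution is orthogonal---it establishes an \emph{unconditional lower bound} against a specific algorithmic model ($\SDA$), which is evidence consistent with the conjecture but not a proof of $\NP$-hardness.

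Your proposal is not a proof either, and you say so yourself in the final sentence. The genuine gap is precisely the one you identify: the first step---showing that every polymorphism of $(\K_n,\K_{n'})$ is essentially a junta uniformly in the arity---is not known and is, in fact, the heart of the problem. The topological methods of~\cite{KOWZ22} give hardness only in restricted ranges (roughly $n'\leq 2n-1$ and certain extensions), and your observation that the equivariant invariants collapse beyond that threshold is exactly why the conjecture is open. The speculative suggestion to replace topological obstructions by spectral ones drawn from association schemes is interesting but is a research direction, not an argument; nothing in the proposal indicates how Bose--Mesner decompositions would yield the required minion-homomorphism obstruction, and the paper's own use of schemes goes in the opposite direction (building integrality gaps, not hardness reductions). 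In short: the plan is a reasonable outline of where the difficulty lies, but it does not contain a proof, and neither does the paper.
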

\noindent More recently, Brakensiek and Guruswami
proposed the stronger conjecture that even AGH may always be $\NP$-hard except in trivial cases (if either $\A$ or $\B$ has a loop or is bipartite, the problem is trivial or reduces to $2$-colouring).
\begin{conj*}[\cite{BG21}]
    Let $\A,\B$ be non-bipartite loopless undirected graphs such that $\A\to\B$. Then $\PCSP(\A,\B)$ is $\NP$-hard.
\end{conj*}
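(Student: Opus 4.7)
The plan is to attack the conjecture through the polymorphism-minion framework of Barto--Bulín--Krokhin--Opršal, sharpened with topological obstructions developed for the approximate graph colouring problem. Recall that $\PCSP(\A,\B)$ is $\NP$-hard whenever the minion $\operatorname{Pol}(\A,\B)$ of polymorphisms $\A^n\to\B$ admits no nontrivial minor-preserving map to the minion of projections. So the goal reduces to showing that, for every non-bipartite loopless pair $\A\to\B$, every polymorphism of $(\A,\B)$ is ``close to a dictator'' in a sense strong enough to rule out all nontrivial identities.

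The first step is normalisation: using standard PCSP-preserving operations, one may replace $\A$ with its core and restrict $\B$ so that the fixed homomorphism $\A\to\B$ factors through a shortest common odd cycle. This reduces the problem to a canonical situation in which both graphs carry a well-defined odd-girth invariant and the homomorphism $\A\to\B$ is compatible with it. The next step is the topological heart of the argument: associate to each graph $\GG$ the Hom-complex $\operatorname{Hom}(K_2,\GG)$ (or, equivalently, the box complex), whose $\Z_2$-equivariant homotopy type detects non-bipartiteness through its connectivity. A polymorphism $f:\A^n\to\B$ then induces an equivariant continuous map of the corresponding complexes, and one would aim to prove an equivariant multiplicative structure forcing $f$ to concentrate on few coordinates. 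Concretely, one would show that the odd-girth gap between $\A$ and $\B$ yields a Borsuk--Ulam-style invariant which any polymorphism must preserve, and then extract from this the existence of a bounded set of ``influential'' coordinates for $f$.

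With such a structural theorem on polymorphisms in hand, $\NP$-hardness follows from a Label Cover reduction of the kind used for $\PCSP(\K_3,\K_4)$: from an instance of Label Cover over alphabet $[n]$, build an instance $\X$ of $\PCSP(\A,\B)$ whose vertex gadgets are copies of $\A^n$, glued along constraint edges by projections implementing the Label Cover constraints; a solution $\X\to\B$ then induces, via the ``dictator'' extraction from polymorphisms, a good Label Cover labelling, which is well known to be $\NP$-hard to produce.

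The principal obstacle is precisely the topological step. When $\B$ is strictly larger than $\A$ and lacks the clique structure exploited in previous hardness proofs, existing Borsuk--Ulam-type arguments do not directly apply: the equivariant connectivity of $\operatorname{Hom}(K_2,\B)$ can be much smaller than that of $\operatorname{Hom}(K_2,\A)$, and so the required dimension bound on polymorphisms becomes delicate. Overcoming this would require either a finer invariant of the pair $(\A,\B)$---perhaps an equivariant cohomology class depending on the homomorphism $\A\to\B$ itself---or a new combinatorial replacement for topology that exploits the association-scheme structure of $\A$ developed earlier in this paper. This is exactly the point where all currently known arguments break down, which is why only partial cases (for specific families such as cliques with a large gap) are presently resolved.
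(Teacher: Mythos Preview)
The statement you are attempting to prove is presented in the paper as an \emph{open conjecture} (the Brakensiek--Guruswami conjecture), not as a theorem. The paper does not contain a proof of it; indeed, the paper's main result (Theorem~\ref{thm_main_SDA_no_solves_approximate_homomorphism}) is a strictly weaker, unconditional statement: for every such pair $\A,\B$, the specific relaxation $\SDA$ fails to solve $\PCSP(\A,\B)$. That is an integrality-gap lower bound against one algorithmic model, not an $\NP$-hardness result.

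Your proposal is not a proof but a research programme, and you acknowledge as much in the final paragraph: the ``topological step'' that would yield the needed dictator-like structure on $\operatorname{Pol}(\A,\B)$ is precisely the missing ingredient, and you correctly note that current Borsuk--Ulam-type arguments do not go through when $\B$ is a general non-bipartite graph rather than a clique. Without that step there is no hardness reduction, and the Label Cover gadget construction you sketch has nothing to extract from a $\B$-colouring of the instance. In short, the gap you identify is genuine and is the reason the conjecture remains open; the paper neither closes it nor claims to.
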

Among the several papers making progress on the two conjectures above, we mention~\cite{KOWZ22,KhannaLS00,Guruswami04:sidma,BrakensiekG16,BBKO21,Huang13,Khot01,Dinur09:sicomp,Braverman21:focs,BGS23}. However, they both remain wide open in their full generality.
Given the apparent ``hardness of proving hardness'' surrounding these problems, significant efforts have been directed towards showing inapplicability of specific algorithmic models, following an established line of work on lower bounds against relaxations, e.g.,~\cite{Arora06:toc,Chan15:jacm,Ghosh18:toc,Kothari22:sicomp,Lee15:stoc,Chan16:jacm-lp,Tulsiani09:stoc,Berkholz17:soda}.
Non-solvability of AGC via sublinear levels of local consistency and via linear Diophantine equations was proved in~\cite{Atserias22:soda} and~\cite{cz23soda:aip}, respectively.
It was shown in~\cite{Karger98:jacm} that the technique of vector colouring, based on a semidefinite program akin to Lov\'asz's orthonormal representation~\cite{lovasz1979shannon}, is inapplicable to solving AGC.
It follows from~\cite{GS20:icalp,Khot18:focs-pseudorandom} that polynomial levels of the Sum-of-Squares hierarchy (and, in particular, SDP) are also not powerful enough to solve AGC.
Very recently,~\cite{cz23stoc:ba} improved on the result in~\cite{cz23soda:aip} by proving non-solvability of AGC via constant levels of the 
BA hierarchy, obtained by applying the \emph{lift-and-project} technique to the BA relaxation of~\cite{bgwz20}.
By leveraging the framework of association schemes, we establish that AGH is not solved by SDA.
\begin{thm}
\label{thm_main_SDA_no_solves_approximate_homomorphism}
Let $\A,\B$ be non-bipartite loopless undirected graphs such that $\A\to\B$. Then
$\SDA$ does not solve $\PCSP(\A,\B)$.
\end{thm}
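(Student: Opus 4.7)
The strategy is to construct, for each pair $(\A,\B)$ satisfying the hypotheses, a single ``fooling instance'' $\X$ on which $\SDA(\X,\A)=\YES$ yet $\X\not\to\B$. Using the natural monotonicity of SDA under composition of homomorphisms in its second argument, I first reduce to the case where the target is an odd cycle $C_g$: choosing $g$ to be the odd girth of $\A$, a homomorphism $C_g\to\A$ transports a YES on $(\X,C_g)$ to a YES on $(\X,\A)$, while $\X\not\to\B$ is preserved. It thus suffices to handle the odd-cycle target, which is also the technically hardest case because odd cycles are the ``smallest'' non-bipartite graphs under the homomorphism order.

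For the instance $\X$, the plan is to use a highly symmetric graph drawn from the Kneser family: a (suitable variant of) the Kneser graph $K(n,k)$, with parameters chosen so that $\X$ contains an odd cycle of length strictly less than the odd girth of $\B$, thereby guaranteeing $\X\not\to\B$ by the odd-girth obstruction. The action of $\Aut(\X)=S_n$ partitions $\Vset(\X)^2$ into orbits indexed by intersection sizes of $k$-subsets, so that $\X$ carries the Johnson association scheme. Together with the dihedral scheme on $C_g$ coming from $\Aut(C_g)=D_g$, the product scheme controls both halves of the SDA output.

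For the SDP half of SDA, I would restrict attention to matrices in the commutant of the $\Aut(\X)\times\Aut(C_g)$-action on $pn\times pn$ matrices. The spectral framework developed earlier in the paper identifies this subspace as low-dimensional and spanned by tensor products of associates, so that the PSD and linear constraints of SDP collapse to a small explicit system indexed by common eigenspaces, which can be verified using the explicit character tables of the Johnson and dihedral schemes. For the affine half, I would exploit the dihedral structure of $C_g$, as flagged in the introduction, to attach to each associate of its scheme an integral matrix of small support, and then tensor with symmetric weights on $\X$ to assemble an integer matrix $M'$ solving the underlying Diophantine system.

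The main technical obstacle is the support-refinement compatibility between $M$ and $M'$ that is part of the definition of SDA. The PSD solution produced by invariant theory naturally places mass on every orbit of homomorphism-admissible pairs, whereas the integer solution is forced to be sparser. Reconciling these by a careful choice of the Johnson parameters $(n,k)$ and of the integral coefficients on the dihedral side---so that $M'$ vanishes exactly where the PSD relaxation permits---is where the character-theoretic machinery earns its keep, and it is precisely this fine control that the combined spectral framework of the two schemes is designed to deliver.
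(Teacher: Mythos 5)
Your overall architecture matches the paper's: reduce to an odd-cycle target $\C_n$ via the odd girth of $\A$ and monotonicity of $\SDA$, take a Kneser graph as the fooling instance, handle the SDP half through the Johnson and cycle (dihedral) schemes and their character tables, build the integral solution from support-constrained integer matrices attached to the cycle's orbitals, and enforce the refinement condition by matching supports orbital-by-orbital. All of that is sound and is essentially what the paper does.

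However, there is a genuine gap in the step certifying $\X\not\to\B$. You propose to choose the Kneser parameters so that $\X$ contains an odd cycle shorter than the odd girth of $\B$. This obstruction is vacuous whenever $\B$ has odd girth $3$ --- in particular whenever $\B$ contains a triangle, e.g.\ $\B=\K_{n'}$ with $n'\geq 3$, which is exactly the approximate graph colouring case that motivates the theorem --- since no graph contains an odd cycle of length less than $3$. The paper instead uses the \emph{chromatic number} obstruction: choose $n'$ with $\B\to\K_{n'}$ (say $n'=\chi(\B)$) and invoke Lov\'asz's theorem $\chi(\GG_{s,t})=s-2t+2$, picking $s,t$ so that $s-2t+2>n'$, which gives $\GG_{s,t}\not\to\K_{n'}$ and hence $\GG_{s,t}\not\to\B$. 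Moreover, this is where the quantitative interplay with the SDP half lives: acceptance forces $t/s$ to be tied to the Perron--Frobenius gap of the odd cycle (all nontrivial eigenvalues of $\adj(\C_n)$ have absolute value strictly below $2$, giving a rational $\delta<2$ with $s=\tfrac{2t}{\delta}+t$), and it is precisely because $\delta<2$ that one can let $t\to\infty$ and make $s-2t+2=\tfrac{t(2-\delta)}{\delta}+2$ exceed any fixed $n'$ while the orbital-matrix construction (nonzero columns interpolated by degree-one polynomials, analysed via the Johnson character table) still satisfies the positivity conditions. Your proposal flags ``careful choice of the Johnson parameters'' but ties it to the wrong obstruction, so as written the construction cannot conclude $\X\not\to\B$ for the main cases of interest; replacing the odd-girth argument by the chromatic-number argument is necessary and is the missing ingredient.
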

The improvement on the state of the art is twofold:  Theorem~\ref{thm_main_SDA_no_solves_approximate_homomorphism} yields $(i)$ the first non-solvability result for the whole class of problems AGH, as opposed to the subclass AGC, and $(ii)$  
the first lower bound against the combined ``SDP + linear equations'' model (which is strictly stronger than both models individually).
Via Raghavendra's framework~\cite{Raghavendra08:everycsp}, the (SDP part of the) integrality gap in Theorem~\ref{thm_main_SDA_no_solves_approximate_homomorphism} directly yields a conditional hardness-of-approximation result for AGH: Assuming Khot's Unique Games Conjecture~\cite{Khot02stoc} and $\operatorname{P}\neq\operatorname{NP}$, AGH is not solved by any polynomial-time \emph{robust} algorithm.

\paragraph{Related work on association schemes}
The Johnson scheme and other association schemes such as the \emph{Hamming scheme}
have appeared in the analysis of the performance ratio of Goemans--Williamson Max-Cut algorithm~\cite{GW95} based on semidefinite programming, see~\cite{Goemans99:computing,Karloff99,alon2000bipartite}. 
In~\cite{MekaPW15}, certain spectral properties of the Johnson scheme were used to obtain lower bounds against the Positivestellensatz
proof system (and, thus, against the Sum-of-Squares hierarchy) applied to the planted clique problem, see also~\cite{DeshpandeM15}.

\paragraph{Structure of the paper}
In Section~\ref{sec_preliminaries}, we formally define the algorithms used in this work and list some useful preliminary observations about them. The analysis of how the algorithms process the symmetry of the input structures begins in Section~\ref{sec_automorphisms_and_orbitals}, which provides a basis for the space of symmetry-invariant matrices output by the algorithms in terms of the orbitals of the input structures. Leveraging the theory of association schemes, Section~\ref{sec_association_schemes} describes an alternative basis, which gives easier access to the spectral properties of the matrices involved in the relaxations. Sections~\ref{sec_johnson_scheme} and~\ref{sec_cycle_scheme} investigate two specific association schemes---the Johnson scheme and the cycle scheme, respectively---whose properties are then used in Section~\ref{subsec_lower_bound_on_SDA} to conclude the proof of Theorem~\ref{thm_main_SDA_no_solves_approximate_homomorphism}. Finally, in Section~\ref{section_incomparability}, we compare the SDA algorithm with the BA hierarchy from~\cite{BG19,cz23stoc:ba}, and we show that our lower bound is incomparable with the one in~\cite{cz23stoc:ba}, as SDA---and, in fact, even a weaker version of SDP described in Subsection~\ref{subsec_SDPeps}---is not less powerful than the BA hierarchy in a technical sense (Subsection~\ref{subsec_SDPeps_vs_BAk}).

\paragraph{Notation}
We let $\N$ be the set of positive integers, while $\N_0=\N\cup\{0\}$. For $t\in\N$, we let $[t]=\{1,\dots,t\}$.
We view vectors in $\R^t$ as column vectors, but sometimes write them as tuples for typographical convenience.
We denote by $I_t$ and $J_t$ the $t\times t$ identity and all-one matrices, by $O_{t,t'}$ the $t\times t'$ all-zero matrix, and by $\bone_t$ and $\bzero_t$ the all-one and all-zero vectors of length $t$. Indices shall sometimes be omitted when clear from the context. We denote by $\be_i$ the $i$-th standard unit vector of length $t$ (which shall be clear from the context); i.e., the vector in $\R^t$ all of whose entries are $0$ except the $i$-th entry that is $1$. 
Given a field $\mathbb{F}$ and a set $\mathscr{V}$ of vectors in $\mathbb{F}^t$, $\spann_{\mathbb{F}}(\mathscr{V})$ is the set of linear combinations over $\mathbb{F}$ of the vectors in $\mathscr{V}$. We write $\spann(\mathscr{V})$ for $\spann_{\mathbb{R}}(\mathscr{V})$.

A matrix is \emph{Boolean} if its entries are in $\{0,1\}$.
Given a real matrix $M$, we write $M\geq 0$ if $M$ is \emph{entrywise nonnegative}, and we write $M\succcurlyeq 0$ if $M$ is \emph{positive semidefinite} (i.e., if $M$ is symmetric and has a nonnegative spectrum).
For two matrices $M=(m_{ij})$ and $M'$ of size $m\times n$ and $m'\times n'$, respectively, we let their \emph{Kronecker product} be the $mm'\times nn'$ block-matrix $M\otimes M'$ whose $(i,j)$-th block, for $i\in[m]$ and $j\in [n]$, is the matrix $m_{ij}M'$. If $M$ and $M'$ have equal size, we let 
$M\circ M'$ denote their \emph{Schur product} (i.e., their entrywise product, also known as \emph{Hadamard product}).
We shall often use the fact that $(M\otimes M')(N\otimes N')=MN\otimes M'N'$, provided that the products are well defined (see~\cite{horn1994topics}). 
The \emph{support} of $M$, denoted by
$\supp(M)$, is the set of indices of nonzero entries of $M$; for two matrices $M,M'$ of equal size, we write $M\triangleleft M'$ to denote that $\supp(M)\subseteq\supp(M')$.
Given a digraph $\X$, we let $\adj(\X)$ and $\Aut(\X)$ denote the \emph{adjacency matrix} and the \emph{automorphism group} of $\X$, respectively. We view undirected graphs as digraphs, by turning each undirected edge $\{x,y\}$ into a pair of directed edges $(x,y)$ and $(y,x)$.

\section{The algorithms}
\label{sec_preliminaries}
In the literature on CSPs,
it is customary to define SDP in terms of systems of vectors satisfying certain orthogonality requirements~\cite{Barto16:sicomp,tz18,Raghavendra08:everycsp,bgs_robust23stoc,cz23soda:minions}. 
We now present this standard \emph{vector formulation} of $\SDP$ and define the augmented algorithm $\SDA$ along the same lines. Then, we describe an alternative formulation of both relaxations, which is more suitable for our purposes, and establish the equivalence between the two.
To enhance the paper's readability and help the reader reach the technical core more quickly, we defer 
the proofs of the results in this 
section to Appendix~\ref{sec:app}.

Let $\X$ and $\A$ be two digraphs, and label their vertex sets as $\Vset(\X)=[p]$ and $\Vset(\A)=[n]$ for some $p,n\in\N$.
We introduce a vector variable $\blambda_{x,a}$ taking values in $\R^{pn}$ for all vertices $x\in \Vset(\X)$, $a\in \Vset(\A)$, and we set $\blambda_{x,\A}=\sum_{a\in \Vset(\A)}\blambda_{x,a}$.
Consider the system
\labeltext{$\SDP_1$}{SDP1}
\labeltext{$\SDP_2$}{SDP2}
\labeltext{$\SDP_3$}{SDP3}
\labeltext{$\SDP_4$}{SDP4}
\begin{align}
\label{eqns_SDP}
\tag{SDP}
\boxed{
\begin{array}{llllll}
\mbox{($\SDP_1$)}\qquad &\blambda_{x,a}\cdot\blambda_{y,b}&\geq& 0 &\hspace*{.1cm} & \forall x,y\in \Vset(\X),\;  a,b\in \Vset(\A)\\
\mbox{($\SDP_2$)}\qquad &\blambda_{x,a}\cdot\blambda_{x,b}&=& 0 &\hspace*{.1cm} & \forall x\in \Vset(\X),\;  a\neq b\in \Vset(\A)\\
\mbox{($\SDP_3$)}\qquad &\blambda_{x,a}\cdot\blambda_{y,b}&=& 0 & \hspace*{.1cm} &\forall (x,y)\in \Eset(\X),\;  (a,b)\in \Vset(\A)^2\setminus\Eset(\A)\\
\mbox{($\SDP_4$)}\qquad &\blambda_{x,\A}\cdot\blambda_{y,\A}&=&1 & \hspace*{.1cm} &\forall x,y\in \Vset(\X).
\end{array}
}
\end{align}

\noindent Note that ($\SDP_4$) forces all vectors $\blambda_{x,\A}$ to be equal.
We say that SDP applied to $\X,\A$ \emph{accepts}, and we write $\SDP(\X,\A)=\YES$, if the system~\eqref{eqns_SDP} has a solution.

In order to augment $\SDP$ with the linear Diophantine part,
we introduce additional
variables $\mu_{x,a}$ taking values in $\Z$ for all vertices $x\in \Vset(\X),a\in \Vset(\A)$, and variables
$\mu_{\bx,\ba}$ taking values in $\Z$ for all directed edges $\bx\in E(\X),\ba\in E(\A)$, and we consider the equations
\labeltext{$\AIP_1$}{AIP1}
\labeltext{$\AIP_2$}{AIP2}
\begin{align}
\label{eqns_AIP}
\tag{AIP}
\boxed{
\begin{array}{llllll}
\mbox{($\AIP_1$)}\qquad &\displaystyle\sum_{a\in \Vset(\A)} \mu_{x,a}\ &=& 1 &  &\forall x\in \Vset(\X)\\
\mbox{($\AIP_2$)}\qquad &\displaystyle\sum_{\ba\in E(\A),\; a_i=a} \mu_{\bx,\ba}\ &=& \mu_{x_i,a} & &\forall \bx\in E(\X),\; i\in \{1,2\},\; a\in \Vset(\A).
\end{array}
}
\end{align}
The $\SDA$ relaxation 
consists in $(i)$ searching a solution to~\eqref{eqns_SDP}, $(ii)$ discarding assignments having zero probability, and then $(iii)$ searching a solution to~\eqref{eqns_AIP}.\footnote{The name ``AIP'' stands for ``affine integer programming'', see~\cite{BBKO21}.} Formally, 
we say that $\SDA$ applied to $\X$ and $\A$ \emph{accepts}, and we write $\SDA(\X,\A)=\YES$, if the system~\eqref{eqns_SDP} admits a solution $\blambda$ and the system~\eqref{eqns_AIP} admits a solution $\mu$ such that the following \emph{refinement condition} holds:
\begin{align}
\label{eqn_refinement_condition_SDA}
\tag{ref}
\boxed{
\begin{array}{lllllll}
\|\blambda_{x,a}\|=0\;\;&\Rightarrow&\;\;\mu_{x,a}=0&\;\;\forall x\in \Vset(\X),\,a\in \Vset(\A)\\
\blambda_{x_1,a_1}\cdot\blambda_{x_2,a_2}=0\;\;&\Rightarrow&\;\;\mu_{\bx,\ba}=0&\;\;\forall \bx\in \Eset(\X),\, \ba\in \Eset(\A)
\end{array}
}
\end{align}
(where $\|\cdot\|$ is the Euclidean vector norm).

Given two digraphs $\A,\B$ such that $\A\to\B$, we say that SDP (resp., SDA) \emph{solves} $\PCSP(\A,\B)$ if, for any digraph $\X$, $\SDP(\X,\A)=\YES$ (resp., $\SDA(\X,\A)=\YES$) implies $\X\to\B$. It follows from the definitions of the algorithms that $\X\to\A$ always implies $\SDP(\X,\A)=\SDA(\X,\A)=\YES$.

It shall be convenient for our purposes to use a slightly modified, \textit{matrix formulation} of the relaxations, which allows for a more intuitive linear-algebraic view of the algorithms' behaviour. The equivalence of the two formulations is given in Proposition~\ref{prop_matrix_formulation_SDP_SDA} below.
\begin{defn}
A real $pn\times pn$ matrix $M$ is a \emph{relaxation matrix} for $\X,\A$ if $M$ satisfies the following requirements:
\begin{enumerate}
    \item[($r_1$)\labeltext{$r_1$}{r1}]
    $(\be_x\otimes I_n)^T M (\be_x\otimes I_n)$ is a diagonal matrix for each $x\in \Vset(\X)$;
    \item[($r_2$)\labeltext{$r_2$}{r2}]
    $(\be_x\otimes\be_a)^T M (\be_y\otimes\be_b)=0$
    whenever $(x,y)\in\Eset(\X)$ and $(a,b)\in \Vset(\A)^2\setminus\Eset(\A)$;
    \item[($r_3$)\labeltext{$r_3$}{r3}] $M(\be_x\otimes\bone_n)=M(\be_{y}\otimes\bone_n)$ for each $(x,y)\in \Vset(\X)^2$;
    \item[($r_4$)\labeltext{$r_4$}{r4}] $M^T(\be_x\otimes\bone_n)=M^T(\be_{y}\otimes\bone_n)$ for each $(x,y)\in \Vset(\X)^2$;
    \item[($r_5$)\labeltext{$r_5$}{r5}] $\bone_{pn}^TM\bone_{pn}=p^2$.
\end{enumerate}
Given a relaxation matrix $M$, we say that $M$ is an \emph{$\SDP$-matrix} for $\X,\A$ if $M\succcurlyeq 0$ and $M\geq 0$,
and we say that $M$ is an \emph{$\AIP$-matrix} for $\X,\A$ if all of its entries are integral.
\end{defn}
Notice that the definition of an $\SDP$-matrix captures the description of the matrix $M_f$ considered at the beginning of the Introduction to illustrate SDP. 
Indeed, viewing $M$ as a block matrix whose $n\times n$ blocks are indexed by pairs in $\Vset(\X)^2$, (\ref{r1}) states that diagonal blocks are diagonal matrices, (\ref{r2}) states that the supports of blocks corresponding to edges of $\X$ are included in $\Eset(\A)$, (\ref{r3}) and (\ref{r4}) state that the row-sum (resp. column-sum) vectors of blocks aligned horizontally (resp. vertically) are equal, and (\ref{r5}) is a normalisation condition.
Interestingly, the very similar definition of an $\AIP$-matrix is able to capture the linear Diophantine part of SDA.

For a square matrix $A$, the set of vectors $\bv$ for which the Rayleigh quotient $\bv^TA\bv$ is zero clearly includes the null space of $A$. The inclusion is in general strict, even assuming $A$ to be symmetric. However, the two sets coincide when $A$ is positive semidefinite (see~\cite[Obs.~7.1.6]{Horn2012matrix}). The next result
is essentially a specialisation of this fact to the conditions (\ref{r1})--(\ref{r5}) defining relaxation matrices.
Given a real $pn\times pn$ matrix $M$, consider the condition
\begin{enumerate}
    \item[($r_6$)\labeltext{$r_6$}{r6}]
    $(\be_x\otimes\bone_n)^TM(\be_y\otimes\bone_n)= 1$ for each $(x,y)\in \Vset(\X)^2$.
\end{enumerate}
\begin{prop}
\label{prop_equivalence_some_conditions_SDP}
Let $M$ be a real $pn\times pn$ matrix. Then
\begin{itemize}
    \item[$(i)$] the conditions \emph{(\ref{r3})}, \emph{(\ref{r4})}, and \emph{(\ref{r5})} imply the condition \emph{(\ref{r6})};
    \item[$(ii)$] if $M\succcurlyeq 0$, the conditions \emph{(\ref{r3})}, \emph{(\ref{r4})}, and \emph{(\ref{r5})} are equivalent to the condition \emph{(\ref{r6})}.
\end{itemize}
\end{prop}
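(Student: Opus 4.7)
}

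The key object to track is the block-sum function $s(x,y):=(\be_x\otimes\bone_n)^TM(\be_y\otimes\bone_n)$, which is simply the sum of the entries in the $(x,y)$-th $n\times n$ block of $M$. Condition (\ref{r6}) asserts that $s\equiv 1$. For part $(i)$, I will show that (\ref{r3}) forces $s(x,y)$ to be independent of $y$, that (\ref{r4}) forces it to be independent of $x$, and then use (\ref{r5}) to fix the resulting constant. Indeed, (\ref{r3}) implies $s(x,y)=s(x,y')$ for all $x,y,y'$, and (\ref{r4}) implies $s(x,y)=s(x',y)$ for all $x,x',y$ (since $(\be_x\otimes\bone_n)^TM=(M^T(\be_x\otimes\bone_n))^T$). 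Hence $s$ is a constant $c$, and summing (\ref{r5}) as $\bone_{pn}^TM\bone_{pn}=\sum_{x,y}s(x,y)=p^2c$ yields $c=1$, giving (\ref{r6}).

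For part $(ii)$, it remains to prove the converse direction under the assumption $M\succcurlyeq 0$. Condition (\ref{r5}) is immediate by summing (\ref{r6}) over $(x,y)\in V(\X)^2$. For (\ref{r3}), fix $x,y\in V(\X)$ and consider the test vector $\bv=(\be_x-\be_y)\otimes\bone_n$. Expanding bilinearly using the block-sum notation gives
\[
\bv^TM\bv=s(x,x)-s(x,y)-s(y,x)+s(y,y),
\]
which equals $0$ by (\ref{r6}). This is the step where the PSD hypothesis enters: since $M\succcurlyeq 0$, the null space of $M$ coincides with the set of vectors whose Rayleigh quotient vanishes (the fact flagged in the paragraph preceding the proposition), so $M\bv=0$, i.e.\ $M(\be_x\otimes\bone_n)=M(\be_y\otimes\bone_n)$, which is exactly (\ref{r3}). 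Condition (\ref{r4}) then follows at once from (\ref{r3}) because $M\succcurlyeq 0$ forces $M$ to be symmetric, so $M^T(\be_x\otimes\bone_n)=M(\be_x\otimes\bone_n)$.

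The only non-routine point is the implication $\bv^TM\bv=0\Rightarrow M\bv=0$ in part $(ii)$, which is precisely the property of PSD matrices cited in the text; everything else is a direct bookkeeping argument on block-sums together with the bilinearity of the Kronecker product. I therefore expect no genuine obstacle beyond making sure the conditions (\ref{r3}) and (\ref{r4}) are handled symmetrically, which PSD symmetry collapses into a single argument.
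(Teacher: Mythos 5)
Your proposal is correct and follows essentially the same route as the paper: part $(i)$ by constancy of the block sums plus the normalisation (\ref{r5}), and part $(ii)$ by evaluating the quadratic form at $(\be_x-\be_y)\otimes\bone_n$ and using (\ref{r6}). The only cosmetic difference is that you invoke the cited fact that a PSD matrix's null space equals the set of vectors with vanishing Rayleigh quotient, whereas the paper re-derives this on the spot via a Cholesky decomposition $M=L^TL$; the content is identical.
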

By using Proposition~\ref{prop_equivalence_some_conditions_SDP}, we can
convert the vector formulation of the algorithms SDP and SDA into an equivalent formulation in terms of relaxation matrices. 
\begin{prop}
\label{prop_matrix_formulation_SDP_SDA}
    Let $\X,\A$ be digraphs. Then
    \begin{itemize}
        \item[$(i)$] $\SDP(\X,\A)=\YES$ if and only if there exists an $\SDP$-matrix for $\X,\A$;
        \item[$(ii)$] if $\X$ is loopless, $\SDA(\X,\A)=\YES$ if and only if there exist an $\SDP$-matrix $M$ and an $\AIP$-matrix $N$ for $\X,\A$ such that $N\circ ((I_p+\adj(\X))\otimes J_n)\;\;\triangleleft\;\; M$.
    \end{itemize}
\end{prop}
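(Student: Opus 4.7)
The plan is to use the Gram construction as the bridge between the two formulations: given vectors $\blambda_{x,a}\in\R^{pn}$, form the matrix $M$ with entries $M_{(x,a),(y,b)}=\blambda_{x,a}\cdot\blambda_{y,b}$; conversely, given a positive semidefinite matrix $M$, factor $M=VV^T$ and take $\blambda_{x,a}$ to be the row of $V$ indexed by $(x,a)$. Under this correspondence, each of the SDP (in)equalities is meant to match a matrix condition, with Proposition~\ref{prop_equivalence_some_conditions_SDP} handling the passage between the normalisation $(\SDP_4)$ and the triple (\ref{r3})--(\ref{r5}).

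For part $(i)$, in the forward direction ($\blambda\to M$), the Gram matrix is PSD, $(\SDP_1)$ gives $M\geq 0$, and $(\SDP_2)$, $(\SDP_3)$ are entrywise the statements (\ref{r1}) and (\ref{r2}). For the normalisation, I will apply the polarisation identity to $(\SDP_4)$ to obtain $\|\blambda_{x,\A}-\blambda_{y,\A}\|^2=0$, hence $\blambda_{x,\A}=\blambda_{y,\A}$ for all $x,y$; then (\ref{r3}) and (\ref{r4}) follow because $(M(\be_x\otimes\bone_n))_{(z,c)}=\blambda_{z,c}\cdot\blambda_{x,\A}$ is independent of $x$, and (\ref{r5}) is a direct summation. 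The reverse direction reads the same identifications backwards, invoking Proposition~\ref{prop_equivalence_some_conditions_SDP}$(ii)$ to recover $(\SDP_4)$ from (\ref{r3})--(\ref{r5}) together with PSDness, since the equivalent condition (\ref{r6}) reads exactly $\blambda_{x,\A}\cdot\blambda_{y,\A}=1$.

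For part $(ii)$, given $\mu$ satisfying \eqref{eqns_AIP} and the refinement~\eqref{eqn_refinement_condition_SDA}, I will build $N$ block by block: diagonal blocks are diagonal with entries $\mu_{x,a}$; blocks corresponding to $(x,y)\in\Eset(\X)$ carry $\mu_{(x,y),(a,b)}$, extended by zero for $(a,b)\notin\Eset(\A)$; and the remaining off-diagonal blocks are filled with the product $\mu_{x,a}\mu_{y,b}$. Integrality is then immediate, (\ref{r1}) and (\ref{r2}) are built in, and (\ref{r3}), (\ref{r4}) reduce to $(\AIP_2)$ on edge blocks, to $(\AIP_1)$ on product blocks, and hold trivially on diagonal blocks; (\ref{r5}) then follows from Proposition~\ref{prop_equivalence_some_conditions_SDP}$(i)$. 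The support condition $N\circ((I_p+\adj(\X))\otimes J_n)\triangleleft M$, restricted to its only nontrivial diagonal and edge pieces, is precisely the refinement~\eqref{eqn_refinement_condition_SDA}. For the converse, I will extract $\mu_{x,a}=N_{(x,a),(x,a)}$ and $\mu_{\bx,\ba}=N_{(x,a),(y,b)}$ from the diagonal and edge blocks of $N$, then read off $(\AIP_1)$ from (\ref{r1}) combined with (\ref{r6}) and $(\AIP_2)$ from (\ref{r2})--(\ref{r4}).

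The main obstacle will be choosing the filling for the non-edge off-diagonal blocks of $N$: it must be integer-valued and simultaneously satisfy the row-sum, column-sum, and normalisation conditions, while being left uncontrolled by the support of $M$. The product $\mu_{x,a}\mu_{y,b}$ works because $(\AIP_1)$ causes $\sum_b\mu_{x,a}\mu_{y,b}=\mu_{x,a}$, matching the row-sum value forced on the diagonal block at $(x,x)$, and similarly for columns. The loopless hypothesis on $\X$ ensures that the three cases in the definition of $N$ are disjoint; a loop at $x$ would overlay the diagonal and the edge specifications at the block $(x,x)$, forcing the otherwise unavailable consistency condition $\mu_{(x,x),(a,a)}=\mu_{x,a}$, which is why that case is excluded from the statement.
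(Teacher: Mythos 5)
Your proposal is correct and follows essentially the same route as the paper's proof: the Gram/Cholesky correspondence between vector solutions and $\SDP$-matrices for part $(i)$, and for part $(ii)$ the identical blockwise construction of $N$ (diagonal blocks from $\mu_{x,a}$, edge blocks from $\mu_{\bx,\ba}$, product blocks $\mu_{x,a}\mu_{y,b}$ elsewhere), with the support condition matching the refinement condition~\eqref{eqn_refinement_condition_SDA}. One small slip: (\ref{r5}) for $N$ does not follow from Proposition~\ref{prop_equivalence_some_conditions_SDP}$(i)$, which goes in the opposite direction; it follows by direct summation, since (\ref{AIP1}) makes every block of $N$ sum to $1$, exactly as in the paper's computation.
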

In particular, if $M$ is an $\SDP$-matrix for $\X,\A$, the corresponding vector formulation involves the vectors $\blambda_{x,a}$ consisting of the columns of a $pn\times pn$ matrix $L$ such that $M=L^TL$ is a \emph{Cholesky decomposition} of $M$. Since $M\succcurlyeq 0$, such a decomposition always exists.
Moreover, the requirement $N\circ ((I_p+\adj(\X))\otimes J_n)\triangleleft M$ in part $(ii)$ of Proposition~\ref{prop_matrix_formulation_SDP_SDA} captures the refinement condition~\eqref{eqn_refinement_condition_SDA} of the SDA algorithm.

In the remaining part of this section, we give a useful result on the behaviour of the relaxations defined above with respect to digraph homomorphisms, which can be easily derived by looking at the corresponding relaxation matrices.
Given two finite sets $R$ and $S$ and a function $f:R\to S$, we let $Q_f$ be the $\lvert R\rvert\times \lvert S\rvert$ matrix whose $(r,s)$-th entry is $1$ if $f(r)=s$, $0$ otherwise. 
The next lemma, whose trivial proof is omitted, lists some useful properties of $Q_f$.
\begin{lem}
\label{lem_basic_Q_f}
    Let $R$, $S$, $T$ be finite sets, and let $f:R\to S$, $g:S\to T$ be functions. Then
    \begin{itemize}
        \item $Q_f^T\be_r=\be_{f(r)}$ for each $r\in R$;
        \item $Q_f\be_s=\sum_{r\in f^{-1}(s)}\be_r$ for each $s\in S$;
        \item $Q_f\bone_{\lvert S\rvert}=\bone_{\lvert R\rvert}$;
        \item 
        $Q_fQ_g=Q_{g\circ f}$;
        \item 
        if $f$ is bijective, $Q_f$ is invertible and $Q_f^{-1}=Q_f^T=Q_{f^{-1}}$.
    \end{itemize}
\end{lem}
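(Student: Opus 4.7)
The plan is to verify each of the five bullets directly from the definition of $Q_f$, whose $(r,s)$-entry equals $1$ when $f(r)=s$ and $0$ otherwise. Every step reduces to a one-line index computation, so there is no substantive obstacle; the only care needed is to present the five items without needless repetition and to flag where bijectivity is genuinely used in the last bullet.

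First I would dispatch the first three items together by reading off rows and columns of $Q_f$. Because $f$ is a function, the $r$-th row of $Q_f$ has exactly one nonzero entry, located in column $f(r)$. This immediately gives $Q_f^T\be_r=\be_{f(r)}$ and, by summing the entries of each row, $Q_f\bone_{|S|}=\bone_{|R|}$. Dually, the $s$-th column of $Q_f$ is the $\{0,1\}$-indicator vector of $f^{-1}(s)\subseteq R$, which yields $Q_f\be_s=\sum_{r\in f^{-1}(s)}\be_r$.

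Next I would establish the composition identity by the standard expansion
\[
(Q_fQ_g)_{r,t}\;=\;\sum_{s\in S}(Q_f)_{r,s}(Q_g)_{s,t},
\]
and observe that only the summand $s=f(r)$ can contribute; this summand equals $1$ precisely when $g(f(r))=t$, which is the $(r,t)$-entry of $Q_{g\circ f}$.

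Finally, for a bijective $f$, the symmetry $f(r)=s\Leftrightarrow r=f^{-1}(s)$ of the defining condition for $Q_f$ gives $Q_f^T=Q_{f^{-1}}$ on the nose. Combining this with the composition identity applied to the pairs $(f,f^{-1})$ and $(f^{-1},f)$ produces $Q_fQ_{f^{-1}}=Q_{\id_R}=I_{|R|}$ and $Q_{f^{-1}}Q_f=Q_{\id_S}=I_{|S|}$, so $Q_f$ is invertible with $Q_f^{-1}=Q_{f^{-1}}=Q_f^T$. The only place bijectivity is actually invoked is in ensuring that $f^{-1}$ exists as a function to which the preceding bullets can be applied; injectivity alone would at best yield a one-sided inverse.
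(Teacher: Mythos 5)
Your proof is correct, and it is the standard entrywise verification; the paper itself omits the proof as trivial, so there is nothing further to compare against. The one point worth the care you gave it—checking the composition convention $Q_fQ_g=Q_{g\circ f}$ so that $Q_fQ_{f^{-1}}=Q_{\id_R}$ and $Q_{f^{-1}}Q_f=Q_{\id_S}$—is handled correctly.
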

We now look at what happens when a relaxation is applied to two different pairs of inputs $(\X,\A)$ and $(\X',\A')$ such that $\X'\to\X$ and $\A\to\A'$. Expressing the two outputs in the form of relaxation matrices,
the next proposition shows that one output can be obtained from the other through Kronecker products of the matrices $Q_f$.

\begin{prop}
\label{prop_relaxation_matrix_preservation_homo}
    Let $\X,\X',\A,\A'$ be digraphs, let $f:\X'\to\X$ and $g:\A\to\A'$ be homomorphisms, and let $M$ be a relaxation matrix for $\X,\A$. Then
\begin{align}
\label{eqn_0106_1611}
    M^{(f,g)}\spaceeq(Q_f\otimes Q_g^T)M(Q_f^T\otimes Q_g)
\end{align}
is a relaxation matrix for $\X',\A'$. Furthermore, if $M$ is an $\SDP$-matrix (resp. $\AIP$-matrix) for $\X,\A$, then $M^{(f,g)}$ is an $\SDP$-matrix (resp. $\AIP$-matrix) for $\X',\A'$.
\end{prop}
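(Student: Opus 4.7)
The plan is to show that $M^{(f,g)}$ inherits each defining property of a relaxation matrix from $M$, by exploiting the fact that it is built from $M$ via a congruence with $B = Q_f^T \otimes Q_g$, i.e.\ $M^{(f,g)} = B^T M B$. The main computational tool is the mixed-product rule $(A\otimes B)(C\otimes D) = AC\otimes BD$, combined with the identities of Lemma~\ref{lem_basic_Q_f}: in particular $Q_f^T\be_{x'}=\be_{f(x')}$ (as $f$ is a function) and $Q_g\bone_{n'}=\bone_n$. Each of the five conditions (\ref{r1})--(\ref{r5}) for $M^{(f,g)}$ on $\X',\A'$ reduces, after pushing the $Q$-factors through the relevant standard basis vectors, to the corresponding condition for $M$ on the ``image'' pair $(f(x'),f(y'))$ and $(g^{-1}(a'),g^{-1}(b'))$.

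In detail, for (\ref{r1}) the mixed-product rule gives
\[
(\be_{x'}\otimes I_{n'})^T M^{(f,g)}(\be_{x'}\otimes I_{n'}) = Q_g^T\bigl[(\be_{f(x')}\otimes I_n)^T M(\be_{f(x')}\otimes I_n)\bigr]Q_g,
\]
and since the bracketed matrix $D$ is diagonal by (\ref{r1}) for $M$, one checks that $Q_g^T D Q_g$ is diagonal because, $g$ being a function, the preimage classes $g^{-1}(a')$ partition $\Vset(\A)$ and thus each column of $Q_g$ has disjoint support. For (\ref{r2}), a similar reduction expresses the scalar $(\be_{x'}\otimes\be_{a'})^T M^{(f,g)}(\be_{y'}\otimes\be_{b'})$ as $\sum_{a\in g^{-1}(a'),\,b\in g^{-1}(b')}(\be_{f(x')}\otimes\be_a)^T M(\be_{f(y')}\otimes\be_b)$; since $f$ is a homomorphism we have $(f(x'),f(y'))\in\Eset(\X)$, while the assumption $(a',b')\notin\Eset(\A')$ forces $(a,b)\notin\Eset(\A)$ for all $a,b$ in the preimages (otherwise $g$ being a homomorphism would contradict the assumption), so each term vanishes by (\ref{r2}) for $M$. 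Conditions (\ref{r3}) and (\ref{r4}) follow immediately from $Q_g\bone_{n'}=\bone_n$, which lets one write $M^{(f,g)}(\be_{x'}\otimes\bone_{n'})=(Q_f\otimes Q_g^T)M(\be_{f(x')}\otimes\bone_n)$, and then the equality for different $x',y'$ reduces to (\ref{r3}) for $M$ applied at $f(x'),f(y')$; the dual argument handles (\ref{r4}).

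The cleanest route to (\ref{r5}) is to invoke Proposition~\ref{prop_equivalence_some_conditions_SDP}(i), which gives (\ref{r6}) for $M$. The same Kronecker computation as for (\ref{r3}) then yields
\[
(\be_{x'}\otimes\bone_{n'})^T M^{(f,g)}(\be_{y'}\otimes\bone_{n'}) = (\be_{f(x')}\otimes\bone_n)^T M(\be_{f(y')}\otimes\bone_n) = 1,
\]
and summing this identity over all $(x',y')\in\Vset(\X')^2$ produces $\bone_{p'n'}^T M^{(f,g)}\bone_{p'n'}=(p')^2$, which is exactly (\ref{r5}). Finally, the preservation of the SDP/AIP conditions is transparent from the congruence $M^{(f,g)}=B^T M B$: $M\geq 0$ is inherited because $B$ is Boolean hence nonnegative; $M\succcurlyeq 0$ is inherited because $M=L^TL$ yields $M^{(f,g)}=(LB)^T(LB)$; and integrality of the entries is preserved because $B$ is integer-valued.

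There is no real conceptual obstacle; the argument is essentially bookkeeping with Kronecker products. The one step that deserves a moment's care is the claim in (\ref{r1}) that $Q_g^T D Q_g$ stays diagonal despite $Q_g$ not being a permutation matrix in general---this is exactly where the functional (rather than relational) nature of $g$ enters, and it is the reason the conditions are preserved under homomorphisms rather than merely under isomorphisms.
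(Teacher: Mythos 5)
Your proposal is correct and follows essentially the same route as the paper's proof: reduce each of (\ref{r1})--(\ref{r5}) for $M^{(f,g)}$ to the corresponding condition for $M$ via the mixed-product rule and Lemma~\ref{lem_basic_Q_f}, with the homomorphism properties of $f$ and $g$ entering exactly where you say, and the SDP/AIP part handled by the congruence with a Boolean matrix. Your handling of (\ref{r5}) via Proposition~\ref{prop_equivalence_some_conditions_SDP}$(i)$ and summing the identity $(\be_{x'}\otimes\bone_{n'})^T M^{(f,g)}(\be_{y'}\otimes\bone_{n'})=1$ is a slight streamlining of the paper's computation, but not a different argument.
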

The last part of Lemma~\ref{lem_basic_Q_f} states that the matrix $Q_f$ corresponding to a bijective function $f$ is orthogonal, in that $Q_f^{-1}=Q_f^T$. Note that the Kronecker product of orthogonal matrices is an orthogonal matrix. Therefore, if the homomorphisms $f$ and $g$ in Proposition~\ref{prop_relaxation_matrix_preservation_homo} are both bijective (for example, if they are isomorphisms), the linear operator $(\cdot)^{(f,g)}:M\mapsto M^{(f,g)}$ is an \emph{orthogonal transformation} with respect to the Frobenius inner product $\Frob{M}{N}=\Tr(M^TN)$. Indeed, letting $P=Q_f\otimes Q_g^T$, we have
\begin{align*}
    \Frob{M^{(f,g)}}{N^{(f,g)}}
    =
    \Tr(PM^TP^TPNP^T)
    =
    \Tr(P^TPM^TP^TPN)
    =
    \Tr(M^TN)
    =\Frob{M}{N}.
\end{align*}

A straightforward consequence of Proposition~\ref{prop_relaxation_matrix_preservation_homo} is that the algorithms are monotone with respect to the homomorphism preorder.

\begin{prop}
\label{prop_monotonicity_SDP_SDA_wrt_homomorphisms}
    Let $\X,\X',\A,\A'$ be digraphs such that $\X'\to\X$ and $\A\to\A'$. Then
    \begin{itemize}
        \item[$(i)$] $\SDP(\X,\A)=\YES$ implies $\SDP(\X',\A')=\YES$;
        \item[$(ii)$] if $\X$ is loopless, $\SDA(\X,\A)=\YES$ implies $\SDA(\X',\A')=\YES$.
    \end{itemize}
\end{prop}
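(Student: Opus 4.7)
The plan is to reduce both parts to Proposition~\ref{prop_matrix_formulation_SDP_SDA} and then invoke the transformation of Proposition~\ref{prop_relaxation_matrix_preservation_homo}. Let $f:\X'\to\X$ and $g:\A\to\A'$ be the given homomorphisms. For part $(i)$, the argument is essentially immediate: from $\SDP(\X,\A)=\YES$, Proposition~\ref{prop_matrix_formulation_SDP_SDA}$(i)$ yields an $\SDP$-matrix $M$ for $\X,\A$; Proposition~\ref{prop_relaxation_matrix_preservation_homo} then gives that $M^{(f,g)}$ is an $\SDP$-matrix for $\X',\A'$; a second invocation of Proposition~\ref{prop_matrix_formulation_SDP_SDA}$(i)$ delivers $\SDP(\X',\A')=\YES$.

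For part $(ii)$, I first record that $\X'$ is loopless: any loop at $x'\in\Vset(\X')$ would be carried by the homomorphism $f$ to a loop at $f(x')$ in $\X$, contradicting the hypothesis. This lets me apply Proposition~\ref{prop_matrix_formulation_SDP_SDA}$(ii)$ on both sides. From $\SDA(\X,\A)=\YES$ I obtain an $\SDP$-matrix $M$ and an $\AIP$-matrix $N$ for $\X,\A$ with $N\circ ((I_p+\adj(\X))\otimes J_n)\triangleleft M$. I then set $M'=M^{(f,g)}$ and $N'=N^{(f,g)}$; Proposition~\ref{prop_relaxation_matrix_preservation_homo} immediately ensures that $M'$ is an $\SDP$-matrix and $N'$ an $\AIP$-matrix for $\X',\A'$, so everything rests on verifying the refinement condition $N'\circ ((I_{p'}+\adj(\X'))\otimes J_{n'})\triangleleft M'$.

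The main (and only nontrivial) step is this last support inclusion. Using Lemma~\ref{lem_basic_Q_f}, I would unfold the $((x',a'),(y',b'))$-entries as
\begin{align*}
M'_{(x',a'),(y',b')}\;=\;\sum_{a\in g^{-1}(a'),\,b\in g^{-1}(b')} M_{(f(x'),a),(f(y'),b)},
\end{align*}
and analogously for $N'$. Now fix $(x',y')$ with $x'=y'$ or $(x',y')\in\Eset(\X')$; since $f$ is a homomorphism, the pair $(f(x'),f(y'))$ satisfies the analogous condition in $\X$. If $N'_{(x',a'),(y',b')}\neq 0$, at least one summand $N_{(f(x'),a),(f(y'),b)}$ must be nonzero, and by the refinement condition on the original pair $(M,N)$ the corresponding entry $M_{(f(x'),a),(f(y'),b)}$ is nonzero, hence strictly positive because $M\geq 0$. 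All other summands comprising $M'_{(x',a'),(y',b')}$ are nonnegative, so the sum is strictly positive, yielding $M'_{(x',a'),(y',b')}\neq 0$. This gives the desired support inclusion, and a final application of Proposition~\ref{prop_matrix_formulation_SDP_SDA}$(ii)$ concludes $\SDA(\X',\A')=\YES$.

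The delicate point I expect to watch most carefully is exactly this interaction between the support condition and the summation collapsing multiple entries into one—without the nonnegativity of $M$ (entries could otherwise cancel), the argument would break; luckily nonnegativity is built into the definition of an $\SDP$-matrix, so the sum of nonnegative terms with at least one strictly positive summand is automatically positive.
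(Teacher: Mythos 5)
Your proof is correct, and it follows the paper's high-level strategy exactly: reduce to the matrix formulation via Proposition~\ref{prop_matrix_formulation_SDP_SDA}, push $M$ and $N$ forward to $M^{(f,g)}$ and $N^{(f,g)}$ via Proposition~\ref{prop_relaxation_matrix_preservation_homo}, and observe that looplessness of $\X$ transfers to $\X'$. The one place where you diverge is the verification of the refinement condition, which is indeed the only nontrivial step. The paper splits $N\circ((I_p+\adj(\X))\otimes J_n)$ into its diagonal and edge parts $N_{\operatorname{d}}$, $N_{\operatorname{e}}$, identifies $N^{(f,g)}_{\operatorname{d}}$ and $N^{(f,g)}_{\operatorname{e}}$ as Schur-product restrictions of $PN_{\operatorname{d}}P^T$ and $PN_{\operatorname{e}}P^T$ (using disjointness of the two supports, which requires $\X'$ loopless), and then invokes an auxiliary support lemma (Lemma~\ref{lem_basic_support}: $A\triangleleft B$, $C\triangleleft D$, $B,D\geq 0$ imply $AC\triangleleft BD$) with the nonnegative matrices $P$ and $M$. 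You instead unfold the $((x',a'),(y',b'))$-entries of $M^{(f,g)}$ and $N^{(f,g)}$ directly via Lemma~\ref{lem_basic_Q_f} (your displayed formula is precisely the identity derived in the paper's proof of Proposition~\ref{prop_relaxation_matrix_preservation_homo}) and argue entrywise: a nonzero entry of $N^{(f,g)}$ at a diagonal or edge block forces some nonzero summand of $N$ at a diagonal or edge block of $\X,\A$, hence by the original refinement condition a strictly positive summand of $M$, and nonnegativity of $M$ prevents cancellation. This is the same no-cancellation mechanism that powers Lemma~\ref{lem_basic_support}, so the two arguments are equivalent in substance; yours is more elementary and self-contained (no separate lemma, no Schur-product bookkeeping), while the paper's packaging isolates a reusable support fact and keeps the computation at the level of whole matrices. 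You were also right to flag nonnegativity of $M$ as the load-bearing hypothesis: the same inclusion can fail for the $\AIP$-matrix side precisely because integral entries may cancel under the fibre sums over $g^{-1}(a')\times g^{-1}(b')$ — which is why the condition only needs to be checked against $M$, never against $N'$ itself.
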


\section{Automorphisms and orbitals}
\label{sec_automorphisms_and_orbitals}
The leitmotif of this work is the use of linear algebra to manipulate relaxation algorithms. 
In this section, we begin to explore how the symmetries of the input digraphs---expressed via their automorphism groups---affect the outputs---expressed via relaxation matrices.

As usual, we let $\X$ and $\A$ be two digraphs whose vertex sets have size $p$ and $n$, respectively.
If $\xi$ and $\alpha$ are automorphisms of $\X$ and $\A$, respectively, we may permute the rows and columns of a relaxation matrix $M$ according to $\xi$ and $\alpha$ and the result would still be a relaxation matrix.
By averaging over all pairs of automorphisms $(\xi,\alpha)$, we end up with a relaxation matrix that is invariant under automorphisms of $\X$ and $\A$.
Since the set of positive semidefinite, entrywise-nonnegative matrices is closed under simultaneous permutations of rows and columns and under convex combinations, the same can be done for $\SDP$-matrices.

We now formalise this observation.
The next definition captures the invariance property mentioned above. 
Recall the description of the matrices $Q_f$ and $M^{(f,g)}$ given in Section~\ref{sec_preliminaries}.
\begin{defn}
\label{defn_balancedness}
A real $pn\times pn$ matrix $M$ is \emph{balanced} for $\X,\A$ if 
\begin{align*}
    (Q_\xi\otimes Q_\alpha)M(Q_\xi^T\otimes Q_\alpha^T)
    \spaceeq
    M &&\mbox{ for each }\xi\in\Aut(\X),\;\alpha\in\Aut(\A).
\end{align*}
\end{defn}

\begin{prop}
\label{relaxation_matrix_can_be_balanced}
    Let $\X,\A$ be digraphs, let
$s=\lvert\Aut(\X)\rvert$ and $t=\lvert\Aut(\A)\rvert$, and let $M$ be a relaxation matrix for $\X,\A$. Then the matrix
\begin{align}
\label{eqn_03052023}
    \overline M
    \spaceeq
    \frac{1}{st}\sum_{\substack{\xi\in\Aut(\X)\\\alpha\in\Aut(\A)}}M^{(\xi,\alpha)}
\end{align}
is a balanced relaxation matrix for $\X,\A$. Furthermore, if $M$ is an $\SDP$-matrix for $\X,\A$, then so is $\overline M$.
\end{prop}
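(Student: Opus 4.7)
The plan is to verify the three implicit claims---that $\overline M$ is a relaxation matrix, that it is balanced, and that it inherits the $\SDP$-matrix property---by combining Proposition~\ref{prop_relaxation_matrix_preservation_homo} with the standard group-averaging trick. For $\xi\in\Aut(\X)$ and $\alpha\in\Aut(\A)$, the automorphisms are (bijective) homomorphisms $\X\to\X$ and $\A\to\A$, so Proposition~\ref{prop_relaxation_matrix_preservation_homo} applied with $\X'=\X$, $\A'=\A$, $f=\xi$, $g=\alpha$ immediately gives that each summand $M^{(\xi,\alpha)}$ is a relaxation matrix for $\X,\A$, and in fact an $\SDP$-matrix when $M$ is one.

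Next, I would observe that the defining conditions (\ref{r1})--(\ref{r5}) are linear equations in the entries of $M$ with constant right-hand sides, so the set of relaxation matrices for $\X,\A$ is an affine subspace of $\R^{pn\times pn}$, in particular closed under convex combinations. Hence the average $\overline M$ defined in~\eqref{eqn_03052023} is again a relaxation matrix. Moreover, both the cone of entrywise-nonnegative matrices and the cone of positive semidefinite matrices are convex, so if each $M^{(\xi,\alpha)}$ lies in their intersection---which happens precisely when $M$ is an $\SDP$-matrix---then so does the convex combination $\overline M$. This disposes of two of the three claims.

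It remains to check balancedness. A direct computation using $(A\otimes B)(C\otimes D)=AC\otimes BD$ together with the rule $Q_\eta Q_\zeta=Q_{\zeta\circ\eta}$ from Lemma~\ref{lem_basic_Q_f} yields the composition identity
\[
\bigl(M^{(\xi_1,\alpha_1)}\bigr)^{(\xi_2,\alpha_2)} \spaceeq M^{(\xi_1\circ\xi_2,\,\alpha_2\circ\alpha_1)}.
\]
Fix $\xi\in\Aut(\X)$ and $\alpha\in\Aut(\A)$; since $Q_{\alpha^{-1}}=Q_\alpha^T$ by Lemma~\ref{lem_basic_Q_f}, the left-hand side of the balancedness condition in Definition~\ref{defn_balancedness} is exactly $\overline M^{(\xi,\alpha^{-1})}=\frac{1}{st}\sum_{\xi',\alpha'}M^{(\xi'\circ\xi,\,\alpha^{-1}\circ\alpha')}$. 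The maps $\xi'\mapsto\xi'\circ\xi$ and $\alpha'\mapsto\alpha^{-1}\circ\alpha'$ are bijections of $\Aut(\X)$ and $\Aut(\A)$ respectively, so reindexing the sum returns $\overline M$, establishing the required invariance.

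The only genuinely delicate bookkeeping point is the order reversal in $Q_\eta Q_\zeta=Q_{\zeta\circ\eta}$, which dictates that composing the $(\xi,\alpha)$-transformations reverses the order of the $\A$-side factors relative to the $\X$-side. However, because we are averaging over the full groups on both sides, the group structure absorbs this reversal instantly through the reindexing, so no real obstacle arises and the proof is essentially a one-line symmetrisation argument once Proposition~\ref{prop_relaxation_matrix_preservation_homo} is in hand.
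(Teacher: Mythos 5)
Your proposal is correct and follows essentially the same route as the paper: each $M^{(\xi,\alpha)}$ is handled via Proposition~\ref{prop_relaxation_matrix_preservation_homo}, the average is dealt with by convexity of the conditions (\ref{r1})--(\ref{r5}) and of the nonnegative and positive semidefinite cones, and balancedness follows by reindexing the group sum. The composition identity $\bigl(M^{(\xi_1,\alpha_1)}\bigr)^{(\xi_2,\alpha_2)}=M^{(\xi_1\circ\xi_2,\,\alpha_2\circ\alpha_1)}$ you isolate (and your identification of the balancedness left-hand side with $\overline M^{(\xi,\alpha^{-1})}$ via $Q_{\alpha^{-1}}=Q_\alpha^T$) is exactly the in-line computation the paper performs before its change of variables $\xi'=\xi\circ\xi_0$, $\alpha'=\alpha_0^{-1}\circ\alpha$, so the two arguments coincide.
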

\begin{proof}
For any $\xi\in\Aut(\X)$ and $\alpha\in\Aut(\A)$, since automorphisms are, in particular, homomorphisms, we deduce from Proposition~\ref{prop_relaxation_matrix_preservation_homo} that $M^{(\xi,\alpha)}$ is a relaxation matrix for $\X,\A$.
Since the conditions (\ref{r1})--(\ref{r5}) are clearly preserved by taking convex combinations, it follows that $\overline M$ is a relaxation matrix for $\X,\A$, too. We are left to show that $\overline M$ is balanced. 
For $\xi_0\in\Aut(\X)$ and $\alpha_0\in\Aut(\A)$, using Lemma~\ref{lem_basic_Q_f}, we find
{
\allowdisplaybreaks
\begin{align*}
    (Q_{\xi_0}\otimes Q_{\alpha_0})\overline M(Q_{\xi_0}^T\otimes Q_{\alpha_0}^T)
    &\spaceeq
    \frac{1}{st}\sum_{\substack{\xi\in\Aut(\X)\\\alpha\in\Aut(\A)}}(Q_{\xi_0}\otimes Q_{\alpha_0})(Q_\xi\otimes Q_\alpha^T)M(Q_\xi^T\otimes Q_\alpha)(Q_{\xi_0}^T\otimes Q_{\alpha_0}^T)\\
    &\spaceeq
    \frac{1}{st}\sum_{\substack{\xi\in\Aut(\X)\\\alpha\in\Aut(\A)}}
    (Q_{\xi\circ\xi_0}\otimes Q_{\alpha^{-1}\circ\alpha_0})M(Q_{\xi_0^{-1}\circ\xi^{-1}}\otimes Q_{\alpha_0^{-1}\circ\alpha})\\
    &\spaceeq
    \frac{1}{st}\sum_{\substack{\xi\in\Aut(\X)\\\alpha\in\Aut(\A)}}
    (Q_{\xi\circ\xi_0}\otimes Q_{\alpha_0^{-1}\circ\alpha}^T)M(Q_{\xi\circ\xi_0}^T\otimes Q_{\alpha_0^{-1}\circ\alpha})\\
    &\spaceeq
    \frac{1}{st}\sum_{\substack{\xi'\in\Aut(\X)\\\alpha'\in\Aut(\A)}}
    (Q_{\xi'}\otimes Q_{\alpha'}^T)M(Q_{\xi'}^T\otimes Q_{\alpha'})
    \spaceeq
    \overline M,
\end{align*}
}as required (where the penultimate equality holds since $\Aut(\X)$ and $\Aut(\A)$ are groups).
If $M$ is an $\SDP$-matrix for $\X,\A$, the same holds for $M^{(\xi,\alpha)}$ for any $\xi,\alpha$ (by virtue of Proposition~\ref{prop_relaxation_matrix_preservation_homo}) and for $\overline M$ (since positive semidefiniteness and entrywise nonnegativity are preserved under convex combinations).
\end{proof}
The next result, obtained as a consequence of Proposition~\ref{relaxation_matrix_can_be_balanced}, is a symmetric version of Proposition~\ref{prop_matrix_formulation_SDP_SDA}.

\begin{prop}
\label{prop_matrix_char_SDP_new_form}
    Let $\X,\A$ be digraphs. Then
    \begin{itemize}
        \item[$(i)$] $\SDP(\X,\A)=\YES$ if and only if there exists a balanced $\SDP$-matrix for $\X,\A$;
        \item[$(ii)$] if $\X$ is loopless, $\SDA(\X,\A)=\YES$ if and only if there exist a balanced $\SDP$-matrix $M$ and an $\AIP$-matrix $N$ for $\X,\A$ such that $N\circ ((I_p+\adj(\X))\otimes J_n)\;\;\triangleleft\;\; M$.
    \end{itemize}
\end{prop}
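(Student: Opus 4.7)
The plan is to reduce this statement to Proposition~\ref{prop_matrix_formulation_SDP_SDA} via the symmetrisation operation from Proposition~\ref{relaxation_matrix_can_be_balanced}. Since any balanced relaxation matrix is in particular a relaxation matrix, one direction of each equivalence is immediate: a balanced $\SDP$-matrix is an $\SDP$-matrix, so Proposition~\ref{prop_matrix_formulation_SDP_SDA} directly yields $\SDP(\X,\A)=\YES$ in part~(i); and the same matrix together with the $\AIP$-matrix $N$ (still satisfying the support condition against $M$) yields $\SDA(\X,\A)=\YES$ in part~(ii).

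For the forward direction of~(i), I would invoke Proposition~\ref{prop_matrix_formulation_SDP_SDA}(i) to extract an $\SDP$-matrix $M$ for $\X,\A$, and then apply Proposition~\ref{relaxation_matrix_can_be_balanced} to produce the balanced $\SDP$-matrix $\overline{M}$.

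For the forward direction of~(ii), I would similarly start with an $\SDP$-matrix $M$ and an $\AIP$-matrix $N$ as provided by Proposition~\ref{prop_matrix_formulation_SDP_SDA}(ii), and then replace $M$ by its symmetrisation $\overline{M}$ from~\eqref{eqn_03052023}. The only thing to verify is that the support condition $N\circ ((I_p+\adj(\X))\otimes J_n) \triangleleft \overline{M}$ still holds after replacing $M$ with $\overline{M}$. This is where I expect the (minor) subtlety to lie: I need to check that $\supp(M)\subseteq\supp(\overline{M})$. This follows because $\overline{M}$ is a convex combination (with uniform positive weights $1/st$) of the matrices $M^{(\xi,\alpha)}=(Q_\xi\otimes Q_\alpha^T)M(Q_\xi^T\otimes Q_\alpha)$, each of which is obtained from $M$ by permuting rows and columns (recall that $Q_\xi, Q_\alpha$ are permutation matrices since $\xi,\alpha$ are bijections), and hence each $M^{(\xi,\alpha)}$ is entrywise nonnegative. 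Since the identity pair $(\id_\X,\id_\A)$ contributes $M$ itself as a summand, we have $\frac{1}{st} M \leq \overline{M}$ entrywise, giving $\supp(M)\subseteq\supp(\overline{M})$ and therefore preserving the refinement condition.

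The main obstacle, then, is essentially bookkeeping---ensuring that the symmetrisation step from Proposition~\ref{relaxation_matrix_can_be_balanced} is compatible with the $\triangleleft$-type refinement constraint of SDA. No new analytic or algebraic content beyond what has already been established is required.
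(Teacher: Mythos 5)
Your proposal is correct and follows the same route as the paper: combine Proposition~\ref{prop_matrix_formulation_SDP_SDA} with the symmetrisation of Proposition~\ref{relaxation_matrix_can_be_balanced}, noting that $M\triangleleft\overline M$ (which, as you argue via nonnegativity of each $M^{(\xi,\alpha)}$ and the identity-pair term $\frac{1}{st}M$, preserves the refinement condition). The paper states this support observation without elaboration, so your extra detail is simply a fuller justification of the same step.
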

\begin{proof}
    The result immediately follows by combining Proposition~\ref{prop_matrix_formulation_SDP_SDA} with Proposition~\ref{relaxation_matrix_can_be_balanced} and observing that, if $M$ is an $\SDP$-matrix, the matrix $\overline M$ defined in~\eqref{eqn_03052023} satisfies $M\triangleleft\overline M$.
\end{proof}

As a result of Proposition~\ref{prop_matrix_char_SDP_new_form}, the output of $\SDP$ (and of the $\SDP$-part of $\SDA$) may be assumed to be balanced without loss of generality.\footnote{Clearly, the same is not true for $\AIP$, as integral matrices are not closed under convex combinations.}
In linear-algebraic terms, it follows that, instead of studying the outputs of $\SDP$ in $\R^{pn\times pn}$ with the basis of standard unit matrices $\be_i\be_j^T$ (as we have implicitly done so far), we may work without loss of generality in the real vector space $\mathscr{L}$ of balanced matrices for $\X,\A$. (The fact that $\mathscr{L}$ is a real vector space easily follows from Definition~\ref{defn_balancedness}.) As we see next, the concept of orbitals provides a natural basis for the space $\mathscr{L}$.

Take a digraph $\X$, and consider the action of the group $\Aut(\X)$ onto the set $\Vset(\X)^2$ given by $(x,y)^\xi=(\xi(x),\xi(y))$ for $\xi\in\Aut(\X)$, $x,y\in \Vset(\X)$. An \emph{orbital} of $\X$ is an orbit of $\Vset(\X)^2$ with respect to this action; i.e., it is a minimal subset of $\Vset(\X)^2$ that is invariant under the action. We let $\Orb(\X)$ be the set of orbitals of $\X$. 
Given an orbital $\omega\in\Orb(\X)$, we let $R_\omega$ be the $p\times p$ matrix whose $(x,y)$-th entry is $1$ if $(x,y)\in\omega$ and $0$ otherwise.
Orbitals provide an alternative description of balanced matrices: A block matrix $M$ is balanced for $\X,\A$ if and only if the block structure of $M$ is constant over the orbitals of $\X$, and each block is constant over the orbitals of $\A$. As stated next, it follows that we can find a basis for $\mathscr{L}$ 
by taking Kronecker products of the matrices $R_\omega$. We shall see later that a different basis for the same space may be found under certain conditions using the theory of association schemes.
\begin{prop}
\label{prop_orbital_matrix_of_balanced_matrix}
    Let $\X,\A$ be digraphs, and let $\mathscr{L}$ be the real vector space of balanced matrices for $\X,\A$. Then the set $\mathscr{R}=\{R_\omega\otimes R_{\tilde\omega}:\omega\in\Orb(\X),\tilde\omega\in\Orb(\A)\}$ forms a basis for $\mathscr{L}$.
\end{prop}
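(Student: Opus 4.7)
The plan is to show that balancedness is equivalent to the entries of $M$ being constant on the orbits of a natural product action of $\Aut(\X)\times\Aut(\A)$ on the index set $\Vset(\X)^2\times \Vset(\A)^2$, and then observe that the matrices $R_\omega\otimes R_{\tilde\omega}$ form the indicator basis corresponding to these orbits.

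First I would unfold what the conjugation $(Q_\xi\otimes Q_\alpha)\,M\,(Q_\xi^T\otimes Q_\alpha^T)$ does at the level of entries. Using the mixed-product rule and the identity $Q_\xi^T\be_x=\be_{\xi(x)}$ from Lemma~\ref{lem_basic_Q_f}, one computes
\begin{align*}
\bigl[(Q_\xi\otimes Q_\alpha)M(Q_\xi^T\otimes Q_\alpha^T)\bigr]_{(x,a),(y,b)}
\spaceeq
(\be_{\xi(x)}\otimes\be_{\alpha(a)})^T M (\be_{\xi(y)}\otimes\be_{\alpha(b)}).
\end{align*}
Hence $M$ is balanced if and only if its value at $((x,y),(a,b))$ equals its value at $((\xi(x),\xi(y)),(\alpha(a),\alpha(b)))$ for every $\xi\in\Aut(\X)$ and $\alpha\in\Aut(\A)$. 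In other words, $M$ is balanced precisely when the function $((x,y),(a,b))\mapsto M_{(x,a),(y,b)}$ is constant on the orbits of the product action of $\Aut(\X)\times\Aut(\A)$ on $\Vset(\X)^2\times\Vset(\A)^2$. These orbits are exactly the sets of the form $\omega\times\tilde\omega$ with $\omega\in\Orb(\X)$ and $\tilde\omega\in\Orb(\A)$.

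Next I would verify that each element of $\mathscr{R}$ lies in $\mathscr{L}$. By the mixed-product rule, $(Q_\xi\otimes Q_\alpha)(R_\omega\otimes R_{\tilde\omega})(Q_\xi^T\otimes Q_\alpha^T)=(Q_\xi R_\omega Q_\xi^T)\otimes(Q_\alpha R_{\tilde\omega} Q_\alpha^T)$, and the entry computation above shows $(Q_\xi R_\omega Q_\xi^T)_{x,y}=(R_\omega)_{\xi(x),\xi(y)}$, which equals $(R_\omega)_{x,y}$ because $\omega$ is $\Aut(\X)$-invariant by definition. Hence $\mathscr{R}\subseteq\mathscr{L}$. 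For spanning, given $M\in\mathscr{L}$, define $c_{\omega,\tilde\omega}$ to be the common entry value of $M$ on any index pair in $\omega\times\tilde\omega$ (well defined by the previous paragraph); then, examining entries, one has $M=\sum_{\omega,\tilde\omega}c_{\omega,\tilde\omega}\,R_\omega\otimes R_{\tilde\omega}$. Linear independence is immediate: the matrices $R_\omega\otimes R_{\tilde\omega}$ are Boolean matrices whose supports are the pairwise disjoint orbits $\omega\times\tilde\omega$, so any nontrivial linear combination is nonzero.

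I do not expect a genuine obstacle here: the argument is essentially bookkeeping, and the only subtlety is keeping the tensor indices and the direction of the permutations (use of $Q_\xi^T$ vs.\ $Q_\xi$) straight when translating $(Q_\xi\otimes Q_\alpha) M(Q_\xi^T\otimes Q_\alpha^T)$ into a statement about entries. Once that conversion is made, the identification of balanced matrices with $\Aut(\X)\times\Aut(\A)$-invariant functions on $\Vset(\X)^2\times\Vset(\A)^2$ makes the basis property of $\mathscr{R}$ automatic.
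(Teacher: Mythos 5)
Your proposal is correct and follows essentially the same route as the paper's proof: translate the conjugation $(Q_\xi\otimes Q_\alpha)M(Q_\xi^T\otimes Q_\alpha^T)$ into entry permutations via $Q_f^T\be_r=\be_{f(r)}$ and the mixed-product rule, deduce $\mathscr{R}\subseteq\mathscr{L}$ from the $\Aut$-invariance of orbitals, get linear independence from the disjoint supports of the Boolean matrices $R_\omega\otimes R_{\tilde\omega}$ (which partition the index set), and get spanning by defining the coefficient on each $\omega\times\tilde\omega$ as the common entry value, well defined by balancedness. The only cosmetic difference is that you phrase balancedness explicitly as invariance under the product action of $\Aut(\X)\times\Aut(\A)$, which the paper leaves implicit.
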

\begin{proof}
For $\omega\in\Orb(\X)$, $\xi\in\Aut(\X)$, and $x,y\in\Vset(\X)$, we have
\begin{align*}
    \be_x^TQ_\xi R_\omega Q_\xi^T\be_y
    &\spaceeq
    \be_{\xi(x)}^T R_\omega\be_{\xi(y)}
    \spaceeq
    \left\{
    \begin{array}{cccc}
         1&\mbox{ if }(x,y)^\xi\in\omega  \\
         0&\mbox{ otherwise} 
    \end{array}
    \right.
    \spaceeq
    \left\{
    \begin{array}{cccc}
         1&\mbox{ if }(x,y)\in\omega  \\
         0&\mbox{ otherwise} 
    \end{array}
    \right.\\
    &\spaceeq
    \be_x^TR_\omega\be_y,
\end{align*}
which means that $Q_\xi R_\omega Q_\xi^T=R_\omega$. Similarly, $Q_\alpha R_{\tilde\omega} Q_\alpha^T=R_{\tilde\omega}$ for each $\tilde\omega\in\Orb(\A)$ and $\alpha\in\Aut(\A)$. As a consequence, we find
\begin{align*}
    (Q_\xi\otimes Q_\alpha)(R_\omega\otimes R_{\tilde\omega})(Q_\xi^T\otimes Q_\alpha^T)
    \spaceeq
    (Q_\xi R_\omega Q_\xi^T)\otimes(Q_\alpha R_{\tilde\omega}Q_{\alpha}^T)
    \spaceeq
    R_\omega\otimes R_{\tilde\omega},
\end{align*}
thus showing that the matrix $R_\omega\otimes R_{\tilde\omega}$ is balanced for $\X,\A$. It follows that $\mathscr{R}\subseteq\mathscr{L}$.
Since the orbits of a group action partition the underlying set, we can write $\Vset(\X)^2$ as the disjoint union of the orbitals:
\begin{align*}
    \Vset(\X)^2=\bigsqcup_{\omega\in\Orb(\X)}\omega.
\end{align*}
Hence, $\sum_{\omega\in\Orb(\X)}R_\omega= J_p$. Similarly, $\sum_{\tilde\omega\in\Orb(\A)}R_{\tilde\omega}= J_n$, and 
\begin{align*}
    \sum_{\substack{\omega\in\Orb(\X)\\\tilde\omega\in\Orb(\A)}}R_\omega\otimes R_{\tilde\omega}
    \spaceeq
    J_p\otimes J_n
    \spaceeq
    J_{pn}.
\end{align*}
Therefore, $\mathscr{R}$ consists of Boolean matrices summing up to the all-one matrix, and it is thus a linearly independent set. Given $M\in\mathscr{L}$, $\omega\in\Orb(\X)$, and $\tilde\omega\in\Orb(\A)$, let $v_{\omega\tilde\omega}=(\be_x\otimes\be_a)^T M(\be_y\otimes\be_b)$ for some $(x,y)\in\omega$, $(a,b)\in\tilde\omega$. This definition is well posed by virtue of
Definition~\ref{defn_balancedness}, and it guarantees that
\begin{align}
\label{eqn_1334_0504}
    M=\sum_{\substack{\omega\in\Orb(\X)\\\tilde\omega\in\Orb(\A)}}v_{\omega\tilde\omega}\,R_\omega\otimes R_{\tilde\omega}.
\end{align}
It follows that $\spann(\mathscr{R})=\mathscr{L}$, which concludes the proof.
\end{proof}
It follows from Proposition~\ref{prop_orbital_matrix_of_balanced_matrix} that, given a balanced matrix $M$, there exists a unique list of coefficients $v_{\omega\tilde\omega}$ satisfying the equation~\eqref{eqn_1334_0504}.
We shall refer to the $\lvert \Orb(\X)\rvert \times\lvert \Orb(\A)\rvert$ matrix $V=(v_{\omega\tilde\omega})$ as the \emph{orbital matrix} of $M$. 
Expressing a balanced matrix $M$ in the new basis $\mathscr{R}$ rather than in the standard basis for $\R^{pn\times pn}$ is especially convenient when $\X$ and $\A$ are highly symmetric. Indeed, if $\Aut(\X)$ and $\Aut(\A)$ are large, $\Orb(\X)$ and $\Orb(\A)$ are small. Working with $\mathscr{R}$ allows then compressing the information of the $pn\times pn$ matrix $M$ in the smaller $\lvert\Orb(\X)\rvert\times\lvert\Orb(\A)\rvert$ orbital matrix $V$. However, if we want to make use of $V$ to certify acceptance of $\SDP$, we need to be able to check if $M$ is an $\SDP$-matrix by only looking at $V$. 
While lifting the requirements defining an $\SDP$-matrix to the orbital matrix, it should come with little surprise that the crucial one is positive semidefiniteness: How to translate the fact that $M\succcurlyeq 0$ into a condition on $V$?
We shall see in the next section that the key for recovering the spectral properties of $M$ from the orbital matrix is to endow the set of orbitals with a certain algebraic structure.

\section{Association schemes}
\label{sec_association_schemes}
Our strategy for gaining access to the spectral properties of a balanced matrix (in particular, its positive semidefiniteness) from the corresponding orbital matrix consists in studying the orbitals of a given digraph algebraically, via the concept of association schemes.
\begin{defn}
\label{defn_association_scheme}
An \emph{association scheme} is a set $\mathscr{S}=\{S_0,S_1,\dots,S_d\}$ of $p\times p$ Boolean matrices satisfying
\labeltext{$s_1$}{s1}
\labeltext{$s_2$}{s2}
\labeltext{$s_3$}{s3}
\labeltext{$s_4$}{s4}
\labeltext{$s_5$}{s5}
\begin{align*}
    \begin{array}{llllll}
        (s_1)\;\; S_0=I_p
      &\hspace{.3cm} (s_2)\;\; \sum_{i=0}^d S_i=J_p
      &\hspace{.3cm} (s_3)\;\; S_i^T\in\mathscr{S} \;\;\forall i\\[5pt]
        (s_4)\;\; S_iS_j\in\spann_{\mathbb{C}}(\mathscr{S})\;\; \forall i,j
      &\hspace{.3cm} (s_5)\;\; S_iS_j=S_jS_i\;\; \forall i,j.
    \end{array}
\end{align*}
\end{defn}
Association schemes were introduced by Bose and Nair~\cite{bose1939partially} and Bose and Shimamoto~\cite{bose1952classification} in the context of statistical design of experiments, but the root of the theory can be traced back to the work of Frobenius, Schur, and Burnside on representation theory of finite groups, see~\cite{bannai_ito_current_research}. Indeed, if all $S_i$ in Definition~\ref{defn_association_scheme} are permutation matrices, $\mathscr{S}$ is a finite group; association schemes allow developing a theory of symmetry that generalises character theory for group representations.
Later, Delsarte's work in algebraic coding theory~\cite{delsarte1973algebraic} initiated the study of association schemes as a separate area 
in the domain of algebraic combinatorics. 

The \emph{Bose--Mesner algebra} $\mathfrak{B}$ of $\mathscr{S}$ is the vector space $\spann_{\mathbb{C}}(\mathscr{S})$, which consists of all complex linear combinations of the matrices in $\mathscr{S}$ (see~\cite{Bose59}). Since the matrices in $\mathscr{S}$ are Boolean and satisfy (\ref{s2}), they form a basis for $\mathfrak{B}$. Notice also that the set $\mathscr{S}\cup \{O_{p,p}\}$ is closed under the Schur product, and so is $\mathfrak{B}$. Moreover, the matrices in $\mathscr{S}$ are Schur-orthogonal and Schur-idempotent, in that $S_i\circ S_j$ equals $S_i$ when $i=j$, and equals $O_{p,p}$ otherwise. Hence, we have the following.\footnote{Facts~\ref{fact_first_bases_Bose_Mesner} and~\ref{fact_second_bases_Bose_Mesner} can be found in any of the references~\cite{bannai_ito_1984,godsil2016erdos,brouwer1989distance,godsil2010associationSchemes}.}
\begin{fact}
\label{fact_first_bases_Bose_Mesner}
Let $\mathscr{S}$ be an association scheme.
Then $\mathscr{S}$ forms a Schur-orthogonal basis of Schur-idempotents for its Bose--Mesner algebra $\mathfrak{B}$.
\end{fact}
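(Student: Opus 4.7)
The plan is to unpack the three claims of the fact (basis, Schur-orthogonality, Schur-idempotence) and show that each is an immediate consequence of the Boolean nature of the matrices in $\mathscr{S}$ together with axiom (\ref{s2}).

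First I would record the key structural observation extracted from (\ref{s2}): since every $S_i$ is Boolean and $\sum_{i=0}^d S_i = J_p$, for each index pair $(x,y)\in[p]^2$ there is exactly one $i \in \{0,1,\dots,d\}$ with $(S_i)_{xy}=1$, all other $S_j$ having a zero in that position. In particular, the supports of the matrices $S_0,\dots,S_d$ form a partition of $[p]^2$.

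From this partition property, all three claims follow. Spanning is built into the definition of $\mathfrak{B}=\spann_{\mathbb{C}}(\mathscr{S})$. For linear independence, suppose $\sum_{i=0}^d c_i S_i = O_{p,p}$ for scalars $c_i \in \mathbb{C}$; for any fixed $(x,y)\in[p]^2$, letting $i_0$ be the unique index with $(S_{i_0})_{xy}=1$, the $(x,y)$-entry of the linear combination reads $c_{i_0}=0$. As $(x,y)$ ranges over $[p]^2$, every $i$ appears as such an $i_0$ (because $S_i$ is nonzero by the partition property, assuming $\mathscr{S}$ has no repeated elements), so $c_i=0$ for all $i$, proving that $\mathscr{S}$ is a basis of $\mathfrak{B}$. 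For Schur-idempotence, since $(S_i)_{xy}\in\{0,1\}$ we have $(S_i\circ S_i)_{xy}=(S_i)_{xy}^2=(S_i)_{xy}$, i.e.\ $S_i\circ S_i=S_i$. For Schur-orthogonality with $i\neq j$, the disjoint-support property gives $(S_i)_{xy}(S_j)_{xy}=0$ for every $(x,y)$, hence $S_i\circ S_j=O_{p,p}$.

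This also justifies in passing the side remarks in the paragraph preceding the fact: the identity $S_i\circ S_j=\delta_{ij}S_i$ together with the basis property shows that $\mathscr{S}\cup\{O_{p,p}\}$ is closed under Schur product and that $\mathfrak{B}$ itself is closed under Schur product. I do not expect any real obstacle in the argument; the only mildly delicate point is the linear-independence step, where one must invoke the partition of $[p]^2$ (not just the matrix identity $\sum S_i = J_p$) to zero out each coefficient individually.
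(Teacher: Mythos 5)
Your proposal is correct and follows essentially the same route the paper takes: the paper does not give a formal proof but justifies the fact in the preceding paragraph precisely via Booleanness plus (\ref{s2}), i.e.\ the supports of the $S_i$ partition $[p]^2$, which yields linear independence, Schur-idempotence, and Schur-orthogonality, with spanning built into the definition of $\mathfrak{B}$ (and cites standard references for the details). The only cosmetic caveat is that nonvanishing of each $S_i$ is an implicit convention of the definition (a zero matrix could formally satisfy the axioms and would break independence) rather than a consequence of the partition property, but this does not affect the substance of your argument.
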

\noindent
Now, by (\ref{s4}), $\mathfrak{B}$ is also closed under the standard matrix product; in other words, it is a matrix algebra, thus justifying the name. It turns out that a different basis exists for $\mathfrak{B}$, whose members enjoy similar
properties to those for the basis $\mathscr{S}$, but with a different product being involved.
\begin{fact}
\label{fact_second_bases_Bose_Mesner}
Let $\mathscr{S}$ be an association scheme.
Then there exists an orthogonal basis $\mathscr{E}=\{E_0,E_1,\dots,E_d\}$ of Hermitian idempotents\footnote{I.e., the matrices in $\mathscr{E}$ are Hermitian, and $E_iE_j$ equals $E_i$ when $i=j$, and equals $O_{p,p}$ otherwise.} for its Bose--Mesner algebra $\mathfrak{B}$. 
\end{fact}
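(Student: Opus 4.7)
The plan is to endow $\mathfrak{B}$ with the structure of a commutative $*$-subalgebra of $\mathbb{C}^{p\times p}$ and then to apply the simultaneous spectral theorem for commuting normal matrices. First I would observe that (\ref{s4}) ensures closure under the standard matrix product, (\ref{s5}) ensures commutativity, and (\ref{s3}) combined with the fact that each $S_i$ is real (Boolean) gives $S_i^{\ast}=S_i^{T}\in\mathfrak{B}$, so $\mathfrak{B}$ is closed under conjugate transpose. Consequently every matrix in $\mathfrak{B}$ is normal, and any two matrices in $\mathfrak{B}$ commute.

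Next, the simultaneous spectral theorem for commuting normal matrices provides a unitary $U\in\mathbb{C}^{p\times p}$ such that $U^{\ast}S_iU$ is diagonal for every $i\in\{0,\dots,d\}$; hence $\widetilde{\mathfrak{B}}:=U^{\ast}\mathfrak{B}\,U$ is a $(d+1)$-dimensional commutative $*$-subalgebra of the algebra of diagonal matrices in $\mathbb{C}^{p\times p}$. Any such subalgebra equals the span of the indicator diagonals of the partition $[p]=\Lambda_0\sqcup\dots\sqcup\Lambda_d$ whose atoms are the maximal subsets of $[p]$ on which every element of $\widetilde{\mathfrak{B}}$ is constant. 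Writing $D_0,\dots,D_d$ for these indicator diagonals, I would set $E_j:=UD_jU^{\ast}$ for each $j\in\{0,\dots,d\}$.

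Finally I would verify that $\mathscr{E}=\{E_0,\dots,E_d\}$ satisfies the required properties. Each $E_j$ is Hermitian since $D_j$ is real diagonal, and idempotent since $D_j^2=D_j$; the identity $E_iE_j=U(D_iD_j)U^{\ast}=O_{p,p}$ for $i\neq j$ yields both multiplicative orthogonality and Frobenius orthogonality via $\Frob{E_i}{E_j}=\Tr(E_i^{\ast}E_j)=\Tr(D_iD_j)=0$; and the $E_j$ are linearly independent (as their disjoint-support images $D_j$ are), so they form a basis of $\mathfrak{B}$. The main obstacle will be the middle step: establishing that the atoms of the partition number exactly $\dim\mathfrak{B}=d+1$. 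The cleanest route is to select a generic $A\in\mathfrak{B}$ with the maximum possible number $m$ of distinct eigenvalues: Lagrange interpolation then expresses each spectral projector of $A$ as a polynomial in $A$, yielding $m$ linearly independent elements of $\mathfrak{B}$, while the partition bound forces $m\leq d+1$; combining the two inequalities gives $m=d+1$ and proves that the spectral projectors exhaust $\mathfrak{B}$.
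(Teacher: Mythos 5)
The paper never proves this statement---it is imported wholesale from the association-scheme literature (the footnote points to Bannai--Ito, Godsil, et al.)---and your argument is essentially the standard proof given in those references: the $S_i$ are a commuting family of normal matrices, hence simultaneously unitarily diagonalizable, and the primitive idempotents $E_j$ are the conjugated indicator diagonals of the atoms of the resulting partition. Two small points to tidy in your sketch. First, the claim that a commutative subalgebra of diagonal matrices equals the span of the indicators of its atoms is false without a unit (consider the span of $\operatorname{diag}(1,0)$, whose atoms are the two singletons), so the Lagrange-interpolation step must explicitly invoke $S_0=I_p$, i.e.\ condition (\ref{s1}), to supply the constant term of the interpolating polynomial. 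Second, in your closing sentence the two bounds are attributed the wrong way around: constancy of every element of $\mathfrak{B}$ on the $m$ atoms gives $d+1\le m$, while the interpolation argument, which places $m$ linearly independent idempotents inside $\mathfrak{B}$, gives $m\le d+1$; combining them yields $m=d+1$ and that the atom indicators form the desired basis. With these repairs the proof is complete and coincides with the one in the cited sources.
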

\noindent The interaction between the two bases $\mathscr{S}$ and $\mathscr{E}$ allows deriving several interesting features of association schemes. The change-of-basis matrix shall be particularly important for our purposes. More precisely, we can (uniquely) express the elements of $\mathscr{S}$ as 
\begin{align}
\label{eqn_1819_23_05}
    S_j=\sum_{i=0}^d p_{ij}E_i
\end{align}
for some coefficients $p_{ij}$.
The $(d+1)\times(d+1)$ matrix $P=(p_{ij})$ is known as the \emph{character table} of the association scheme~\cite{bannai_ito_1984}.

For our purposes, association schemes will provide a natural language for describing how the SDA algorithm---in particular, its semidefinite programming part---processes the symmetries of the input digraphs. 
We say that a digraph $\X$ is \emph{generously transitive} if for any $x,y\in \Vset(\X)$ there exists $\xi\in\Aut(\X)$ such that $\xi(x)=y$ and $\xi(y)=x$.
It turns out that the set of orbitals for a generously transitive digraph forms an association scheme.
Indeed, in this case, the condition (\ref{s1}) is trivially satisfied, since each generously transitive digraph is in particular vertex-transitive, while (\ref{s2}) follows from the fact that the orbitals partition $\Vset(\X)^2$.
The conditions (\ref{s3})---in fact, the stronger condition that $R_\omega^T=R_\omega$ for each $\omega\in\Orb(\X)$---and (\ref{s5}) directly come from the definition of generous transitivity. As for the condition (\ref{s4}), it can be proved by considering the Hecke ring of the permutation representation of $\Aut(\X)$; we refer the reader to~\cite{bannai_ito_1984} or~\cite{godsil2016erdos} for further details.
\begin{thm}[\cite{bannai_ito_1984,godsil2016erdos}]
\label{thm_generously_transitive_digraphs_form_association_schemes}
    Let $\X$ be a generously transitive digraph. Then the set $\{R_\omega:\omega\in\Orb(\X)\}$ is a symmetric\footnote{An association scheme $\mathscr{S}$ is \emph{symmetric} if $\mathscr{S}$ consists of symmetric matrices.} association scheme.
\end{thm}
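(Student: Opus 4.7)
The plan is to verify, in turn, the axioms of Definition~\ref{defn_association_scheme} together with the symmetry requirement. Conditions $(s_1)$, $(s_2)$, and symmetry of each $R_\omega$ (which in particular yields $(s_3)$) fall out of generous transitivity almost immediately. Vertex-transitivity---a consequence of generous transitivity, obtained by taking the involution that swaps two given vertices---forces the diagonal $\{(x,x) : x \in \Vset(\X)\}$ to be a single orbital, whose indicator matrix is $I_p$; this gives $(s_1)$. Since orbitals partition $\Vset(\X)^2$, their indicator matrices sum to $J_p$, giving $(s_2)$. For symmetry: given $(x,y) \in \omega$, the involution promised by generous transitivity maps $(x,y)$ to $(y,x)$, forcing $\omega$ to be a symmetric relation, so $R_\omega^T = R_\omega$.

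The substantive step is $(s_4)$; this is where I expect the main technical work to sit. I would recycle the observation already established inside the proof of Proposition~\ref{prop_orbital_matrix_of_balanced_matrix}, namely that $Q_\xi R_\omega Q_\xi^T = R_\omega$ for every $\xi \in \Aut(\X)$ and every $\omega \in \Orb(\X)$. This conjugation invariance is preserved under matrix multiplication:
\begin{align*}
    Q_\xi\,(R_\omega R_{\omega'})\,Q_\xi^T
    \;\;=\;\;
    (Q_\xi R_\omega Q_\xi^T)(Q_\xi R_{\omega'} Q_\xi^T)
    \;\;=\;\;
    R_\omega R_{\omega'}.
\end{align*}
Entry-wise, this says that the $(x,y)$-coordinate of $R_\omega R_{\omega'}$ depends only on the orbital containing $(x,y)$. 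Since $\{R_{\omega''} : \omega'' \in \Orb(\X)\}$ is precisely the basis for the subspace of matrices constant on orbitals, $R_\omega R_{\omega'}$ is automatically a real (hence a fortiori complex) linear combination of the $R_{\omega''}$, which is $(s_4)$.

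The commutativity axiom $(s_5)$ then drops out for free by pairing symmetry with $(s_4)$. Writing $R_\omega R_{\omega'} = \sum_{\omega''} c_{\omega''} R_{\omega''}$ with real coefficients $c_{\omega''}$, we obtain
\begin{align*}
    R_{\omega'} R_\omega
    \;\;=\;\;
    (R_\omega R_{\omega'})^T
    \;\;=\;\;
    \sum_{\omega''} c_{\omega''} R_{\omega''}^T
    \;\;=\;\;
    \sum_{\omega''} c_{\omega''} R_{\omega''}
    \;\;=\;\;
    R_\omega R_{\omega'}.
\end{align*}
The only delicate point I anticipate is in $(s_4)$, where the core task is to rigorously convert the operator identity $Q_\xi M Q_\xi^T = M$ for all $\xi \in \Aut(\X)$ into the assertion that $M$ is orbital-constant---essentially, use that $\Aut(\X)$ acts transitively on each orbital of $\Vset(\X)^2$, so two entries of $M$ indexed by members of the same orbital must coincide. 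Everything else reduces to routine bookkeeping.
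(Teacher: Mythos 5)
Your proof is correct, and for the one substantive axiom it takes a genuinely different route from the paper. The paper only sketches the argument: it notes that $(s_1)$ follows from vertex-transitivity, $(s_2)$ from the partition property, $(s_3)$ (indeed $R_\omega^T=R_\omega$) and $(s_5)$ from generous transitivity, and then \emph{defers $(s_4)$ entirely to the literature}, invoking the Hecke ring of the permutation representation of $\Aut(\X)$ with a pointer to~\cite{bannai_ito_1984,godsil2016erdos}. You instead prove $(s_4)$ directly: reusing the identity $Q_\xi R_\omega Q_\xi^T=R_\omega$ (already computed inside the proof of Proposition~\ref{prop_orbital_matrix_of_balanced_matrix}), you observe that conjugation-invariance is preserved under products (using $Q_\xi^TQ_\xi=I_p$ for the permutation matrix $Q_\xi$, cf.~Lemma~\ref{lem_basic_Q_f}), and that a matrix invariant under all such conjugations is constant on orbitals---which is immediate since orbitals are by definition orbits, so the action is transitive on each of them---hence lies in $\spann\{R_{\omega''}\}$. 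This is the classical Schur/Wielandt ``centralizer algebra'' argument in elementary matrix form; it is equivalent in content to the Hecke ring argument but makes the theorem self-contained within the paper's own toolkit. Your derivation of $(s_5)$ from symmetry plus $(s_4)$ via transposition is also the standard (and correct) way to get commutativity, and is arguably cleaner than the paper's terse claim that $(s_5)$ comes ``directly'' from generous transitivity. The remaining axioms are handled exactly as in the paper's sketch, so there is nothing to fix.
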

We shall refer to the character table of the association scheme $\{R_\omega:\omega\in\Orb(\X)\}$ as the character table of $\X$. Note that this is a $\lvert\Orb(\X)\rvert\times\lvert\Orb(\X)\rvert$ matrix. Recall that our current objective is to decipher the spectral properties of a balanced matrix $M$ from the corresponding orbital matrix $V$. 
The idea is then to consider a new basis for the space of balanced matrices, alternative to the one of Proposition~\ref{prop_orbital_matrix_of_balanced_matrix}, given by the Kronecker product of the orthogonal bases from Fact~\ref{fact_second_bases_Bose_Mesner} for the two schemes of the input digraphs $\X$ and $\A$.
As the next result shows, working in this new basis allows recovering the spectrum of a balanced matrix from the corresponding orbital matrix. Moreover, the character table serves as the dictionary required for the translation.

\begin{thm}
\label{prop_decomposing_spectrum_M_orbital}
    Let $\X$ and $\A$ be generously transitive digraphs, let $M$ be a balanced matrix for $\X,\A$, let $V$ be the orbital matrix of $M$, and let $P$ and $\tilde{P}$ be the character tables of $\X$ and $\A$, respectively. 
    Then the spectrum of $M$ consists of the entries of the matrix $P V \tilde P^T$.
\end{thm}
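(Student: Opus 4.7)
The plan is to pivot from the ``orbital'' basis $\{R_\omega \otimes R_{\tilde\omega}\}$ (available by Proposition~\ref{prop_orbital_matrix_of_balanced_matrix}) to a ``spectral'' basis $\{E_i \otimes \tilde E_j\}$ built from the Hermitian-idempotent bases of the Bose--Mesner algebras of $\X$ and $\A$ guaranteed by Fact~\ref{fact_second_bases_Bose_Mesner} applied to the schemes of Theorem~\ref{thm_generously_transitive_digraphs_form_association_schemes}. The character tables $P$ and $\tilde P$ are precisely the change-of-basis data between these two bases, so after rewriting $M$ in the spectral basis the coefficients will be exactly the entries of $PV\tilde P^T$, and a short argument will identify these coefficients with the eigenvalues of $M$.

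Concretely, I would start from the expansion $M=\sum_{\omega,\tilde\omega} v_{\omega\tilde\omega}\, R_\omega \otimes R_{\tilde\omega}$, substitute $R_\omega=\sum_i p_{i\omega}E_i$ and $R_{\tilde\omega}=\sum_j \tilde p_{j\tilde\omega}\tilde E_j$ from~\eqref{eqn_1819_23_05}, and use bilinearity of $\otimes$ together with a routine index manipulation to obtain
\[
M \;=\; \sum_{i,j}\Bigl(\sum_{\omega,\tilde\omega} p_{i\omega}\,v_{\omega\tilde\omega}\,\tilde p_{j\tilde\omega}\Bigr)\, E_i\otimes \tilde E_j \;=\; \sum_{i,j}(PV\tilde P^T)_{ij}\, E_i\otimes \tilde E_j.
\]

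Next I would verify that $\{E_i \otimes \tilde E_j\}$ is a complete family of pairwise orthogonal Hermitian idempotents on $\mathbb{C}^{pn}$. Hermiticity, idempotency, and the pairwise orthogonality relation $(E_i\otimes \tilde E_j)(E_{i'}\otimes \tilde E_{j'})=O$ unless $(i,j)=(i',j')$ all follow immediately from the mixed-product identity $(A\otimes B)(C\otimes D)=AC\otimes BD$ together with the corresponding properties of the $E_i$ and $\tilde E_j$ given by Fact~\ref{fact_second_bases_Bose_Mesner}. Completeness reduces to showing $\sum_i E_i=I_p$ (and analogously $\sum_j \tilde E_j=I_n$). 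For this, I would note that $I_p=R_{\omega_0}$, with $\omega_0$ the diagonal orbital, lies in the Bose--Mesner algebra, so $I_p=\sum_i c_i E_i$; multiplying both sides by $E_j$ and using Hermitian idempotency yields $c_j=1$ for every $j$, whence $\sum_{i,j}E_i\otimes \tilde E_j=I_{pn}$.

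Finally, writing $W_{ij}=\operatorname{image}(E_i \otimes \tilde E_j)$, the above gives an orthogonal decomposition $\mathbb{C}^{pn}=\bigoplus_{i,j}W_{ij}$, and for every $v\in W_{ij}$ the pairwise orthogonality of the spectral projectors together with the expansion of $M$ yields $Mv=(PV\tilde P^T)_{ij}\, v$. Hence $M$ is diagonalisable and its spectrum is exactly the multiset of entries of $PV\tilde P^T$, the $(i,j)$ entry contributing with multiplicity $\dim W_{ij}=\operatorname{rank}(E_i)\operatorname{rank}(\tilde E_j)$. The whole proof is essentially clean bookkeeping once the right basis for the space of balanced matrices is identified; the only mildly non-routine step is the verification that $\sum_i E_i=I_p$, which is not stated explicitly in the excerpt but follows directly from Fact~\ref{fact_second_bases_Bose_Mesner}.
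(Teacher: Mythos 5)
Your proof is correct and follows essentially the same route as the paper: both pass from the orbital basis to the Kronecker products of the Hermitian idempotents via the character tables, identify the images of $E_i\otimes \tilde E_j$ as eigenspaces with eigenvalues $(PV\tilde P^T)_{ij}$, and show these images span $\mathbb{C}^{pn}$. The only cosmetic differences are that the paper computes $M(E_\sigma\otimes E_{\tilde\sigma})$ directly instead of rewriting $M$ in the idempotent basis, and for completeness it uses the expansion $I_p=\sum_\sigma p_{\sigma\tau}E_\sigma$ (with $\tau$ the diagonal orbital) rather than your observation that $\sum_i E_i=I_p$.
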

\begin{proof}
    Let $\mathscr{E}=\{E_\omega:\omega\in\Orb(\X)\}$ be an orthogonal basis of Hermitian idempotents for the Bose--Mesner algebra of the association scheme corresponding to $\Orb(\X)$, as per Fact~\ref{fact_second_bases_Bose_Mesner}. Also, let $\mathscr{\tilde E}=\{E_{\tilde\omega}:\tilde\omega\in\Orb(\A)\}$ be an orthogonal basis of Hermitian idempotents for the Bose--Mesner algebra for $\Orb(\A)$. Note that $\mathscr{E}$ consists of $p\times p$ matrices, while $\mathscr{\tilde E}$ consists of $n\times n$ matrices (where $p=\lvert\Vset(\X)\rvert$ and $n=\lvert\Vset(\A)\rvert$, as usual).
    Denote the entries of $P$ and $\tilde{P}$ by $p_{ij}$ and $\tilde p_{ij}$, respectively. Using Proposition~\ref{prop_orbital_matrix_of_balanced_matrix}, we can express the balanced matrix $M$ in the basis $\mathscr{R}$ as in~\eqref{eqn_1334_0504}.
    For $\sigma\in\Orb(\X)$ and $\tilde \sigma\in\Orb(\A)$, we find
    \begin{align*}
        M(E_\sigma\otimes E_{\tilde \sigma})
        \spaceeq
        \sum_{\substack{\omega\in\Orb(\X)\\\tilde\omega\in\Orb(\A)}}v_{\omega\tilde\omega}(R_\omega\otimes R_{\tilde\omega})(E_\sigma\otimes E_{\tilde \sigma})
        \spaceeq
        \sum_{\substack{\omega\in\Orb(\X)\\\tilde\omega\in\Orb(\A)}}v_{\omega\tilde\omega}(R_\omega E_\sigma)\otimes (R_{\tilde\omega} E_{\tilde \sigma}).
    \end{align*}
Using~\eqref{eqn_1819_23_05}---and recalling that, by Theorem~\ref{thm_generously_transitive_digraphs_form_association_schemes}, the matrices $R_\omega$ and $R_{\tilde\omega}$ take the roles of the Schur-idempotents for the respective association schemes---we obtain
\begin{align}
\label{eqn_2010_23_05}
\notag
        M(E_\sigma\otimes E_{\tilde \sigma})
        &\spaceeq
        \sum_{\substack{\omega\in\Orb(\X)\\\tilde\omega\in\Orb(\A)}}v_{\omega\tilde\omega}\left(\sum_{\pi\in\Orb(\X)} p_{\pi\omega}E_\pi E_\sigma\right)\otimes \left(\sum_{\tilde\pi\in\Orb(\A)} \tilde p_{\tilde\pi\tilde\omega}E_{\tilde\pi} E_{\tilde\sigma}\right)\\
        \notag
        &\spaceeq
        \sum_{\substack{\omega\in\Orb(\X)\\\tilde\omega\in\Orb(\A)}}v_{\omega\tilde\omega}(p_{\sigma\omega}E_\sigma)\otimes (\tilde p_{\tilde\sigma\tilde\omega}E_{\tilde\sigma})
        \spaceeq
        \sum_{\substack{\omega\in\Orb(\X)\\\tilde\omega\in\Orb(\A)}}v_{\omega\tilde\omega}\,p_{\sigma\omega}\tilde p_{\tilde\sigma\tilde\omega}(E_\sigma\otimes E_{\tilde\sigma})\\
        &\spaceeq
        (\be_{\sigma}^TPV\tilde P^T\be_{\tilde\sigma})(E_\sigma\otimes E_{\tilde\sigma}),
\end{align}
where the second equality follows from the fact that the members of the bases $\mathscr{E}$ and $\mathscr{\tilde E}$ are orthogonal and idempotent.
Consider now, for $\sigma\in\Orb(\X)$ and $\tilde\sigma\in\Orb(\A)$, the complex vector space 
\begin{align*}
    C^{(\sigma,\tilde\sigma)}
    \spaceeq
    (E_\sigma\otimes E_{\tilde\sigma})\mathbb{C}^{pn}
    \spaceeq
    \{(E_\sigma\otimes E_{\tilde\sigma})\bv:\bv\in\mathbb{C}^{pn}\}
    \;\;\subseteq\;\;
    \mathbb{C}^{pn}.
\end{align*}
We claim that
\begin{align}
\label{eqn_2305_2024}
    \mathbb{C}^{pn}
    \spaceeq
    \sum_{\substack{\sigma\in\Orb(\X)\\\tilde\sigma\in\Orb(\A)}} C^{(\sigma,\tilde\sigma)},
\end{align}
where ``$\sum$'' denotes the sum of vector subspaces of $\mathbb{C}^{pn}$.
Using (\ref{s1}), let $\tau\in\Orb(\X)$ and $\tilde\tau\in\Orb(\A)$ be such that $R_\tau=I_p$ and $R_{\tilde\tau}=I_n$. By~\eqref{eqn_1819_23_05}, we have that 
\begin{align*}
    I_p
    \spaceeq
    \sum_{\sigma\in\Orb(\X)}p_{\sigma\tau}E_\sigma
    &&
    \mbox{and}
    &&
    I_n
    \spaceeq
    \sum_{\tilde\sigma\in\Orb(\A)}\tilde p_{\tilde\sigma\tilde\tau}E_{\tilde\sigma}.
\end{align*}
As a consequence, we find that
\begin{align*}
    I_{pn}
    &\spaceeq
    I_p\otimes I_n
    \spaceeq
    \sum_{\substack{\sigma\in\Orb(\X)\\\tilde\sigma\in\Orb(\A)}}p_{\sigma\tau}\tilde p_{\tilde\sigma\tilde\tau}E_\sigma\otimes E_{\tilde\sigma}.
\end{align*}
Given a vector $\bv\in\mathbb{C}^{pn}$, we obtain
\begin{align*}
    \bv
    &\spaceeq
    I_{pn}\bv
    \spaceeq
    \sum_{\substack{\sigma\in\Orb(\X)\\\tilde\sigma\in\Orb(\A)}}p_{\sigma\tau}\tilde p_{\tilde\sigma\tilde\tau}(E_\sigma\otimes E_{\tilde\sigma})\bv
    \;\;\in\;\;
    \sum_{\substack{\sigma\in\Orb(\X)\\\tilde\sigma\in\Orb(\A)}}C^{(\sigma,\tilde\sigma)},
\end{align*}
thus proving the nontrivial inclusion in the claimed identity. It follows from the orthogonality of the idempotents that the sum in~\eqref{eqn_2305_2024} is in fact an orthogonal direct sum.
Furthermore, by~\eqref{eqn_2010_23_05}, each $C^{(\sigma,\tilde\sigma)}$ is an eigenspace for $M$ relative to the eigenvalue $\be_{\sigma}^TPV\tilde P^T\be_{\tilde\sigma}$.
As a consequence,~\eqref{eqn_2305_2024} yields a decomposition of $\mathbb{C}^{pn}$ into eigenspaces\footnote{\label{footnote_complex_vs_real_general_thm}Since the schemes corresponding to $\Orb(\X)$ and $\Orb(\A)$ are symmetric (see~Theorem~\ref{thm_generously_transitive_digraphs_form_association_schemes}), the members of the bases $\mathscr{E}$ and $\mathscr{\tilde E}$ could in fact be chosen to be real, and there would have been no loss of information if we had worked with the real vector spaces $
    (E_\sigma\otimes E_{\tilde\sigma})\mathbb{R}^{pn}$ instead of $
    (E_\sigma\otimes E_{\tilde\sigma})\mathbb{C}^{pn}$.} for $M$, and it follows that the eigenvalues of $M$ are precisely the numbers $\be_{\sigma}^TPV\tilde P^T\be_{\tilde\sigma}$ (i.e., the entries of the matrix $PV\tilde P^T$), with geometric and algebraic multiplicity given by the dimension of $C^{(\sigma,\tilde\sigma)}$. 
\end{proof}

One consequence of Theorem~\ref{prop_decomposing_spectrum_M_orbital} is that, in the new basis for the space of balanced matrices, the \emph{semidefinite-programming} condition $M\succcurlyeq 0$ is transformed into the \emph{linear-programming} condition $PV\tilde P^T\geq 0$. All other conditions making $M$ an $\SDP$-matrix (namely, the conditions (\ref{r1})--(\ref{r5}) and the entrywise nonnegativity) are trivially translated into equivalent (linear-programming) conditions on $V$. Hence, Theorem~\ref{prop_decomposing_spectrum_M_orbital} turns the semidefinite program~\eqref{eqns_SDP} applied to two generously transitive digraphs $\X$ and $\A$ into an equivalent linear program, whose constraints are now in terms of the character tables of $\X$ and $\A$. This is made explicit in the next result.

For a digraph $\X$, we let $\bmu^\X$ be the vector, indexed by the elements of $\Orb(\X)$, whose $\omega$-th entry is $\lvert\omega\rvert$. 
We say that an orbital $\omega$ is the \emph{diagonal orbital} if $R_\omega$ is the identity matrix, and we say that $\omega$ is an \emph{edge orbital} if $\omega\subseteq\Eset(\X)$; \emph{non-diagonal} and \emph{non-edge} orbitals are defined in the obvious way. Notice that the edge orbitals of $\X$ partition $\Eset(\X)$.

\begin{cor}
\label{cor_critierion_acceptance_SDP}
    Let $\X$ and $\A$ be generously transitive digraphs and let $P$ and $\tilde{P}$ be the character tables of $\X$ and $\A$, respectively. Then $\SDP(\X,\A)=\YES$ if and only if there exists a real entrywise-nonnegative $\lvert\Orb(\X)\rvert\times\lvert\Orb(\A)\rvert$ matrix $V$ such that
    \begin{multicols}{2}
    \begin{itemize}
        \item[$(c_1)$\labeltext{$c_1$}{c1}] $P V \tilde P^T\geq 0$;
        \item[$(c_2)$\labeltext{$c_2$}{c2}] $V\bmu^\A=\bone$;
    \end{itemize}
    \end{multicols}
    \vspace{-.5cm}
    \begin{itemize}
        \item[$(c_3)$\labeltext{$c_3$}{c3}] $v_{\omega\tilde\omega}=0$ if $\omega$ is the diagonal orbital of $\X$ and $\tilde\omega$ is a non-diagonal orbital of $\A$;
        \item[$(c_4)$\labeltext{$c_4$}{c4}] $v_{\omega\tilde\omega}=0$
        if $\omega$ is an edge orbital of $\X$ and $\tilde\omega$ is a non-edge orbital of $\A$.
    \end{itemize} 
\end{cor}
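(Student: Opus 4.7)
My plan is to translate the characterisation from Proposition~\ref{prop_matrix_char_SDP_new_form}(i) through the orbital-matrix dictionary. By that proposition, $\SDP(\X,\A)=\YES$ iff there exists a balanced $\SDP$-matrix $M$, and by Proposition~\ref{prop_orbital_matrix_of_balanced_matrix} such balanced matrices are in linear bijection with orbital matrices $V$ via $M=\sum_{\omega,\tilde\omega}v_{\omega\tilde\omega}\,R_\omega\otimes R_{\tilde\omega}$. Note that $M$ is automatically symmetric: generous transitivity of $\X$ and $\A$ forces $R_\omega^T=R_\omega$ and $R_{\tilde\omega}^T=R_{\tilde\omega}$, so each term $R_\omega\otimes R_{\tilde\omega}$ is symmetric. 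The task then reduces to rewriting each defining property of an $\SDP$-matrix as a condition on the coefficients $v_{\omega\tilde\omega}$.

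The two ``global'' conditions are immediate. Entrywise, $M\ge 0$ is equivalent to $V\ge 0$, because the Boolean matrices $R_\omega\otimes R_{\tilde\omega}$ are Schur-orthogonal and their supports together cover every index of $M$. Positive semidefiniteness translates directly to $(\ref{c1})$ by Theorem~\ref{prop_decomposing_spectrum_M_orbital}, since the spectrum of $M$ consists precisely of the entries of $PV\tilde P^T$.

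For the structural conditions, I would compute block entries. The $(x,x)$-block of $M$ equals $\sum_{\tilde\omega}v_{\omega_0\tilde\omega}\,R_{\tilde\omega}$, where $\omega_0$ denotes the diagonal orbital of $\X$; this block is a diagonal matrix iff all contributions with non-diagonal $\tilde\omega$ vanish, giving $(\ref{r1})\iff(\ref{c3})$. For $(\ref{r2})$, a direct calculation yields
\begin{align*}
(\be_x\otimes\be_a)^TM(\be_y\otimes\be_b)
\;\;=\;\;
v_{\omega^\ast\tilde\omega^\ast},
\end{align*}
where $\omega^\ast$ and $\tilde\omega^\ast$ are the unique orbitals containing $(x,y)$ and $(a,b)$ respectively; enforcing vanishing over all edge-of-$\X$/non-edge-of-$\A$ pairs gives exactly $(\ref{c4})$.

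Finally, to handle $(\ref{r3})$--$(\ref{r5})$ I route them through the auxiliary condition $(\ref{r6})$ from Proposition~\ref{prop_equivalence_some_conditions_SDP}. A short computation gives
\begin{align*}
(\be_x\otimes\bone_n)^TM(\be_y\otimes\bone_n)
\;\;=\;\;
\sum_{\tilde\omega\in\Orb(\A)} v_{\omega_{xy}\tilde\omega}\,|\tilde\omega|,
\end{align*}
where $\omega_{xy}$ is the orbital of $(x,y)$ (so the right-hand side depends only on $\omega_{xy}$), so $(\ref{r6})\iff V\bmu^\A=\bone$, i.e.\ $(\ref{c2})$. For the forward direction, Proposition~\ref{prop_equivalence_some_conditions_SDP}(i) gives $(\ref{r3})\textup{--}(\ref{r5})\Rightarrow(\ref{r6})$, yielding $(\ref{c2})$; for the converse, once $M\succcurlyeq 0$ has been secured from $(\ref{c1})$, Proposition~\ref{prop_equivalence_some_conditions_SDP}(ii) turns $(\ref{r6})$ back into $(\ref{r3})$--$(\ref{r5})$. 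The only genuine subtlety is respecting this order of implications: in the backward direction, one must establish positive semidefiniteness first, before appealing to the equivalence with $(\ref{r6})$ to recover the row/column-sum and normalisation conditions.
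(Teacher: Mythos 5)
Your proposal is correct and follows essentially the same route as the paper: combining Propositions~\ref{prop_matrix_char_SDP_new_form} and~\ref{prop_orbital_matrix_of_balanced_matrix}, translating entrywise nonnegativity, positive semidefiniteness (via Theorem~\ref{prop_decomposing_spectrum_M_orbital} and symmetry of the $R_\omega$), and the block conditions (\ref{r1})--(\ref{r2}) into (\ref{c1}), (\ref{c3}), (\ref{c4}), and handling (\ref{r3})--(\ref{r5}) through (\ref{r6}) and Proposition~\ref{prop_equivalence_some_conditions_SDP}. Your explicit remark about establishing $M\succcurlyeq 0$ before invoking part $(ii)$ of Proposition~\ref{prop_equivalence_some_conditions_SDP} is exactly the point implicit in the paper's argument, so nothing is missing.
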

\begin{proof}
    Combining Proposition~\ref{prop_matrix_char_SDP_new_form} with Proposition~\ref{prop_orbital_matrix_of_balanced_matrix}, we find that $\SDP(\X,\A)=\YES$ if and only if there exists a real $\lvert\Orb(\X)\rvert\times\lvert\Orb(\A)\rvert$ matrix $V$ such that, letting
     \begin{align}       
     \label{eqn_25_05_1039}
     M
     \spaceeq
     \sum_{\substack{\sigma\in\Orb(\X)\\\tilde\sigma\in\Orb(\A)}}v_{\sigma\tilde\sigma}\,R_\sigma\otimes R_{\tilde\sigma},
    \end{align}
    $M$ is an $\SDP$-matrix. Hence, we need to show that $M$ is an $\SDP$-matrix precisely when $V$ is entrywise nonnegative and satisfies (\ref{c1})--(\ref{c4}). 

One readily sees from~\eqref{eqn_25_05_1039} that $M\geq 0$ exactly when $V\geq 0$. Since the association schemes corresponding to $\Orb(\X)$ and $\Orb(\A)$ are symmetric by Theorem~\ref{thm_generously_transitive_digraphs_form_association_schemes}, all matrices $R_\sigma$ and $R_{\tilde\sigma}$ are symmetric, and so are their Kronecker products. Hence, by~\eqref{eqn_25_05_1039}, $M$ is symmetric as well.\footnote{Note that $V$ is not required to be symmetric; in fact, $V$ is not square in general.} It follows that $M\succcurlyeq 0$ if and only if its spectrum is nonnegative. By Theorem~\ref{prop_decomposing_spectrum_M_orbital}, this is equivalent to (\ref{c1}).
Consider two orbitals $\omega\in\Orb(\X)$ and $\tilde\omega\in\Orb(\A)$, and let $(x,y)\in\omega$, $(a,b)\in\tilde\omega$. We find
    \begin{align*}
        (\be_x\otimes\be_a)^TM(\be_y\otimes\be_b)
        &\spaceeq
        \sum_{\substack{\sigma\in\Orb(\X)\\\tilde\sigma\in\Orb(\A)}}v_{\sigma\tilde\sigma}(\be_x\otimes\be_a)^T(R_\sigma\otimes R_{\tilde\sigma})(\be_y\otimes\be_b)\\
        &\spaceeq
        \sum_{\substack{\sigma\in\Orb(\X)\\\tilde\sigma\in\Orb(\A)}}v_{\sigma\tilde\sigma}(\be_x^T R_\sigma\be_y)(\be_a^TR_{\tilde\sigma}\be_b)
        \spaceeq
        v_{\omega\tilde\omega}.
    \end{align*}
It follows that (\ref{c3}) and (\ref{c4}) are equivalent to (\ref{r1}) and (\ref{r2}), respectively. Furthermore, given $(x,y)\in\omega\in\Orb(\X)$, we have
\begin{align*}
    (e_x\otimes\bone_n)^TM(\be_y\otimes\bone_n)
    &\spaceeq
    \sum_{(a,b)\in \Vset(\A)^2}(\be_x\otimes\be_a)^TM(\be_y\otimes\be_b)\\
    &\spaceeq
    \sum_{\tilde\omega\in\Orb(\A)}\sum_{(a,b)\in\tilde\omega}(\be_x\otimes\be_a)^TM(\be_y\otimes\be_b)
    \spaceeq
    \sum_{\tilde\omega\in\Orb(\A)}\sum_{(a,b)\in\tilde\omega}v_{\omega\tilde\omega}\\
    &\spaceeq
    \sum_{\tilde\omega\in\Orb(\A)}v_{\omega\tilde\omega}\lvert\tilde\omega\rvert
    \spaceeq
    \be_\omega^TV\bmu^\A.
\end{align*} 
As a consequence, (\ref{c2}) is equivalent to (\ref{r6})---which, by Proposition~\ref{prop_equivalence_some_conditions_SDP}, is equivalent to (\ref{r3}), (\ref{r4}), and (\ref{r5}). 
\end{proof}
In order to prove that $\SDA$ does not solve $\PCSP(\A,\B)$ for any pair of non-bipartite loopless undirected graphs such that $\A\to\B$, thus establishing Theorem~\ref{thm_main_SDA_no_solves_approximate_homomorphism}, 
we seek a \emph{fooling instance}: a digraph $\X$ such that $\SDA(\X,\A)=\YES$ but $\X\not\to\B$.
If we wish to apply Corollary~\ref{cor_critierion_acceptance_SDP} and take advantage of the machinery developed so far for describing the output of $\SDP$, we need both $\X$ and $\A$ to be generously transitive digraphs.
Regarding $\A$, this requirement does not create problems. Indeed, it is not hard to check that
it is enough to establish the result in the case that $\A$ is an odd undirected cycle and $\B$ is a clique. Since cycles happen to be generously transitive, Theorem~\ref{thm_generously_transitive_digraphs_form_association_schemes} does apply; as we shall see, the structure of the scheme for odd cycles also allows dealing with the linear part of SDA. The more challenging part is to come up with a digraph $\X$ that $(i)$ is generously transitive, $(ii)$ is not homomorphic to $\B$ (i.e., is highly chromatic), and $(iii)$ is accepted by $\SDA$. A promising candidate is the class of Kneser graphs, as they $(i)$ are generously transitive and $(ii)$ have unbounded chromatic number (that is easily derived from the parameters of the graphs through a classic result by Lov\'{a}sz~\cite{lovasz1978kneser}).
In the next two sections, we look at the association schemes for Kneser graphs and odd cycles. The task is to collect enough information on their character tables to design an orbital matrix witnessing the fact that $(iii)$ $\SDA(\X,\A)=\YES$.

\section{The Johnson scheme}
\label{sec_johnson_scheme}
Given $s,t\in\N$ such that $s>2t$, the \emph{Kneser graph} $\GG_{s,t}$ is the undirected graph whose vertices are all subsets of $[s]$ of size $t$, and whose edges are all disjoint pairs of such subsets. 
As a consequence of the Erd\H{o}s--Ko--Rado theorem~\cite{bollobas1986combinatorics,godsil2016erdos},
the automorphism group of $\GG_{s,t}$ is isomorphic to the symmetric group $\operatorname{Sym}_s$ consisting of the permutations of $[s]$. 
More precisely, given $f\in\operatorname{Sym}_s$, the corresponding automorphism $\xi_f$ of $\GG_{s,t}$ is given by $\xi_f:\{a_1,\dots,a_t\}\mapsto\{f(a_1),\dots,f(a_t)\}$ for any set $\{a_1,\dots,a_t\}$ of $t$ elements of $[s]$.  
Let $U,V$ be vertices of $\GG_{s,t}$, let $Z=U\cap V$, and label the elements of $U\setminus Z$, $V\setminus Z$, and $Z$ by $\{u_1,\dots,u_{q}\}$, $\{v_1,\dots,v_{q}\}$, and $\{z_1,\dots,z_{t-q}\}$ for some $0\leq q\leq t$. Letting $f\in\operatorname{Sym}_s$ be the permutation that switches each $u_i$ with $v_i$ and is constant over $[s]\setminus(U\cup V\setminus Z)$, we see that $\xi_f(U)=V$ and $\xi_f(V)=U$. This means that $\GG_{s,t}$ is generously transitive and, thus, $\Orb(\GG_{s,t})$ generates an association scheme, which is known as the \emph{Johnson scheme}.

Observe that, for two vertices $U$ and $V$ as above, the orbital of $(U,V)$ contains precisely all pairs of vertices $(U',V')$ such that $\lvert U'\cap V'\rvert=\lvert U\cap V\rvert=t-q$.
Hence, the association scheme corresponding to $\Orb(\GG_{s,t})$ consists of the adjacency matrices of the \emph{generalised Johnson graphs} $\JJ_{s,t,q}$ for $q=0,\dots,t$, where $\JJ_{s,t,q}$ is the graph having the same vertex set as $\GG_{s,t}$, with two vertices being adjacent if and only if their intersection has size $t-q$ (cf.~Figure~\ref{fig_JohnsonGraphs}). 
For $q=t$, $\JJ_{s,t,q}$ is $\GG_{s,t}$; for $q=1$, it is known as the \emph{Johnson graph}, see~\cite[\S~1.6]{godsil2001algebraic}; for $q=0$, it is the disjoint union of $\bin{s}{t}$ loops (whose adjacency matrix is the identity). 
Hence, the diagonal orbital corresponds to $q=0$, while the (unique) edge orbital corresponds to $q=t$.

\begin{figure}
\centering
\hspace{-2.7cm}
\begin{subfigure}{.16\textwidth}
\adjincludegraphics[width=2\linewidth,trim={{.35\width} {.5\width} {.33\width} {.5\width}},clip]{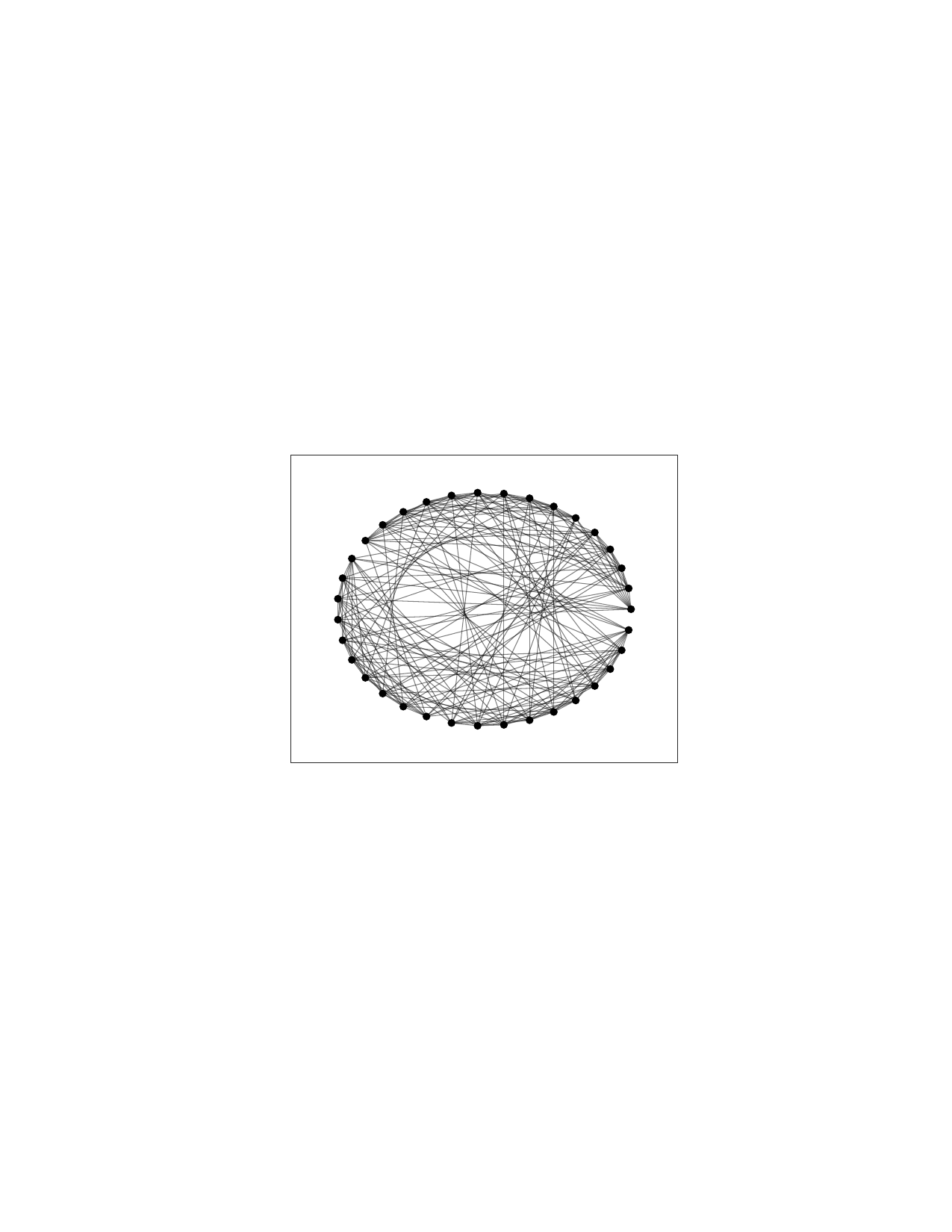}
\end{subfigure}%
\hspace{3cm}
\begin{subfigure}{.16\textwidth}
\adjincludegraphics[width=2\linewidth,trim={{.35\width} {.5\width} {.33\width} {.5\width}},clip]{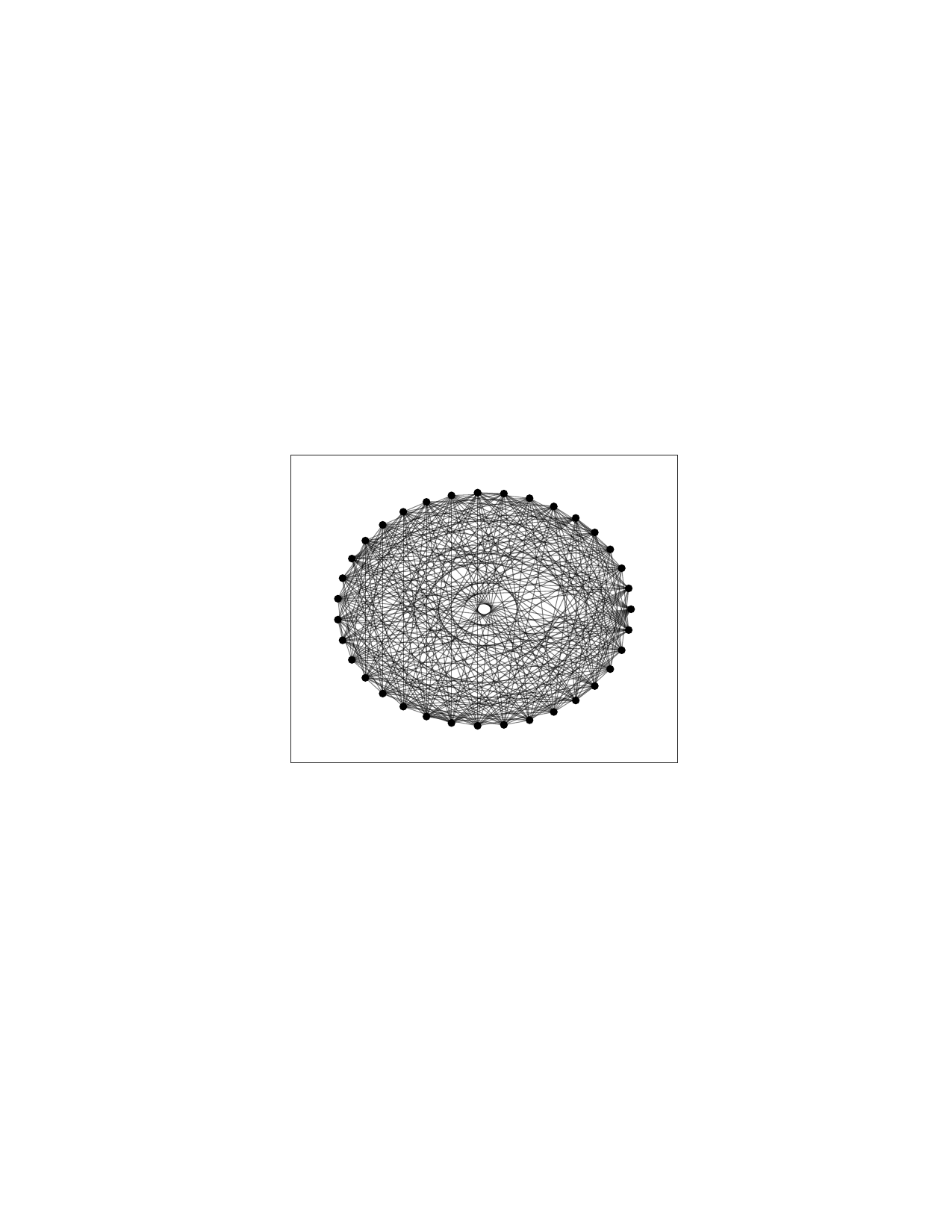}
\end{subfigure}%
\hspace{3cm}
\begin{subfigure}{.16\textwidth}
\adjincludegraphics[width=2\linewidth,trim={{.35\width} {.5\width} {.33\width} {.5\width}},clip]{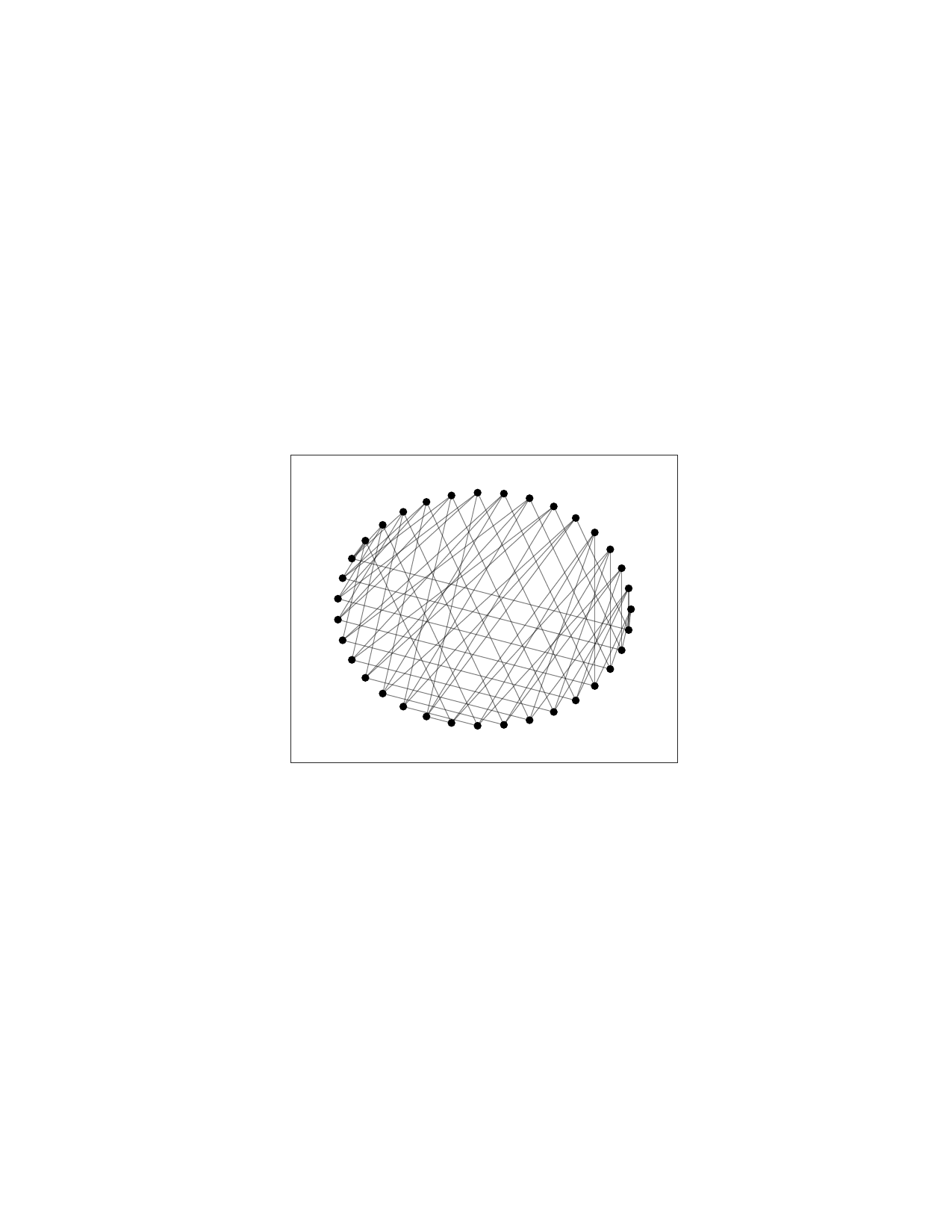}
\end{subfigure}%
\caption{From the left, the generalised Johnson graphs $\JJ_{7,3,1}$, $\JJ_{7,3,2}$, and $\JJ_{7,3,3}$.}
\label{fig_JohnsonGraphs}
\end{figure}

In order to design an orbital matrix witnessing that $\GG_{s,t}$ is accepted by $\SDP$ (and, as we will see, $\SDA$), it shall be useful to gain some insight into the behaviour of the character table of $\GG_{s,t}$ when it is multiplied by column vectors (which, ultimately, will be the columns of the orbital matrix, cf.~Corollary~\ref{cor_critierion_acceptance_SDP}).
We shall see that, if a column vector is interpolated by a polynomial of low degree, multiplying it by the character table yields a vector living in a fixed, low-dimensional subspace of $\R^{t+1}$. This observation leads us to choose
an orbital matrix 
whose nonzero columns are polynomials of degree one (cf.~the proof of Theorem~\ref{thm_main_SDA_no_solves_approximate_homomorphism} in Section~\ref{subsec_lower_bound_on_SDA}). 

Let $\bh$ be the vector $(0,1,\dots,t)$ and, given a univariate polynomial $f\in\R[x]$, let $\bh^f$ be the vector $(f(0),f(1),\dots,f(t))$.
Henceforth, we shall label
the members of the Johnson scheme by $0,1,\dots,t$, where the $q$-th member is $\adj(\JJ_{s,t,q})$. Hence, we label the entries of the character table of $\GG_{s,t}$ and the standard unit vectors $\be_i$ in Theorem~\ref{thm_description_spectral_matrix_Kneser_times_polynomial} accordingly, with indices ranging over $\{0,\dots,t\}$ rather than $\{1,\dots,t+1\}$. A similar labelling shall also be used for the cycle scheme, cf.~Propositions~\ref{prop_description_association_scheme_cycle} and~\ref{prop_fooling_matrices_cycle}.
\begin{thm}
\label{thm_description_spectral_matrix_Kneser_times_polynomial}
Let $s,t\in\N$ with $s>2t$, and let $P$ be the character table of $\GG_{s,t}$. Then
\begin{itemize}
    \item[(i)] $P \bh^f\in\spann(\be_0,\dots,\be_d)$ for any univariate polynomial $f$ of degree $d\leq t$;
\end{itemize}
\vspace{-.2cm}
\begin{multicols}{2}
    \begin{itemize}
        \item[(ii)]$P\bone=\bin{s}{t}\be_0$;
        \item[(iii)]$P\bh=\bin{s}{t}(\frac{st-t^2}{s}\be_0-\frac{st-t^2}{s^2-s}\be_1)$.  
    \end{itemize}
\end{multicols}
\end{thm}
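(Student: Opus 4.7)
The plan is to realise the eigenspaces of the Johnson scheme concretely via \emph{inclusion matrices}. For each $0\leq k\leq t$, let $W_k$ be the $\binom{s}{t}\times\binom{s}{k}$ Boolean matrix whose $(T,S)$-th entry is $1$ iff $S\subseteq T$. A direct count of the $k$-subsets contained in $T\cap T'$ yields
\begin{align*}
    W_kW_k^T\spaceeq\sum_{q=0}^{t}\binom{t-q}{k}\adj(\JJ_{s,t,q}),
\end{align*}
since $\lvert T\cap T'\rvert=t-q$ whenever $(T,T')$ belongs to the $q$-th orbital. A classical fact (see, e.g.,~\cite{godsil2016erdos}) states that, under the assumption $s>2t$, the column span of $W_k$ coincides with the sum of the first $k+1$ eigenspaces of the scheme; equivalently, the image of $W_kW_k^T$ is orthogonal to the image of every idempotent $E_i$ with $i>k$.

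For part $(i)$, the key observation is that $\{\binom{t-q}{k}\}_{k=0}^{d}$ is a basis for the space of univariate polynomials of degree at most $d$ in $q$, as $\binom{t-q}{k}$ has exact degree $k$. Thus, for any $f$ of degree $d\leq t$, writing $f(q)=\sum_{k=0}^{d}c_k\binom{t-q}{k}$ gives the matrix identity
\begin{align*}
    \sum_{q=0}^{t}f(q)\,\adj(\JJ_{s,t,q})\spaceeq\sum_{k=0}^{d}c_k\,W_kW_k^T,
\end{align*}
whose image is contained in the sum of the first $d+1$ eigenspaces. Hence its eigenvalue on the $i$-th eigenspace vanishes for every $i>d$, and Theorem~\ref{prop_decomposing_spectrum_M_orbital} identifies this eigenvalue with $\be_i^TP\bh^f$, yielding $(i)$.

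Parts $(ii)$ and $(iii)$ then follow by direct computation. For $(ii)$, taking $f\equiv 1=\binom{t-q}{0}$ gives $\sum_q\adj(\JJ_{s,t,q})=W_0W_0^T=J_{\binom{s}{t}}$, whose only nonzero eigenvalue $\binom{s}{t}$ lies on the all-ones vector; hence $P\bone=\binom{s}{t}\be_0$. For $(iii)$, the expansion $q=t\binom{t-q}{0}-\binom{t-q}{1}$ yields $\sum_q q\,\adj(\JJ_{s,t,q})=tJ-M$, where $M=W_1W_1^T$ is the intersection matrix $M_{T,T'}=\lvert T\cap T'\rvert$. By $(i)$, $P\bh$ is supported on $\{\be_0,\be_1\}$: I would read off the $\be_0$-coefficient from the action of $tJ-M$ on $\bone$, using the valencies of the Johnson graphs, and extract the $\be_1$-coefficient from $\Tr(M)=t\binom{s}{t}$ after subtracting the $V_0$-contribution, using that $M$ acts as a scalar on the second eigenspace by the irreducibility of the $\Aut(\GG_{s,t})$-action. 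Elementary binomial manipulations (notably, $\binom{s}{t}-\binom{s-1}{t-1}=\binom{s-1}{t}$) then reconcile the resulting quantities with the closed-form expressions in $(iii)$.

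The main obstacle I foresee is the representation-theoretic input identifying the column span of $W_k$ with the direct sum of the first $k+1$ eigenspaces of the scheme; once this is taken as given, the argument reduces to simple linear algebra and binomial arithmetic. An alternative route through closed-form Eberlein polynomial identities is available, but it would obscure the structural reason for the phenomenon---that low-degree polynomials in $q$ are mapped into low-dimensional invariant subspaces that are already encoded, combinatorially, by the inclusion matrices.
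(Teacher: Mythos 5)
Your proof is correct in substance, but it takes a genuinely different route from the paper's. The paper is computational: it starts from Delsarte's explicit Eberlein-polynomial entries $\beta(s,t,q,j)$ of $P$ (Theorem~\ref{thm_entries_spectral_matrix_Kneser}), encodes them in the generating functions $\gamma_{t,j}$, obtains a closed form (Proposition~\ref{prop_generating_function_eigenvalues_Kneser}), and differentiates at $y=1$ to get the triangularity $\vartheta_{s,t,j}(k)=0$ for $k<j$ together with the explicit values at $k=j,j+1$ (Proposition~\ref{prop_P_kneser_polynomial_is_triangular}), from which (i)--(iii) are read off. You instead work with the inclusion matrices $W_k$: the identity $W_kW_k^T=\sum_q\bin{t-q}{k}\adj(\JJ_{s,t,q})$ is right (the $(T,T')$ entry counts the $k$-subsets of $T\cap T'$), the family $\bin{t-q}{k}$ for $0\le k\le d$ does span the polynomials of degree at most $d$ in $q$, and hence $\sum_q f(q)\adj(\JJ_{s,t,q})=\sum_{k\le d}c_kW_kW_k^T$ has image inside $V_0\oplus\dots\oplus V_d$, which forces $\be_i^TP\bh^f=0$ for $i>d$; note that this last identification follows directly from $\adj(\JJ_{s,t,q})=\sum_i p_{iq}E_i$ as in~\eqref{eqn_1819_23_05}, whereas Theorem~\ref{prop_decomposing_spectrum_M_orbital} concerns balanced matrices for a pair of digraphs and is not literally the statement you need (though the computation is identical). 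Your numbers for (ii) and (iii) also check out: $J$ has eigenvalue $\bin{s}{t}$ on $V_0$, $M=W_1W_1^T$ has eigenvalues $\theta_0=t\bin{s-1}{t-1}$ on $V_0$ and $\theta_1=\bin{s-2}{t-1}$ on $V_1$, and $tJ-M$ then gives exactly the coefficients $\bin{s}{t}\frac{st-t^2}{s}$ and $-\bin{s}{t}\frac{st-t^2}{s^2-s}$. The trade-off is this: your route explains structurally why low-degree polynomials in the intersection parameter land in the low part of the eigenspace filtration and replaces all generating-function work by elementary linear algebra, but it imports from the literature the facts that $\operatorname{col}(W_k)=V_0\oplus\dots\oplus V_k$ when $s>2t$, that this filtration labelling agrees with the labelling of the rows of $P$ in Delsarte's formula (this compatibility should be stated explicitly, since the theorem is about that particular $P$), and, for (iii), that $\dim V_1=s-1$ (equivalently, full column rank of $W_1$; one can instead avoid dimension counts by comparing the nonzero spectra of $W_1W_1^T$ and $W_1^TW_1$). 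The paper's proof is self-contained given Delsarte's entry formula and yields, as a by-product, the quantities $\vartheta_{s,t,j}(k)$ for all $k$ via a Stirling-number recursion, which your approach would not produce without additional work.
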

To prove Theorem~\ref{thm_description_spectral_matrix_Kneser_times_polynomial}, we take advantage of the explicit expression for the character table of the Johnson scheme obtained by Delsarte~\cite{delsarte1973algebraic} (see also~\cite[\S~6.5]{godsil2016erdos}) in terms of the \emph{Eberlein polynomials} defined,
for $s,t,q,j\in\N_0$, by\footnote{This expression for $\beta(s,t,q,j)$ is obtained from the one given in~\cite[\S~6.5]{godsil2016erdos} via straightforward manipulations.} 
\begin{align*}
        \beta(s,t,q,j)
    \spaceeq
    \sum_{i=0}^\infty(-1)^{i-q+j}\bin{i}{q}\bin{t-j}{i-j}\bin{s-i-j}{t-j}.
\end{align*}
Here, we are using the conventions that $\bin{x}{y}=0$ unless $0\leq y\leq x$, and that $\bin{0}{0}=1$. In particular, this implies that $\beta(s,t,q,j)=0$ unless $q,j\leq t\leq s$.
\begin{thm}[\cite{delsarte1973algebraic}]
\label{thm_entries_spectral_matrix_Kneser}
    Let $s,t\in\N$ be such that $s>2t$. Then the character table of the Kneser graph $\GG_{s,t}$ is the $(t+1)\times(t+1)$ matrix whose $(j,q)$-th entry, for $j,q\in\{0,\dots,t\}$, is $\beta(s,t,q,j)$.
\end{thm}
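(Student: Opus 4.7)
The plan is to exploit the multiplicity-free decomposition of the permutation representation of $\operatorname{Sym}_s$ on $t$-subsets of $[s]$: this forces the common eigenspaces of the Bose--Mesner algebra of $\GG_{s,t}$ to coincide with the isotypic components, and once these components are pinned down concretely via inclusion maps, the entries $\beta(s,t,q,j)$ emerge from a single counting identity.

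First, I would introduce the \emph{inclusion matrices} $W_j\in\R^{\binom{[s]}{j}\times\binom{[s]}{t}}$ for $0\leq j\leq t$, whose $(A,B)$-entry is $1$ if $A\subseteq B$ and $0$ otherwise. Each $W_j$ intertwines the natural $\operatorname{Sym}_s$-actions on the two sides, so the images $\mathcal{U}_j:=\mathrm{im}(W_j^T)\subseteq\R^{\binom{[s]}{t}}$ form an ascending chain $\{0\}=\mathcal{U}_{-1}\subseteq\mathcal{U}_0\subseteq\mathcal{U}_1\subseteq\cdots\subseteq\mathcal{U}_t=\R^{\binom{[s]}{t}}$ of $\operatorname{Sym}_s$-submodules. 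Setting $V_j:=\mathcal{U}_j\cap\mathcal{U}_{j-1}^\perp$, a rank computation on $W_jW_j^T$ (whose entries depend only on $|A\cap A'|$) together with the hypothesis $s>2t$ yields $\dim V_j=\binom{s}{j}-\binom{s}{j-1}>0$. Young's branching rule then identifies $V_j$ with the Specht module $S^{(s-j,\,j)}$, so the $V_j$'s are pairwise non-isomorphic irreducible $\operatorname{Sym}_s$-submodules decomposing $\R^{\binom{[s]}{t}}$ orthogonally.

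Each adjacency matrix $\adj(\JJ_{s,t,q})$ commutes with the $\operatorname{Sym}_s$-action, since its entries depend only on the intersection size of the indexing $t$-subsets. By Schur's lemma, $\adj(\JJ_{s,t,q})$ therefore acts as a scalar $\lambda_{j,q}$ on $V_j$, and by Fact~\ref{fact_second_bases_Bose_Mesner} together with equation~\eqref{eqn_1819_23_05} these scalars are exactly the entries $p_{jq}$ of the character table of $\GG_{s,t}$. To compute $\lambda_{j,q}$, I would fix a $j$-subset $A_0\subseteq[s]$ and construct an explicit representative $v_j\in V_j$ as an alternating sum of indicator vectors of $t$-sets containing various $i$-subsets of $A_0$ ($0\leq i\leq j$), with coefficients determined by orthogonalising the natural vector $W_j^T e_{A_0}\in\mathcal{U}_j$ against $\mathcal{U}_{j-1}$. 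Applying $\adj(\JJ_{s,t,q})$ to $v_j$ and grouping $t$-subsets $C$ by $|C\cap A_0|$ reduces the eigenvalue computation to counting $t$-subsets at Johnson intersection distance $q$ with a prescribed overlap with $A_0$; this overlap-count is a product of three binomials, and inclusion--exclusion over the subsets of $A_0$ collapses the resulting double sum into Delsarte's closed form $\beta(s,t,q,j)$.

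The main obstacle is this final combinatorial step: pinning down the correct coefficients in $v_j$ and then applying Vandermonde--Chu-type binomial identities to match the expression $\sum_{i}(-1)^{i-q+j}\binom{i}{q}\binom{t-j}{i-j}\binom{s-i-j}{t-j}$ appearing in the statement. The preceding symmetry reduction is standard; the entire content of the theorem---and the reason the Eberlein polynomials appear at all---lies in this last identification.
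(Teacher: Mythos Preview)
The paper does not prove this theorem: it is quoted verbatim from Delsarte~\cite{delsarte1973algebraic} (with a pointer to~\cite[\S~6.5]{godsil2016erdos}) and used as a black box to feed the generating-function argument in Proposition~\ref{prop_generating_function_eigenvalues_Kneser}. So there is no ``paper's own proof'' to compare against.

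That said, your outline is the standard route to the Eberlein polynomials and is sound in structure. The multiplicity-free decomposition $\R^{\binom{[s]}{t}}=\bigoplus_{j=0}^t V_j$ with $V_j\cong S^{(s-j,j)}$ via the inclusion-matrix filtration is exactly what forces the primitive idempotents $E_j$ of the Bose--Mesner algebra to be the orthogonal projections onto the $V_j$, so that the character-table entries are the eigenvalues of $\adj(\JJ_{s,t,q})$ on $V_j$. One practical remark: rather than orthogonalising $W_j^T\be_{A_0}$ against $\mathcal{U}_{j-1}$ and then hitting the result with $\adj(\JJ_{s,t,q})$, it is usually cleaner to compute the intertwining relation $\adj(\JJ_{s,t,q})\,W_j^T=\sum_{\ell\leq j}c_{\ell}\,W_\ell^T\,(\text{something})$ directly; the top coefficient $c_j$ is the eigenvalue on $V_j$ and drops out without first building $v_j$ explicitly. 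Either way, the final binomial bookkeeping you flag is indeed where all the content sits, and it does collapse to the stated alternating sum after a Vandermonde-type simplification.
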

Our strategy consists in associating with the entries of the character table a family of bivariate generating functions (parameterised by $t$ and $j$) defined by
\begin{align}
\label{eqn_defn_gamma_function}
    \gamma_{t,j}(x,y)
    \spaceeq
    \sum_{s,q\in\N_0}\beta(s,t,q,j) x^s y^q.
\end{align}
We now find a closed formula for these generating functions. Henceforth, the range in a summation shall always be meant to be $\N_0$ unless otherwise specified.

\begin{prop}
\label{prop_generating_function_eigenvalues_Kneser}
The identity
    \begin{align*}
        \gamma_{t,j}(x,y)
        \spaceeq
        x^{t+j}(1-x)^{j-t-1}(1-y)^j(1-x+xy)^{t-j}
    \end{align*}
    holds for each  $t,j\in\N_0$ and $x,y\in\R$ such that $j\leq t$ and $-1<x<1$.
\end{prop}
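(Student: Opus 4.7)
The plan is to prove the identity by a direct computation: substitute the definition of $\beta(s,t,q,j)$ into the generating function and carry out the three nested sums one at a time, each via a standard binomial identity. The convergence hypothesis $-1<x<1$ enters precisely at the step that requires the negative binomial series in $x$.

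First I would insert the definition of $\beta$ into $\gamma_{t,j}$ and interchange the order of summation, treating $(s,q,i)$ as a triple of nonnegative-integer indices, to obtain
\begin{align*}
\gamma_{t,j}(x,y) = \sum_{i} (-1)^{i+j}\binom{t-j}{i-j}\left(\sum_{q}\binom{i}{q}(-y)^q\right)\left(\sum_{s}\binom{s-i-j}{t-j}x^s\right).
\end{align*}
The absolute convergence needed to justify the swap of summations follows from the constraint $-1<x<1$ and the fact that, for fixed $t$ and $j$, only finitely many values of $i$ and $q$ make the relevant binomial coefficients nonzero.

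Next I would evaluate the two inner sums. The $q$-sum is just the binomial theorem, yielding $(1-y)^i$. For the $s$-sum, a shift of index $s=k+i+j$ combined with the standard negative-binomial identity $\sum_{k\geq 0}\binom{k}{m}x^k=x^m/(1-x)^{m+1}$ (valid for $|x|<1$) gives $x^{t+i}(1-x)^{-(t-j+1)}$. Substituting these two expressions back produces
\begin{align*}
\gamma_{t,j}(x,y) = \frac{x^t}{(1-x)^{t-j+1}}\sum_{i\geq j}(-1)^{i+j}\binom{t-j}{i-j}(1-y)^i\, x^i.
\end{align*}

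Finally I would reindex the last sum via $m=i-j$, pull out the factors $(1-y)^j$ and $x^j$, and recognise what remains as the binomial expansion of $(1-x(1-y))^{t-j}=(1-x+xy)^{t-j}$. Collecting all the factors yields exactly $x^{t+j}(1-x)^{j-t-1}(1-y)^j(1-x+xy)^{t-j}$, as claimed. I expect no conceptual obstacle here: everything reduces to the binomial theorem once the sums are untangled. The only delicate point is bookkeeping of signs and of the ranges of $i$ (noting that $\binom{t-j}{i-j}=0$ for $i<j$ and for $i>t$, so the $i$-sum is automatically finite and the formal manipulations are legitimate), together with invoking the convergence hypothesis $|x|<1$ exactly once, in the $s$-sum.
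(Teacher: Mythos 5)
Your proposal is correct and follows essentially the same route as the paper's proof: substitute the definition of $\beta$, evaluate the $q$-sum by the binomial theorem, the $s$-sum by the negative binomial series $\sum_n \binom{n}{a}x^n = x^a/(1-x)^{a+1}$ (the one place $\lvert x\rvert<1$ is used), and finish with a reindexed binomial expansion in $i$; the only difference is cosmetic (you factor the $q$- and $s$-sums in parallel, while the paper performs them sequentially). No gaps.
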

\begin{proof}
We use the well-known identity
    \begin{align*}
        \sum_{n}\bin{n}{a}x^n\spaceeq\frac{x^a}{(1-x)^{a+1}},
    \end{align*}
    valid for $a\in\N_0$ in the disk $-1<x<1$ (see~\cite[\S~5.4]{Graham94:concrete}).
    In particular, this implies, for $b\in\N_0$,
    \begin{align}
    \label{eqn_1904_1210}
        \sum_{n} \bin{n-b}{a}x^n
        &\spaceeq
        \sum_{n}\bin{n}{a}x^{n+b}
        \spaceeq
        x^b \sum_{n}\bin{n}{a}x^n
        \spaceeq
        \frac{x^{a+b}}{(1-x)^{a+1}}.
    \end{align}
We find
{\allowdisplaybreaks
    \begin{align*}
        \gamma_{t,j}(x,y)
        &\spaceeq
        \sum_{s,q,i} (-1)^{i-q+j}\bin{i}{q}\bin{t-j}{i-j}\bin{s-i-j}{t-j}x^sy^q\\
        &\spaceeq
        \sum_{q,i}(-1)^{i-q+j}\bin{i}{q}\bin{t-j}{i-j}y^q\sum_s \bin{s-i-j}{t-j}x^s\\
        &\equationeq{eqn_1904_1210}
        \sum_{q,i}(-1)^{i-q+j}\bin{i}{q}\bin{t-j}{i-j}y^qx^{t+i}(1-x)^{j-t-1}\\
        &\spaceeq
        (1-x)^{j-t-1}\sum_i\bin{t-j}{i-j}x^{t+i}\sum_q (-1)^{i-q+j}\bin{i}{q}y^q\\
        &\spaceeq
        (1-x)^{j-t-1}\sum_i\bin{t-j}{i-j}x^{t+i}(-1)^j(y-1)^i\\
        &\spaceeq
        (1-x)^{j-t-1}\sum_i\bin{t-j}{i-j}x^{t+i}(-1)^{i-j}(1-y)^i\\
        &\spaceeq
        x^{t+j}(1-x)^{j-t-1}(1-y)^j\sum_i\bin{t-j}{i-j}x^{i-j}(-1)^{i-j}(1-y)^{i-j}\\
        &\spaceeq
        x^{t+j}(1-x)^{j-t-1}(1-y)^j\sum_i\bin{t-j}{i-j}(xy-x)^{i-j}\\
        &\spaceeq
        x^{t+j}(1-x)^{j-t-1}(1-y)^j\sum_i\bin{t-j}{i}(xy-x)^{i}\\
        &\spaceeq
        x^{t+j}(1-x)^{j-t-1}(1-y)^j(1-x+xy)^{t-j},
    \end{align*}
}
    as required.
\end{proof}

Theorem~\ref{thm_description_spectral_matrix_Kneser_times_polynomial} is then proved by expressing the entries of the vector $P\bh^f$ in terms of partial derivatives of the generating functions $\gamma_{t,j}$, and by finding analytic expressions for these partial derivatives through Proposition~\ref{prop_generating_function_eigenvalues_Kneser}. 
In particular, the quantity
\begin{align*}
    \vartheta_{s,t,j}(k)
    \spaceeq
    \sum_{q}q^k\beta(s,t,q,j)
\end{align*}
shall be crucial in the following. Observe that we can view it as the $j$-th entry of the vector obtained by multiplying the character table of $\GG_{s,t}$ by a vector whose $q$-th entry is $q^k$.
It is possible to isolate $\vartheta_{s,t,j}(k)$ by differentiating the generating functions $\gamma_{t,j}$. Using the closed formula for $\gamma_{t,j}$ we just obtained, one can then deduce some useful identities.

\begin{prop}
\label{prop_P_kneser_polynomial_is_triangular}
Let $s,t,q,j,k\in\N_0$. Then
\begin{itemize}
    \item[(i)] $\vartheta_{s,t,j}(k)=0$ if $k<j$; 
    \item[(ii)]
    $\vartheta_{s,t,j}(j)=(-1)^jj!\bin{s-2j}{t-j}$;
    \item[(iii)]
    $\vartheta_{s,t,j}(j+1)=(-1)^j(j+1)!\bin{s-2j}{t-j}\left(t-\frac{j}{2}-\frac{(t-j)^2}{s-2j}\right)$.
\end{itemize}
\end{prop}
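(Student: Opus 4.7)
The plan is to compute $\vartheta_{s,t,j}(k)$ by reading it off as a coefficient of the generating function $\gamma_{t,j}$ from~\eqref{eqn_defn_gamma_function}. Applying the operator $y\partial_y$ brings down a factor of $q$ from each monomial $y^q$, so iterating $k$ times and evaluating at $y=1$ yields
\begin{align*}
\vartheta_{s,t,j}(k) \spaceeq [x^s]\,(y\partial_y)^k\, \gamma_{t,j}(x,y)\big|_{y=1}.
\end{align*}
By Proposition~\ref{prop_generating_function_eigenvalues_Kneser}, we may write $\gamma_{t,j}(x,y) = (1-y)^j \phi(x,y)$, where $\phi(x,y) = x^{t+j}(1-x)^{j-t-1}(1-x+xy)^{t-j}$ is analytic at $y=1$. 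The factor $(1-y)^j$ is the engine driving all three parts of the statement.

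Using Stirling numbers of the second kind, $(y\partial_y)^k = \sum_{m=0}^k S(k,m)\, y^m\partial_y^m$, so evaluation at $y=1$ expresses the quantity above as a linear combination of the ordinary derivatives $\partial_y^m\gamma_{t,j}\big|_{y=1}$ for $m\leq k$. Because $\gamma_{t,j}$ has a zero of order $j$ at $y=1$, these derivatives vanish for $m<j$, which immediately proves part~(i).

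For parts~(ii) and~(iii), only $m\in\{j,j+1\}$ can give a nonzero contribution. A short Taylor expansion in the variable $u=y-1$ produces
\begin{align*}
\partial_y^j\gamma_{t,j}\big|_{y=1} \spaceeq (-1)^j\, j!\,\phi(x,1),\qquad \partial_y^{j+1}\gamma_{t,j}\big|_{y=1} \spaceeq (-1)^j\,(j+1)!\,\phi_y(x,1),
\end{align*}
and from the closed form one reads off $\phi(x,1)=x^{t+j}(1-x)^{j-t-1}$ and $\phi_y(x,1)=(t-j)\,x^{t+j+1}(1-x)^{j-t-1}$. Extracting $[x^s]$ via the standard series $(1-x)^{-(t+1-j)}=\sum_n\bin{n+t-j}{n}x^n$ yields $\bin{s-2j}{t-j}$ and $\bin{s-2j-1}{t-j}$, respectively. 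Since $S(j,j)=1$, part~(ii) follows at once. For~(iii), combining $S(j+1,j+1)=1$ with $S(j+1,j)=\bin{j+1}{2}$ produces
\begin{align*}
\vartheta_{s,t,j}(j+1) \spaceeq (-1)^j\,(j+1)!\left[\tfrac{j}{2}\bin{s-2j}{t-j}+(t-j)\bin{s-2j-1}{t-j}\right].
\end{align*}

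The only genuinely delicate step is matching this with the claimed form in~(iii). Using the identity $\bin{s-2j-1}{t-j}=\bin{s-2j}{t-j}\cdot\frac{s-t-j}{s-2j}$ to pull out the common factor $\bin{s-2j}{t-j}$, the required equality reduces to $(t-j)(s-t-j)+(t-j)^2=(t-j)(s-2j)$, which is immediate from $s-2j=(s-t-j)+(t-j)$. I expect no further obstacle: the analytic step is a direct application of Proposition~\ref{prop_generating_function_eigenvalues_Kneser}, the vanishing in~(i) is forced by the $(1-y)^j$ factor, and the final binomial arithmetic collapses to a one-line algebraic identity.
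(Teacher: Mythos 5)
Your proposal is correct, and it runs on the same engine as the paper's proof: the closed form of the generating function from Proposition~\ref{prop_generating_function_eigenvalues_Kneser}, the order-$j$ vanishing at $y=1$ coming from the factor $(1-y)^j$, and coefficient extraction in $x$ via the standard binomial series. The difference is purely in the bookkeeping that relates $q^k$ to derivatives: you apply the Euler operator $(y\partial_y)^k=\sum_m S(k,m)\,y^m\partial_y^m$ (Stirling numbers of the second kind), which expresses $\vartheta_{s,t,j}(k)$ \emph{directly} as a combination of $\partial_y^m\gamma_{t,j}\vert_{y=1}$ with $m\le k$; the paper instead differentiates plainly, obtains $k!\binom{q}{k}=\sum_i a^{(k)}_i q^i$ (signed Stirling numbers of the first kind), and then has to invert the resulting triangular relation~\eqref{eqn_2004_1804} by induction on $k$, using the explicit value $a^{(j+1)}_j=-\frac{j^2+j}{2}$ for part~(iii). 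Your dual formulation buys a proof of (i) with no induction and replaces the paper's inversion step by the single value $S(j+1,j)=\binom{j+1}{2}$; the final binomial identity you reduce to, $(t-j)(s-t-j)+(t-j)^2=(t-j)(s-2j)$, is indeed immediate, and I checked it matches the stated form of (iii). Two cosmetic points to make this fully airtight, both at the same level of care as the paper: dispose of the case $j>t$ separately (there $\beta(s,t,q,j)=0$, so $\gamma_{t,j}\equiv 0$ and all three claims are trivial; Proposition~\ref{prop_generating_function_eigenvalues_Kneser} is only stated for $j\le t$), and note that the termwise differentiation in $y$ and the subsequent extraction of $[x^s]$ are justified for $-1<x<1$ exactly as in the paper's analyticity argument.
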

\begin{proof}
Notice that the result is trivially true if $j>t$, as in this case
$\beta(s,t,q,j)=0$.
Differentiating the polynomial $\gamma_{t,j}$ as defined in~\eqref{eqn_defn_gamma_function} $k$ times with respect to the variable $y$ yields
\begin{align}
\label{eqn_1719_1904}
    \frac{\partial^k\gamma_{t,j}}{\partial y^k}
        &\spaceeq
        \sum_{s,q}\beta(s,t,q,j)x^sk!\bin{q}{k}y^{q-k}.
\end{align}
Observe that $k!\bin{q}{k}$ is a polynomial in $q$ of degree $k$. Hence, we can find coefficients $a^{(k)}_0,\dots,a^{(k)}_k$ for which $k!\bin{q}{k}=\sum_{i=0}^k a^{(k)}_iq^i$; in particular, we have $a^{(k)}_k=1$.
Evaluating~\eqref{eqn_1719_1904} at $y=1$ yields
\begin{align}
\label{eqn_1549_1904_a}
\notag
    \left.\frac{\partial^k\gamma_{t,j}}{\partial y^k}\right\vert_{y=1}
        &\spaceeq
        \sum_{s,q}\beta(s,t,q,j)x^sk!\bin{q}{k}
        \spaceeq
        \sum_{s,q}\beta(s,t,q,j)x^s\sum_{i=0}^k a^{(k)}_iq^i\\
        &\spaceeq
        \sum_{s}x^s\sum_{i=0}^k a^{(k)}_i \sum_q q^i\beta(s,t,q,j)
        \spaceeq
        \sum_{s}x^s\sum_{i=0}^k a^{(k)}_i \vartheta_{s,t,j}(i).
\end{align}
We now make use of Proposition~\ref{prop_generating_function_eigenvalues_Kneser} to get an alternative expression for the object above. If $-1<x<1$, applying the Leibniz rule for differentiation, we obtain
    \begin{align*}
        \frac{\partial^k\gamma_{t,j}}{\partial y^k}
        &\spaceeq
        x^{t+j}(1-x)^{j-t-1}\sum_{\ell=0}^k\bin{k}{\ell}\frac{\partial^{\ell}((1-y)^j)}{\partial y^{\ell}}\;\frac{\partial^{k-\ell}((1-x+xy)^{t-j})}{\partial y^{k-\ell}}.
    \end{align*}
    Notice that, for $0\leq\ell\leq k$, 
    \begin{align*}
        \left.\frac{\partial^{\ell}((1-y)^j)}{\partial y^\ell}\right\vert_{y=1}
        &\spaceeq
        \left\{
        \begin{array}{llll}
             (-1)^jj!&\mbox{ if }\ell=j  \\
             0&\mbox{ otherwise} 
        \end{array}
        \right.
        \quad\quad\quad\mbox{and}\\
        \left.\frac{\partial^{k-\ell}((1-x+xy)^{t-j})}{\partial y^{k-\ell}}\right\vert_{y=1}
        &\spaceeq
        (k-\ell)!\bin{t-j}{k-\ell}x^{k-\ell}.
    \end{align*}
    If $k\geq j$,
    it follows that, over the disk $-1<x<1$,
    \begin{align}
    \label{eqn_2004_1712}
    \notag
        \left.\frac{\partial^k\gamma_{t,j}}{\partial y^k}\right\vert_{y=1}
        &\spaceeq
        x^{t+j}(1-x)^{j-t-1}\bin{k}{j}(-1)^jj!(k-j)!\bin{t-j}{k-j}x^{k-j}\\
        \notag
        &\spaceeq
        x^{t+k}(1-x)^{j-t-1}k!(-1)^j\bin{t-j}{k-j}\\
        \notag
        &\spaceeq
        \sum_\ell (-1)^j k!\bin{t-j}{k-j}\bin{t-j+\ell}{\ell}x^{\ell+t+k}\\
        &\spaceeq
        \sum_s x^s(-1)^j k!\bin{t-j}{k-j}\bin{s-j-k}{t-j}.
    \end{align}
In fact,~\eqref{eqn_2004_1712} also holds if $k<j$, as both terms in the equality are zero in that case.
Using that $\gamma_{t,j}$ is an analytic function, we can then compare~\eqref{eqn_1549_1904_a} and~\eqref{eqn_2004_1712} equating the coefficients. This yields the identity
\begin{align}
\label{eqn_2004_1804}
    \sum_{i=0}^k a^{(k)}_i \vartheta_{s,t,j}(i)
    \spaceeq
    (-1)^j k!\bin{t-j}{k-j}\bin{s-j-k}{t-j}.
\end{align}

If $k<j$, the right-hand side of~\eqref{eqn_2004_1804} is zero.
Recalling that $a^{(k)}_k=1$,
    it is then immediate to conclude  by induction over $k$ that $\vartheta_{s,t,j}(k)=0$, thus establishing $(i)$.

If $k=j$, using $(i)$ we find that the left-hand side of~\eqref{eqn_2004_1804} is $\vartheta_{s,t,j}(j)$, while the right-hand side is $(-1)^jj!\bin{s-2j}{t-j}$, thus establishing $(ii)$.

If $k=j+1$, using $(i)$, $(ii)$, and the fact that $a^{(j+1)}_{j}
        =
        -\frac{j^2+j}{2}$, we find that the left-hand side of~\eqref{eqn_2004_1804} is
\begin{align*}
    a^{(j+1)}_j\vartheta_{s,t,j}(j)+a^{(j+1)}_{j+1}\vartheta_{s,t,j}(j+1)
    &\spaceeq
    -\frac{j^2+j}{2}(-1)^jj!\bin{s-2j}{t-j}+\vartheta_{s,t,j}(j+1),
\end{align*}
while the right-hand side is
\begin{align*}
    (-1)^j(j+1)!(t-j)\bin{s-2j-1}{t-j}
    \spaceeq
    (-1)^j(j+1)!\bin{s-2j}{t-j}\left(t-j-\frac{(t-j)^2}{s-2j}\right).
\end{align*}
Equating the two sides gives the expression in $(iii)$ for $\vartheta_{s,t,j}(j+1)$.
\end{proof}

\begin{rem}
    We observe that the equation~\eqref{eqn_2004_1804} yields the following recursive identity satisfied by the quantities $\vartheta_{s,t,j}(k)$:
    \begin{align*}
        \vartheta_{s,t,j}(k)
        \spaceeq
        -\sum_{i=j}^{k-1}a^{(k)}_i\vartheta_{s,t,j}(i)+(-1)^j k!\bin{t-j}{k-j}\bin{s-j-k}{t-j}.
    \end{align*}
    The numbers $a^{(k)}_i$ are the Stirling numbers of the first kind, see~\cite{MR2868112}.
\end{rem}

We can now prove Theorem~\ref{thm_description_spectral_matrix_Kneser_times_polynomial}.
Recall that $\bh=(0,1,\dots,t)$ and $\bh^f=(f(0),f(1),\dots,f(t))$ for a univariate polynomial $f\in\R[x]$ (where, as usual, we interpret tuples as column vectors); if $f$ is the monomial given by $f(x)=x^k$, we denote $\bh^f$ by $\bh^k$. 
We now establish that multiplying the character table of the Johnson scheme by $\bh^f$ yields a vector with the property that the entries with index bigger than the degree of $f$ are zero.\footnote{In accordance with the labelling of the orbitals of $\GG_{s,t}$ and of the entries of the character table, the entries of vectors of length $t+1$ shall have indices ranging in $\{0,\dots,t\}$.} 
This fact is particularly useful when $f$ has a low degree, as in this case all but the first few entries in the resulting vector are zero. Using parts $(ii)$ and $(iii)$ of Proposition~\ref{prop_P_kneser_polynomial_is_triangular}, we are able to find the expressions for the nonzero coefficients in the case that the degree is zero or one.

\begin{proof}[Proof of Theorem~\ref{thm_description_spectral_matrix_Kneser_times_polynomial}]
Let $f$ be a polynomial of degree $d\leq t$, and write it as $f(x)=\sum_{k=0}^da_kx^k$ for some coefficients $a_k$. We obtain $\bh^f=\sum_{k=0}^d a_k\bh^k$. Hence,
for each $j\in\{0,\dots,t\}$, we have
    \begin{align}
    \label{eqn_2004_1036}
    \notag
        \be_j^T P \bh^f
        &\spaceeq
        \sum_{k=0}^da_k\,\be_j^TP\bh^k
        \spaceeq
        \sum_{k=0}^d a_k\sum_{q=0}^tq^k\be_j^TP\be_q
        \spaceeq
        \sum_{k=0}^d a_k\sum_{q=0}^tq^k\beta(s,t,q,j)\\
        &\spaceeq
        \sum_{k=0}^d a_k\,\vartheta_{s,t,j}(k).
    \end{align}
If $j>d$, it follows from Proposition~\ref{prop_P_kneser_polynomial_is_triangular}$(i)$ that the quantity in~\eqref{eqn_2004_1036} is zero, which means that $P\bh^f\in\spann(\be_0,\dots,\be_d)$, thus proving $(i)$.

Note that $\bone=\bh^0$, so by $(i)$ $P\bone$ is a scalar multiple of $\be_0$. From~\eqref{eqn_2004_1036}, we find that 
\begin{align*}
    \be_0^T P\bone
    \spaceeq
    \vartheta_{s,t,0}(0)
    \propparteq{prop_P_kneser_polynomial_is_triangular}{(ii)}
    \bin{s}{t},
\end{align*}
thus proving $(ii)$.

Similarly, since $\bh=\bh^1$,
by $(i)$ we have that $P\bh\in\spann(\be_0,\be_1)$; the coefficients are given by
\begin{align*}
    \be_0^TP\bh
    &\equationeq{eqn_2004_1036}
    \vartheta_{s,t,0}(1)
    \propparteq{prop_P_kneser_polynomial_is_triangular}{(iii)}
    \bin{s}{t}\left(t-\frac{t^2}{s}\right)
    \quad\quad\quad\mbox{ and }\\
    \be_1^TP\bh
    &\equationeq{eqn_2004_1036}
    \vartheta_{s,t,1}(1)
    \propparteq{prop_P_kneser_polynomial_is_triangular}{(ii)}
    -\bin{s-2}{t-1}
    \spaceeq
    -\frac{st-t^2}{s^2-s}\bin{s}{t}
    ,
\end{align*}
which proves $(iii)$.
\end{proof}

\section{The cycle scheme}
\label{sec_cycle_scheme}
Consider the undirected cycle $\C_n$ with $n$ vertices, where $n\geq 3$.
It is well known that $\Aut(\C_n)$ is the dihedral group of order $2n$, consisting of all rotations and reflections of the cycle. Any pair of distinct vertices is switched by a suitable reflection, so $\C_n$ is generously transitive. Hence, the orbitals of $\C_n$ form an association scheme.
The automorphism group of any graph $\X$ acts isometrically on $\Vset(\X)^2$, in the sense that $\dist(\xi(x),\xi(y))=\dist(x,y)$ for any $x,y\in\Vset(\X)$ and any $\xi\in\Aut(\X)$.
In addition, the structure of the dihedral group implies that two pairs $(x,y)$ and $(x',y')$ of vertices of $\C_n$ lie in the same orbital whenever $\dist(x,y)=\dist(x',y')$.
Hence, if $n=2m+1$ is an odd integer, there are exactly $m+1$ orbitals---one for each possible distance between two vertices in the cycle. We can thus write $\Orb(\C_n)=\{\omega_0,\dots,\omega_m\}$, with $\omega_j=\{(x,y)\in \Vset(\C_n)^2:\dist(x,y)=j\}$. Note that $R_{\omega_0}=I_n$ and $R_{\omega_1}=\adj(\C_n)$. In other words, $\omega_0$ and $\omega_1$ are the diagonal orbital and the (unique) edge orbital, respectively. 
Each orbital has size $2n$ except $\omega_0$, which has size $n$.
Instead of providing a complete description of the character table of $\C_n$, it shall be enough for our purposes to derive one property using the Perron--Frobenius theorem. We say that a real entrywise-nonnegative square matrix $M$ is \emph{primitive} if $M^c$ is entrywise positive for some power $c\in\N$.
\begin{thm}[Perron--Frobenius theorem~\cite{perron1907,frobenius1912}]
\label{thm_perron_frobenius}
Let $M$ be a primitive matrix. Then $M$ has a unique eigenvalue $\rho$ associated with an entrywise-nonnegative eigenvector. Moreover, $\rho$ is a simple eigenvalue\footnote{I.e., it has algebraic multiplicity one.}, it is real and positive, and $\lvert\lambda\rvert<\rho$ for each other eigenvalue $\lambda$ of $M$. 
\end{thm}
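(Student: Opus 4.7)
The plan is to prove the statement by first establishing the result for strictly positive matrices (the original theorem of Perron) and then leveraging primitivity to lift it to the stated generality. The tool I would use throughout is the Collatz--Wielandt variational characterisation of the spectral radius combined with compactness and monotonicity arguments.

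First, I would assume $M>0$ entrywise. Define
\begin{align*}
    \rho \;\;=\;\; \sup\left\{\, r\geq 0 \,:\, \exists\, \bx\geq 0,\, \bx\neq\bzero,\ M\bx\geq r\bx \,\right\}.
\end{align*}
Normalising $\bx$ to the standard simplex and invoking continuity and compactness, the supremum is attained by some $\bv\geq 0$ with $\bv\neq\bzero$ satisfying $M\bv\geq\rho\bv$. I would then argue $M\bv=\rho\bv$: if the inequality were strict on some coordinate, applying $M$ once more (and using $M>0$) would produce a nonnegative vector $M\bv$ with $M(M\bv)>\rho(M\bv)$ entrywise, contradicting maximality of $\rho$. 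Because $M>0$ and $\bv$ is nonnegative but nonzero, $M\bv$ is strictly positive, so $\bv>0$ and $\rho>0$.

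Next I would establish the three remaining properties for $M>0$. For \emph{uniqueness of the nonnegative eigenvector} and \emph{simplicity}: given any second eigenvector $\bw$ for $\rho$, consider $\bw-t\bv$ for $t$ chosen as the largest scalar such that $\bw-t\bv\geq 0$; this vector is a nonnegative eigenvector for $\rho$ with at least one zero coordinate, contradicting strict positivity of Perron eigenvectors unless $\bw-t\bv=\bzero$. Algebraic simplicity follows via a Jordan-block argument: a generalised eigenvector $\bu$ with $(M-\rho I)\bu=\bv$ would, paired with the left Perron eigenvector $\bv^\ast$ (the Perron vector of $M^T$), yield the contradiction $0=\bv^{\ast T}(M-\rho I)\bu=\bv^{\ast T}\bv>0$. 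For the \emph{strict spectral gap} $\lvert\lambda\rvert<\rho$: if $M\bx=\lambda\bx$, taking entrywise absolute values and using the triangle inequality gives $M\lvert\bx\rvert\geq\lvert\lambda\rvert\lvert\bx\rvert$, so $\lvert\lambda\rvert\leq\rho$. If equality held, maximality would force $M\lvert\bx\rvert=\rho\lvert\bx\rvert$, so by uniqueness $\lvert\bx\rvert$ is a positive multiple of $\bv$; chasing the equality condition in the triangle inequality (entries of $M$ strictly positive) forces all entries of $\bx$ to have the same complex phase, hence $\lambda=\rho$.

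Finally, I would lift to the primitive case. Let $c\in\N$ be such that $M^c>0$ entrywise. The eigenvalues of $M^c$ are exactly the $c$-th powers of the eigenvalues of $M$ (with algebraic multiplicities respected). Applying the positive case to $M^c$ gives a simple real positive eigenvalue $\rho'$ with a strictly positive eigenvector $\bv$ and $\lvert\mu\rvert<\rho'$ for every other eigenvalue $\mu$ of $M^c$. Setting $\rho=(\rho')^{1/c}$, I would verify that $\bv$ is an eigenvector of $M$ with eigenvalue $\rho$: the vector $M\bv$ satisfies $M^c(M\bv)=\rho'(M\bv)$ and is nonnegative (as $M\geq 0$ and $\bv>0$), so by uniqueness of the Perron eigenvector of $M^c$ it must be a nonnegative scalar multiple of $\bv$; that scalar is necessarily $\rho$. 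Simplicity of $\rho$ for $M$ follows because any higher multiplicity would inflate the multiplicity of $\rho'=\rho^c$ for $M^c$. The strict inequality $\lvert\lambda\rvert<\rho$ for every other eigenvalue $\lambda$ of $M$ is immediate from $\lvert\lambda^c\rvert<\rho^c$.

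The main obstacle is the strict separation step in the positive case: showing $\lvert\lambda\rvert<\rho$ rather than merely $\lvert\lambda\rvert\leq\rho$, which is exactly where the equality case of the triangle inequality must be exploited and where the hypothesis $M>0$ (equivalently, primitivity of the original $M$) is genuinely used---mere nonnegativity or irreducibility would permit eigenvalues on the spectral circle.
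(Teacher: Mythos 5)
The paper offers no proof of this statement to compare against: it is the classical Perron--Frobenius theorem, quoted with citations to Perron and Frobenius and used as a black box (its only use is in the proof of Proposition~\ref{prop_description_association_scheme_cycle}). Your argument is the standard Wielandt/Collatz--Wielandt proof, and it is essentially correct: the variational definition of $\rho$, attainment of the supremum, the bootstrapping step showing $M\bv=\rho\bv$, strict positivity of $\bv$, geometric simplicity via the ``subtract the largest multiple of $\bv$'' trick, algebraic simplicity via pairing a putative Jordan chain with the left Perron vector, the equality case of the triangle inequality for the strict gap $\lvert\lambda\rvert<\rho$, and the lift from $M^c>0$ to the primitive $M$ (including the multiplicity bookkeeping under the spectral mapping $\lambda\mapsto\lambda^c$) are all sound. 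Two small points. First, a cosmetic one: the Collatz--Wielandt function is not continuous on the whole simplex, so rather than ``continuity and compactness'' you should take a sequence $r_k\to\rho$ with normalised witnesses $\bx_k$ and pass to a convergent subsequence, using that the constraint $M\bx\geq r\bx$ is closed under limits; this is exactly what your argument needs and costs nothing. Second, a genuine (though easily repaired) omission: the statement asserts that $\rho$ is the \emph{unique eigenvalue} admitting an entrywise-nonnegative eigenvector, whereas you only prove uniqueness of the eigenvector attached to $\rho$. You never rule out some other eigenvalue $\lambda$ having an eigenvector $\bx\geq 0$, $\bx\neq\bzero$; the strict gap alone does not do this. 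The fix uses a tool you already introduce: $M^T$ is primitive whenever $M$ is, so it has a strictly positive Perron vector $\bw$ with $M^T\bw=\rho\bw$, and pairing gives $\rho\,\bw^T\bx=\bw^TM\bx=\lambda\,\bw^T\bx$ with $\bw^T\bx>0$, forcing $\lambda=\rho$. With that one line added, your proof establishes the full statement.
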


\noindent Like in the case of the Johnson scheme, it shall be convenient to let the indices of the entries in the character table of $\C_n$ range over $\{0,\dots,m\}$ rather than $\{1,\dots,m+1\}$.

\begin{prop}
\label{prop_description_association_scheme_cycle}
    Let $n\geq 3$ be an odd integer,
    and let $P$ be the character table of $\C_{n}$. Then $P\be_0=\bone$, while $P\be_1$ contains exactly one entry equal to $2$, and all other entries are strictly smaller than $2$ in absolute value.
\end{prop}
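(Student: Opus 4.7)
The plan is to unpack the expansion $S_j=\sum_i p_{ij}E_i$ from~\eqref{eqn_1819_23_05} at $j=0$ and $j=1$, using the basis $\mathscr{E}=\{E_0,\dots,E_m\}$ of orthogonal Hermitian idempotents guaranteed by Fact~\ref{fact_second_bases_Bose_Mesner}.

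First, I would prove that $P\be_0=\bone$. Since $R_{\omega_0}=I_n$, the expansion reads $I_n=\sum_i p_{i0}E_i$. Multiplying both sides by $E_k$ on the right and using $E_iE_k=\delta_{ik}E_i$ gives $E_k=p_{k0}E_k$, hence $p_{k0}=1$ for every $k\in\{0,\dots,m\}$. A useful by-product is that $\sum_i E_i=I_n$; the $E_i$'s are therefore a system of mutually orthogonal Hermitian projectors whose images decompose $\R^n$ into a direct sum.

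For $P\be_1$, the expansion $\adj(\C_n)=R_{\omega_1}=\sum_i p_{i1}E_i$ is then a genuine spectral decomposition: each $E_i$ projects onto a subspace of $\R^n$ on which $\adj(\C_n)$ acts as the scalar $p_{i1}$, and (by the resolution of the identity above) the set of entries $\{p_{i1}:0\le i\le m\}$ exhausts the spectrum of $\adj(\C_n)$, with the algebraic multiplicity of every eigenvalue $\lambda$ equal to $\sum_{i:\,p_{i1}=\lambda}\rank(E_i)$. I would then apply Perron--Frobenius: since $n$ is odd, $\C_n$ is connected and non-bipartite, so some power of $\adj(\C_n)$ is entrywise positive and hence $\adj(\C_n)$ is primitive; moreover, the all-one vector $\bone_n$ is an entrywise-nonnegative eigenvector with eigenvalue $2$. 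Theorem~\ref{thm_perron_frobenius} then ensures that $2$ is a \emph{simple} eigenvalue of $\adj(\C_n)$ and that $|\lambda|<2$ for every other eigenvalue $\lambda$. Combined with the multiplicity identity above, the value $2$ is taken by $p_{i1}$ for exactly one index $i$, and $|p_{i1}|<2$ for all other indices, which is precisely the conclusion.

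In this argument the only delicate point is the passage from the character-table identity $\adj(\C_n)=\sum_i p_{i1}E_i$ to the statement that each $p_{i1}$ is an eigenvalue of $\adj(\C_n)$ with multiplicity tracked by $\rank(E_i)$; once the resolution of the identity $\sum_i E_i=I_n$ obtained in the first step is invoked, this reduces to a standard spectral-theory observation, and the Perron--Frobenius part is then immediate.
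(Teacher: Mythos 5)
Your proposal is correct and follows essentially the same route as the paper: expand $R_{\omega_0}=I_n$ and $R_{\omega_1}=\adj(\C_n)$ in the idempotent basis $\mathscr{E}$, conclude that the entries of $P\be_0$, $P\be_1$ are exactly the eigenvalues of these matrices (the paper gets the exhaustion of the spectrum via the decomposition $\mathbb{C}^n=\sum_\ell E_\ell\mathbb{C}^n$, you via the resolution of the identity $\sum_i E_i=I_n$ — the same thing), and then apply Perron--Frobenius to the primitive matrix $\adj(\C_n)$ with Perron eigenvalue $2$ and eigenvector $\bone$. Your explicit multiplicity bookkeeping via $\rank(E_i)$ is a slightly more detailed justification of the "exactly one entry equal to $2$" claim, but the argument is the same.
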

\begin{proof}
    Let $\mathscr{E}=\{E_0,\dots,E_m\}$ be an orthogonal basis of idempotents for the association scheme of $\Orb(\C_n)$ as per Fact~\ref{fact_second_bases_Bose_Mesner}, where $m=\frac{n-1}{2}$. Since $\mathscr{E}$ is a basis, all of its members are different from the zero matrix, so for each $\ell\in\{0,\dots,m\}$ there exists $\bv^{(\ell)}\in\mathbb{C}^n$ such that $E_\ell\bv^{(\ell)}\neq\bzero$. Using the idempotency of the basis, we find, for each $j\in\{0,\dots,m\}$,
    \begin{align}
    \label{eqn_2905_1243}
        R_{\omega_j}E_\ell\bv^{(\ell)}
        \spaceeq
        \sum_{i=0}^m p_{i j }E_i E_\ell\bv^{(\ell)}
        \spaceeq
        p_{\ell j}E_\ell\bv^{(\ell)}.
    \end{align}
This means that $p_{\ell j}$ is an eigenvalue of $R_{\omega_j}$ for each $\ell,j$. Choosing $j=0$ and recalling that $R_{\omega_0}=I_n$, it follows that $P\be_0=\bone$. Reasoning as in the proof of Theorem~\ref{prop_decomposing_spectrum_M_orbital}, we find that $\mathbb{C}^n=\sum_{\ell=0}^m E_\ell\mathbb{C}^n$, which, by~\eqref{eqn_2905_1243},
yields a simultaneous decomposition of $\mathbb{C}^n$ into eigenspaces\footnote{Using that the scheme $\Orb(\C_n)$ is symmetric, we could as well have considered a real eigenspace decomposition, cf.~Footnote~\ref{footnote_complex_vs_real_general_thm}.} of $R_{\omega_j}$ for each $\omega_j\in\Orb(\C_n)$. As a consequence, all eigenvalues of $R_{\omega_j}$ appear as entries of the vector $P\be_j$. 
Since $n$ is odd, given two vertices $x$ and $y$ of $\C_n$, we can always find a path connecting $x$ to $y$ whose length is even; clearly, any such path can be extended to a walk of length exactly $n-1$ connecting $x$ to $y$. This means that the matrix $(\adj(\C_n))^{n-1}$ is entrywise positive, thus showing that $R_{\omega_1}=\adj(\C_n)$ is a primitive matrix. Note that $R_{\omega_1}\bone=2\cdot\bone$. By Theorem~\ref{thm_perron_frobenius}, $\rho=2$ is the spectral radius of $R_{\omega_1}$, it is a simple eigenvalue, and all other eigenvalues of $R_{\omega_1}$ have strictly smaller absolute value. This yields the desired description for $P\be_1$.
\end{proof}

\begin{rem}
    The orbitals of $\C_n$ form an association scheme also in the case that $n$ is even. However, the description of $P\be_1$ in Proposition~\ref{prop_description_association_scheme_cycle} fails to be true in that case, the reason being that the adjacency matrix of an even cycle is not primitive. In fact, it is well known that the spectrum of the adjacency matrix of a bipartite graph is symmetric around $0$ (see~\cite[Prop.~8.2]{MR1271140}). As a consequence, there are two entries of $P\be_1$ whose values are $2$ and $-2$. Ultimately, this slight difference is able to break the whole argument in the proof of Theorem~\ref{thm_main_SDA_no_solves_approximate_homomorphism} if one tries to replace odd cycles with even cycles. This is a good  sanity check, as we know that $\SDP$ does solve $\CSP(\K_2)$ and, thus, $\CSP(\A)$ for any undirected bipartite graph $\A$~\cite{Barto16:sicomp} (where we denote by $\CSP(\A)$ the CSP parameterised by $\A$; i.e., $\CSP(\A)=\PCSP(\A,\A)$).
\end{rem}

Theorem~\ref{thm_description_spectral_matrix_Kneser_times_polynomial} and Proposition~\ref{prop_description_association_scheme_cycle} contain the instructions needed to design an orbital matrix corresponding to a balanced $\SDP$-matrix---which shall be concretely done in the proof of Theorem~\ref{thm_main_SDA_no_solves_approximate_homomorphism}. 
In order to show that Kneser graphs are fooling instances for $\SDA$ applied to the approximate graph homomorphism problem, this $\SDP$-matrix should be augmented with a suitable $\AIP$-matrix, as prescribed by Proposition~\ref{prop_matrix_char_SDP_new_form}.
The key to make the augmentation possible is to assign to each member of the cycle scheme an integral matrix whose support is included in the corresponding orbital, in a way that the row- and column-sum vectors are equal and constant over the whole scheme. The next result shows that such an assignment does exist.
\begin{prop}
\label{prop_fooling_matrices_cycle}
    For any odd integer $n\geq 3$ there exists a function $f:\Orb(\C_n)\to\Z^{n\times n}$ such that $\supp(f(\omega))\;\subseteq\; \omega$ and $f(\omega)\bone=f(\omega)^T\bone=\be_0$ for each $\omega\in\Orb(\C_n)$.
\end{prop}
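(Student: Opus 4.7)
The plan is to handle the diagonal orbital $\omega_0$ via the trivial choice $f(\omega_0) = \be_0\be_0^T$, which has support $\{(0,0)\}\subseteq\omega_0$ and satisfies $f(\omega_0)\bone = f(\omega_0)^T\bone = \be_0$; and to construct each non-diagonal $f(\omega_j)$ for $1 \leq j \leq m$ by placing alternating $\pm 1$ entries along the closed walk of ``jumps of size $j$'' in $\Z/n\Z$.

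For fixed $j \in \{1,\dots,m\}$, let $t = n/\gcd(j,n)$; then $t$ is an odd divisor of $n$ (and $t \geq 3$, since $1 \leq j \leq m < n$). Consider the vertex sequence $v_i = ij \bmod n$ for $i = 0,1,\dots,t$, so that $v_0 = v_t = 0$ while $v_0,\dots,v_{t-1}$ are pairwise distinct. Each consecutive pair $v_i,v_{i+1}$ differs by $j$ modulo $n$ and hence lies at distance $j$ in $\C_n$ (using $j \leq m$), so $(v_i,v_{i+1}) \in \omega_j$. I would then define $f(\omega_j)$ to contain $+1$ at position $(v_i,v_{i+1})$ for each even $i \in \{0,2,\dots,t-1\}$, $-1$ at position $(v_{i+1},v_i)$ for each odd $i \in \{1,3,\dots,t-2\}$, and $0$ everywhere else. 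By construction $\supp(f(\omega_j)) \subseteq \omega_j$, and the placed positions are pairwise distinct because the $v_i$ are.

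To verify the row- and column-sum conditions, observe that since $t$ is odd the sequence of signs both begins and ends with $+1$. Row $0 = v_0$ acquires only a single $+1$ (at step $i=0$), so it sums to $1$; column $0 = v_t$ acquires only a single $+1$ (at step $i=t-1$), so it sums to $1$. For each $1 \leq k \leq (t-1)/2$, the row $v_{2k}$ is used at exactly two consecutive steps, $2k-1$ (sign $-1$) and $2k$ (sign $+1$), hence sums to $0$; symmetrically, for each $0 \leq k \leq (t-3)/2$, the column $v_{2k+1}$ is used at steps $2k$ and $2k+1$ with opposite signs. Rows and columns whose indices do not appear among $v_0,\dots,v_{t-1}$ carry no nonzero entries at all. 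The one point that requires attention --- and where the hypothesis on the parity of $n$ enters --- is the oddness of $t$, which guarantees that the walk closes with a $+1$ landing in column $0$ rather than a $-1$ polluting row $0$; this is immediate from $t$ being a divisor of the odd integer $n$.
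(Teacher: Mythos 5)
Your construction is correct and is essentially the paper's own proof in different clothing: the walk $v_i = ij \bmod n$ of length $t = n/\gcd(j,n)$ is exactly the (odd) cycle through vertex $0$ in the $2$-regular graph with edge set $\omega_j$ that the paper extracts, and your alternating $\pm 1$ placement reproduces the paper's alternating orientation with $+1$ on $S_+$ and $-1$ on $S_-$. The verification of the row/column sums, including the role of oddness of $n$ (hence of $t$) in closing the walk with a $+1$ into row and column $0$, matches the paper's argument.
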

\begin{proof}
Let $\{0,\dots,n-1\}$ be the vertex set of $\C_n$, and take $\omega\in\Orb(\C_n)$. If $\omega$ is the diagonal orbital, we let $f(\omega)=\be_0\be_0^T$, which clearly satisfies the requirements. If $\omega$ is not the diagonal orbital, the description of $\Orb(\C_n)$ given at the beginning of this section implies that there exists $j\in [m]$ such that $\omega=\{(x,y)\in \{0,\dots,n-1\}^2:\dist(x,y)=j\}$. For each vertex $x\in \{0,\dots,n-1\}$, there exist exactly two vertices $y\neq y'\in \{0,\dots,n-1\}$ such that $\dist(x,y)=\dist(x,y')=j$. In other words, $\omega$ is the edge set of an undirected graph $\HH_\omega$ all of whose vertices have degree two. It follows that $\HH_\omega$ is the disjoint union of $n_\omega$ undirected cycles, for some $n_\omega\in\N$. Let $\C$ be one of these cycles and let $x$ be a vertex belonging to $\C$.
The length of $\C$ is the minimum $\ell\in\N$ such that $x+\ell j=x$ $\mod n$. Since this quantity does not depend on $x$, we deduce that all cycles in $\HH_\omega$ have the same length $\ell_\omega\geq 3$. As $n_\omega \ell_\omega =n$, it follows in particular that $\ell_\omega$ is odd. Choose as $\C$ the $\ell_\omega$-cycle in $\HH_\omega$ containing the vertex $0$,
and relabel the vertices of $\C$ as $0=0',1',\dots,(\ell_\omega-1)'$, in the natural order.
Consider the oriented graph $\tilde\C$ obtained by setting an alternating orientation for each edge of $\C$, starting from $(0',1')$; i.e., 
\begin{align*}
\Eset(\tilde\C)=\{(0',1'),(2',1'),(2',3'),(4',3'),(4',5'),\dots,((\ell_\omega-1)',(\ell_\omega-2)'),((\ell_\omega-1)',0')\}.
\end{align*}
Let $S_+=\{(0',1'),(2',3'),\dots,((\ell_\omega-1)',0')\}$ and $S_-=\Eset(\tilde\C)\setminus S_+$. We define $f(\omega)$ as the $n\times n$  matrix whose $(x,y)$-th entry is $1$ if $(x,y)\in S_+$, $-1$ if $(x,y)\in S_-$, and $0$ otherwise. Notice that 
\begin{align*}
\supp(f(\omega))
\spaceeq
S_+\cup S_-
\spaceeq
\Eset(\tilde\C)\;\;\subseteq\;\;\Eset(\C)\;\;\subseteq\;\;\Eset(\HH_\omega)\spaceeq\omega.
\end{align*}
With the exception of $0$, each vertex of $\C_n$ either is the tail of exactly two directed edges in $\tilde\C$, whose contributions in $f(\omega)$ have opposite signs, or it is not the tail of any directed edge in $\tilde\C$. As for $0=0'$, it is the tail of exactly one directed edge, whose contribution is $+1$. The same statement is true if we replace ``tail'' with ``head''. As a consequence, we find that $f(\omega)\bone=f(\omega)^T\bone=\be_0$, thus concluding the proof.
\end{proof}

\section{A lower bound against SDA}
\label{subsec_lower_bound_on_SDA}
We now have all the ingredients for proving the main result of the paper.

\begin{thm*}
[Theorem~\ref{thm_main_SDA_no_solves_approximate_homomorphism} restated]
Let $\A,\B$ be non-bipartite loopless undirected graphs such that $\A\to\B$. Then
$\SDA$ does not solve $\PCSP(\A,\B)$.
\end{thm*}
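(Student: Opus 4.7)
The plan is to construct a loopless digraph $\X$ satisfying $\SDA(\X,\A)=\YES$ but $\X\not\to\B$. Since $\A$ is non-bipartite, there is an odd integer $n\geq 3$ with $\C_n\to\A$ (take $n$ to be the odd girth of $\A$); since $\B$ is non-bipartite and loopless, its chromatic number $K$ is finite. By the monotonicity of $\SDA$ (Proposition~\ref{prop_monotonicity_SDP_SDA_wrt_homomorphisms}), it suffices to exhibit a loopless $\X$ with $\SDA(\X,\C_n)=\YES$ and $\chi(\X)>K$ (as then $\X\not\to\B$). I would take $\X$ to be a Kneser graph $\GG_{s,t}$, which is generously transitive by the discussion in Section~\ref{sec_johnson_scheme} and has $\chi(\GG_{s,t})=s-2t+2$ by Lov\'asz's theorem. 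The concrete choice of $s,t$ will be dictated by the $\SDP$ computation: I will need $s\leq t(1+\sec(\pi/n))$ for $\SDP$ positivity and $s-2t+2>K$ for the chromatic lower bound, which is compatible provided $t$ is large enough that $t(\sec(\pi/n)-1)>K-2$, a condition that can be met since $\sec(\pi/n)>1$ for $n\geq 3$.

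Since both $\GG_{s,t}$ and $\C_n$ are generously transitive, $\SDP(\GG_{s,t},\C_n)=\YES$ is, by Corollary~\ref{cor_critierion_acceptance_SDP}, equivalent to the existence of an orbital matrix $V\in\R^{(t+1)\times(m+1)}$ (with $n=2m+1$) satisfying $(c_1)$--$(c_4)$. Following the hint at the end of Section~\ref{sec_johnson_scheme}, I would take the $j$-th column of $V$ to be $\bh^{f_j}$ for linear polynomials $f_j$, choosing $f_0(x)=(t-x)/(nt)$, $f_1(x)=x/(2nt)$, and $f_j\equiv 0$ for $j\geq 2$. The vanishings $f_0(t)=0$, $f_1(0)=0$, and $f_j\equiv 0$ yield $(c_3)$ and $(c_4)$; entrywise nonnegativity of $V$ is clear; and $f_0+2f_1\equiv 1/n$ yields $(c_2)$ via $V\bmu^{\C_n}=\bone$. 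For $(c_1)$, Theorem~\ref{thm_description_spectral_matrix_Kneser_times_polynomial}$(i)$ places each $P\bh^{f_j}$ in $\spann(\be_0,\be_1)$, so $PV$ has only its first two rows nonzero; parts $(ii)$ and $(iii)$ give those rows explicitly, yielding $PV\be_0=\alpha\be_0+\gamma\be_1$ and $PV\be_1=\beta\be_0+\delta\be_1$ with $\alpha/\beta=2t/(s-t)$ and $\delta=-\gamma/2$. Combined with Proposition~\ref{prop_description_association_scheme_cycle} (first column of $\tilde P$ is $\bone$, and entries of its second column lie in $[-2\cos(\pi/n),2]$), condition $(c_1)$ reduces to $\alpha+\beta\tilde P_{k,1}\geq 0$ and $\gamma(1-\tilde P_{k,1}/2)\geq 0$ for every $k$; the latter is automatic, and the former reduces to $s\leq t(1+\sec(\pi/n))$, exactly as assumed.

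For the $\AIP$-matrix $N$ required by Proposition~\ref{prop_matrix_char_SDP_new_form}$(ii)$, I would set each diagonal block and each ``generic'' (non-diagonal, non-edge) block of $N$ to $\be_0\be_0^T$, and each edge block to $f(\omega_1)$, where $f$ is the function supplied by Proposition~\ref{prop_fooling_matrices_cycle} for the cycle scheme of $\C_n$. Both $\be_0\be_0^T$ and $f(\omega_1)$ have row and column sums equal to $\be_0$, so $(r_3)$ and $(r_4)$ hold with common row/column-sum vector $\bone_p\otimes\be_0$; $(r_1)$ holds since $\be_0\be_0^T=\diag(\be_0)$ is diagonal; $(r_2)$ holds since $\supp(f(\omega_1))\subseteq\omega_1=\Eset(\C_n)$; and $(r_5)$ follows from $\bone^T\be_0=1$ summed over the $p^2$ blocks. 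Integrality of $N$ is manifest. The refinement condition $N\circ((I_p+\adj(\X))\otimes J_n)\triangleleft M$ reduces to checking the diagonal entries $(U,U;0,0)$ (for which $M_{(U,U;0,0)}=v_{0,0}=1/n>0$) and the edge entries $(U,V;a,b)$ with $(U,V)\in\Eset(\GG_{s,t})$ and $(a,b)\in\Eset(\C_n)$ (for which $M_{(U,V;a,b)}=v_{t,1}=1/(2n)>0$), both of which are satisfied.

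The main technical obstacle is the tightness of $(c_1)$: the constraint $v_{t,0}=0$, forced because the diagonal orbital of $\C_n$ is a non-edge orbital of $\C_n$, pins down the ratio $\alpha/\beta=2t/(s-t)$, so the positivity of $\alpha+\beta\tilde P_{k,1}$ must be delicately played off against the extremal eigenvalue $-2\cos(\pi/n)$ of $\adj(\C_n)$, giving rise to the sharp inequality $s\leq t(1+\sec(\pi/n))$. The combined strength of Theorem~\ref{thm_description_spectral_matrix_Kneser_times_polynomial} (which collapses $PV$ to essentially a two-dimensional problem) and Proposition~\ref{prop_description_association_scheme_cycle} (which uses Perron--Frobenius to localise the extremal entry of $\tilde P\be_1$) is essential here. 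As remarked in Section~\ref{sec_cycle_scheme}, the failure of Proposition~\ref{prop_description_association_scheme_cycle} for even cycles, where $\adj(\C_{2m})$ admits $-2$ as an eigenvalue and the bound collapses, provides a reassuring sanity check in view of the known solvability of bipartite-target CSPs by $\SDP$.
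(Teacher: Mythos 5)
Your proposal is correct and follows essentially the same route as the paper's proof: the same fooling pair (a Kneser graph $\GG_{s,t}$ against an odd cycle $\C_n$ embedded in $\A$), the very same orbital matrix (your columns $\bh^{f_0},\bh^{f_1},0,\dots,0$ coincide with the paper's $V=WK$), the same verification of (\ref{c1})--(\ref{c4}) via Theorem~\ref{thm_description_spectral_matrix_Kneser_times_polynomial} and Proposition~\ref{prop_description_association_scheme_cycle}, and the same $\AIP$-matrix built blockwise from Proposition~\ref{prop_fooling_matrices_cycle} together with monotonicity (Proposition~\ref{prop_monotonicity_SDP_SDA_wrt_homomorphisms}). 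The only (harmless) deviation is in the parameter bookkeeping: you invoke the explicit extremal eigenvalue $-2\cos(\pi/n)$ of $\adj(\C_n)$ to get the sharp threshold $s\leq t(1+\sec(\pi/n))$, whereas the paper only uses the qualitative bound of Proposition~\ref{prop_description_association_scheme_cycle} via a rational $\delta<2$ and sets $s=2t/\delta+t$.
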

\begin{proof}
Observe that, for $\A$ and $\B$ as in the statement of the theorem,
there exist $n,n'\geq 3$ with $n$ odd such that $\C_n\to\A$ and $\B\to\K_{n'}$. (For example, we may choose $n$ and $n'$ as the odd girth of $\A$ and the chromatic number of $\B$, respectively.)
Let $m=\frac{n-1}{2}$, 
and let $\tilde P$ be the character table of the association scheme corresponding to $\Orb(\C_n)$. 
By Proposition~\ref{prop_description_association_scheme_cycle}, there exists $0<\delta<2$ such that, up to a permutation of the rows, $\tilde P\be_0=\bone$ and
$\tilde P\be_1=
    \begin{bmatrix}
        2\\
        \bz
    \end{bmatrix}$
for some vector $\bz\in\R^m$ all of whose entries have absolute value strictly smaller than $\delta$. Without loss of generality, we can assume that $\delta$ is rational. Let $t\in\N$ be such that $t\geq \frac{2n'}{2-\delta}$ and $\frac{t}{\delta}\in\N$, and let $s=\frac{2t}{\delta}+t$. Observe that $s>2t$. We claim that $\SDA(\GG_{s,t},\C_n)=\YES$. Since, as shown in Proposition~\ref{prop_monotonicity_SDP_SDA_wrt_homomorphisms}, $\SDA$ is monotone with respect to the homomorphism preorder of the arguments, this would imply that $\SDA(\GG_{s,t},\A)=\YES$. However, using Lov\'{a}sz's formula for the chromatic number of Kneser graphs~\cite{lovasz1978kneser}, we find
\begin{align*}
    \chi(\GG_{s,t})
    \spaceeq
    s-2t+2
    \spaceeq
    \frac{2t}{\delta}+t-2t+2
    \spaceeq
    \frac{t(2-\delta)}{\delta}+2
    \;\;\geq\;\;
    \frac{2n'}{\delta}+2
    \;\;>\;\;
    n'+2.
\end{align*}
This means that $\GG_{s,t}\not\to\K_{n'}$ and, hence, $\GG_{s,t}\not\to\B$. As a consequence, the truth of the claim would establish that $\SDA$ does not solve $\PCSP(\A,\B)$, thus concluding the proof of the theorem.

Let $P$ be the character table of $\GG_{s,t}$, and recall that $\bh$ denotes the vector $(0,1,\dots,t)$ (which, as usual, we view as a column vector).
Consider the matrices
\begin{align*}
    W=\begin{bmatrix}
        \bone-\frac{1}{t}\bh & \frac{1}{t}\bh & O_{t+1,m-1}
    \end{bmatrix}\in\R^{(t+1)\times (m+1)},\quad\quad
    K=\frac{1}{2n}\diag(2,1,\dots,1)\in\R^{(m+1)\times (m+1)},
\end{align*}
and $V=WK$. We now show that $V$ meets the conditions in Corollary~\ref{cor_critierion_acceptance_SDP} and, thus, it is the orbital matrix for a balanced $\SDP$-matrix.
Recall that the diagonal orbitals of $\GG_{s,t}$ and $\C_n$ are those having index $0$, while the (unique) edge orbitals of $\GG_{s,t}$ and $\C_n$ are those having index $t$ and $1$, respectively. Since $v_{i,j}=0$ whenever $i=0, j\neq 0$ or $i=t,j\neq 1$, the conditions (\ref{c3}) and (\ref{c4}) are satisfied. Observe that $\bmu^{\C_n}$ is the vector $2n\bone-n\be_0$.
Therefore, $V\bmu^{\C_n}=WK\bmu^{\C_n}=W\bone=\bone$, so (\ref{c2}) holds, too.
Theorem~\ref{thm_description_spectral_matrix_Kneser_times_polynomial} yields
{\allowdisplaybreaks
\begin{align*}
PW
&\spaceeq
\begin{bmatrix}
    P\bone-\frac{1}{t}P\bh & \frac{1}{t}P\bh & O_{t+1,m-1}
\end{bmatrix}\\
&\spaceeq
\bin{s}{t}
\begin{bmatrix}
\be_0-(1-\frac{t}{s})\be_0+\frac{s-t}{s^2-s}\be_1 &
(1-\frac{t}{s})\be_0-\frac{s-t}{s^2-s}\be_1 & O_{t+1,m-1}
\end{bmatrix}\\
&\spaceeq
\bin{s}{t}
\begin{bmatrix}
\frac{t}{s} & 1-\frac{t}{s} & 0 & \dots & 0 \\
\frac{s-t}{s^2-s} & \frac{t-s}{s^2-s} & 0 & \dots & 0\\
0 & 0 & 0 & \dots & 0\\
\vdots & \vdots & \vdots & \vdots & \vdots\\
0 & 0 & 0 & \dots & 0
\end{bmatrix}.
\end{align*}
}
It follows that
\begin{align*}
    PV\tilde P^T
    &\spaceeq
    PWK\tilde P^T
    \spaceeq
    \frac{\bin{s}{t}}{2n}
    \begin{bmatrix}
\frac{2t}{s} & 1-\frac{t}{s}\\
\frac{2(s-t)}{s^2-s} & \frac{t-s}{s^2-s}\\
0 & 0\\
\vdots & \vdots\\
0 & 0
\end{bmatrix}
\begin{bmatrix}
    1 & \bone_m^T\\
    2 & \bz^T
\end{bmatrix}
\spaceeq
\frac{\bin{s}{t}}{2n}
\begin{bmatrix}
    2 & \frac{2t}{s}\bone_m^T+(1-\frac{t}{s})\bz^T\\
    0 & \frac{s-t}{s^2-s}(2\cdot\bone_m-\bz)^T
\end{bmatrix}.
\end{align*}
We have $2\cdot\bone_m-\bz>0$. Using that $\frac{t}{s}=\frac{\delta}{2+\delta}$ and $z_i>-\delta$ for each $i\in [m]$, we find that
\begin{align*}
    \frac{2t}{s}+\left(1-\frac{t}{s}\right)z_i
    \spaceeq
    \frac{2\delta}{2+\delta}+\left(1-\frac{\delta}{2+\delta}\right)z_i
    \;\;>\;\;
    \frac{2\delta}{2+\delta}+\left(1-\frac{\delta}{2+\delta}\right)(-\delta)
    \spaceeq
    0,
\end{align*}   
thus showing that $\frac{2t}{s}\bone_m+(1-\frac{t}{s})\bz>0$. It follows that $PV\tilde P^T\geq 0$, which means that (\ref{c1}) is met. Applying Corollary~\ref{cor_critierion_acceptance_SDP}, we deduce that $\SDP(\GG_{s,t},\C_n)=\YES$ and that the matrix 
\begin{align*}
M=\sum_{\substack{\omega\in\Orb(\GG_{s,t})\\\tilde\omega\in\Orb(\C_n)}} v_{\omega\tilde\omega}\,R_\omega\otimes R_{\tilde\omega}    
\end{align*}
is a balanced $\SDP$-matrix for $\GG_{s,t}$, $\C_n$, cf.~Proposition~\ref{prop_orbital_matrix_of_balanced_matrix}.

The next step is to add $\AIP$. For each $\bx\in\Vset(\GG_{s,t})^2$, let $\omega^{(\bx)}$ be the orbital of $\GG_{s,t}$ containing $\bx$, and choose an orbital $\tilde\omega^{(\bx)}$ of $\C_n$ satisfying $v_{\omega^{(\bx)}\tilde\omega^{(\bx)}}\neq 0$.
Letting $f:\Orb(\C_n)\to\Z^{n\times n}$ be the function from Proposition~\ref{prop_fooling_matrices_cycle}, we consider the $\bin{s}{t}n\times \bin{s}{t}n$ matrix $N$ defined by $N_\bx=f(\tilde\omega^{(\bx)})$ for each $\bx$ (where $N_\bx=(\be_{x_1}\otimes I_n)^TN(\be_{x_2}\otimes I_n)$ is the $\bx$-th block of $N$).
We claim that $N$ is an $\AIP$-matrix for $\GG_{s,t},\C_n$. Note that, if $\bx=(x,x)\in\Vset(\GG_{s,t})^2$, we have $\omega^{(\bx)}=\omega_0$ and, thus, $\tilde\omega^{(\bx)}=\tilde\omega_0$, which gives $\supp(N_\bx)=\supp(f(\tilde\omega_0))\subseteq\tilde\omega_0$. Similarly, if $\bx\in\Eset(\GG_{s,t})$, then $\omega^{(\bx)}=\omega_t$ and, thus, $\tilde\omega^{(\bx)}=\tilde\omega_1$, which gives $\supp(N_\bx)=\supp(f(\tilde\omega_1))\subseteq\tilde\omega_1=\Eset(\C_n)$. This yields the conditions (\ref{r1}) and (\ref{r2}). Moreover, for $\bx=(x_1,x_2)\in\Vset(\GG_{s,t})^2$,
we find
\begin{align*}
    (\be_{x_1}\otimes I_n)^TN(\be_{x_2}\otimes\bone_n)
    &\spaceeq
     (\be_{x_1}\otimes I_n)^TN(\be_{x_2}\otimes I_n)(1\otimes\bone_n)
     \spaceeq
     N_{\bx}\bone_n
     \spaceeq
     f(\tilde\omega^{(\bx)})\bone_n
\end{align*}
which, by the properties of $f$, is constant over the orbitals of $\C_n$; this gives (\ref{r3}). Similarly, using that $f(\tilde\omega^{(\bx)})^T\bone_n$ is constant over the orbitals, we obtain (\ref{r4}). Finally, (\ref{r5}) follows by observing that $\bone_n^T N_\bx\bone_n=\bone_n^T f(\tilde\omega^{(\bx)})\bone_n=\bone_n^T\be_0=1$ for any $\bx$. As a consequence, $N$ is a relaxation matrix; since its entries are integral, it is an $\AIP$-matrix. 
For any $\bx\in\Vset(\GG_{s,t})^2$, the $\bx$-th block of $M$ satisfies
\begin{align*}
    M_\bx
    &\spaceeq
    (\be_{x_1}\otimes I_n)^TM(\be_{x_2}\otimes I_n)
    \spaceeq
    \sum_{\substack{\omega\in\Orb(\GG_{s,t})\\\tilde\omega\in\Orb(\C_n)}}v_{\omega\tilde\omega}\,(\be_{x_1}\otimes I_n)^T(R_\omega\otimes R_{\tilde\omega})(\be_{x_2}\otimes I_n)\\
    &\spaceeq
    \sum_{\substack{\omega\in\Orb(\GG_{s,t})\\\tilde\omega\in\Orb(\C_n)}}v_{\omega\tilde\omega}\,(\be_{x_1}^TR_\omega\be_{x_2})R_{\tilde\omega}
    \spaceeq
    \sum_{\tilde\omega\in\Orb(\C_n)}v_{\omega^{(\bx)}\tilde\omega}R_{\tilde\omega}.
\end{align*}
Since $v_{\omega^{(\bx)}\tilde\omega^{(\bx)}}\neq 0$, using that the orbitals of a graph are disjoint, we deduce that $R_{\tilde\omega^{(\bx)}}\;\triangleleft\; M_\bx$. On the other hand, we have $\supp(N_\bx)=\supp(f(\tilde\omega^{(\bx)}))\subseteq\tilde\omega^{(\bx)}=\supp(R_{\tilde\omega^{(\bx)}})$, which means that $N_\bx\;\triangleleft\;R_{\tilde\omega^{(\bx)}}$. It follows that $N\circ((I_{\bin{s}{t}}+\adj(\GG_{s,t}))\otimes J_n)\;\triangleleft\; N\;\triangleleft\; M$. Applying Proposition~\ref{prop_matrix_char_SDP_new_form}, we conclude that $\SDA(\GG_{s,t},\C_n)=\YES$, as required.
\end{proof}

We note that the SDP part of the integrality gap in Theorem~\ref{thm_main_SDA_no_solves_approximate_homomorphism} may be directly converted into Unique-Games approximation hardness of AGH through Raghavendra's framework~\cite{Raghavendra08:everycsp}. 
Given two digraphs $\X,\X'$ and a real number $0\leq\epsilon\leq 1$, an \emph{$\epsilon$-homomorphism} from $\X$ to $\X'$ is a map $f:\Vset(\X)\to\Vset(\X')$ that preserves at least $(1-\epsilon)$-fraction of the edges of $\X$. A \emph{robust} algorithm for $\PCSP(\A,\B)$ is an algorithm that finds a $g(\epsilon)$-homomorphism from $\X$ to $\B$ whenever the instance $\X$ is such that there exists an $\epsilon$-homomorphism from $\X$ to $\A$, where $g$ is some monotone, nonnegative function satisfying $g(\epsilon)\to 0$ as $\epsilon\to 0$.
As observed in~\cite{bgs_robust23stoc},
it follows from~\cite{Raghavendra08:everycsp} that any PCSP admitting a polynomial-time robust algorithm is solved by SDP, assuming the Unique Games Conjecture (UGC) of~\cite{Khot02stoc}. Thus, Theorem~\ref{thm_main_SDA_no_solves_approximate_homomorphism} implies the following conditional hardness result for AGH. 
\begin{cor}
    Let $\A,\B$ be non-bipartite loopless undirected graphs such that $\A\to\B$. Then, assuming the $\operatorname{UGC}$ and $\operatorname{P}\neq\operatorname{NP}$, $\PCSP(\A,\B)$ does not admit a polynomial-time robust algorithm.
\end{cor}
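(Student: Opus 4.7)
The plan is to produce an explicit fooling instance $\X$ with $\SDA(\X,\A)=\YES$ but $\X\not\to\B$. First, I would reduce to the ``extreme'' case in which $\A$ is an odd cycle and $\B$ is a clique: since $\A$ is non-bipartite and loopless, it admits a homomorphism from some odd cycle $\C_n$, and since $\B$ is loopless it admits a homomorphism to $\K_{n'}$ for $n'=\chi(\B)$. By the monotonicity of $\SDA$ (Proposition~\ref{prop_monotonicity_SDP_SDA_wrt_homomorphisms}), it suffices to exhibit a graph $\X$ with $\SDA(\X,\C_n)=\YES$ and $\X\not\to\K_{n'}$. My candidate is a suitably chosen Kneser graph $\X=\GG_{s,t}$: Lov\'asz's formula $\chi(\GG_{s,t})=s-2t+2$ makes it easy to force $\X\not\to\K_{n'}$ by taking $s-2t$ sufficiently large, while the rich symmetry of $\GG_{s,t}$ makes it possible to analyse $\SDA(\GG_{s,t},\C_n)$ through the character-table machinery of Corollary~\ref{cor_critierion_acceptance_SDP}.

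Next I would build a witnessing orbital matrix $V$ satisfying the conditions \emph{($c_1$)--($c_4$)} of Corollary~\ref{cor_critierion_acceptance_SDP}. The orbital matrix has rows indexed by $\Orb(\GG_{s,t})=\{0,\dots,t\}$ and columns indexed by $\Orb(\C_n)=\{0,\dots,m\}$ where $m=(n-1)/2$. Conditions \emph{($c_3$)} and \emph{($c_4$)} force the $0$-th row of $V$ (corresponding to the diagonal orbital of $\GG_{s,t}$) to be supported in the diagonal orbital of $\C_n$, and the $t$-th row (the edge orbital of $\GG_{s,t}$) to be supported in the unique edge orbital of $\C_n$. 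The idea is to take $V$ whose only nonzero columns are the $0$-th and $1$-st, and to put there vectors interpolated by polynomials of degree $\leq 1$ in the variable $q$ (the index of the Johnson scheme). By Theorem~\ref{thm_description_spectral_matrix_Kneser_times_polynomial}, multiplying $P$ by $\bone$ and by $\bh=(0,1,\dots,t)$ yields vectors living in $\spann(\be_0,\be_1)$, with explicit coefficients; concretely, I would take
\begin{align*}
V=\begin{bmatrix}\bone-\tfrac1t\bh & \tfrac1t\bh & O\end{bmatrix}\cdot\tfrac{1}{2n}\diag(2,1,\dots,1),
\end{align*}
which immediately gives \emph{($c_2$)} since $\bmu^{\C_n}=2n\bone-n\be_0$. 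The key computation is then \emph{($c_1$)}: the product $PV\tilde P^T$ becomes a $2\times 2$ nontrivial block times $\begin{bmatrix}1 & \bone^T\\ 2 & \bz^T\end{bmatrix}$, where by Proposition~\ref{prop_description_association_scheme_cycle} all entries of $\bz$ lie in $(-2,2)$ strictly. The ratio $t/s$ is a tuning parameter: choosing $s/t$ large enough (quantified by the gap $2-\delta$ from the spectral radius) forces every entry of $PV\tilde P^T$ to be nonnegative. This is the main technical step, and the payoff of the low-degree polynomial trick is that nearly all rows of $PV\tilde P^T$ are automatically zero, so only finitely many inequalities must be checked.

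Finally, I need to accompany the resulting balanced $\SDP$-matrix $M$ with an $\AIP$-matrix $N$ satisfying the refinement $N\circ((I+\adj(\GG_{s,t}))\otimes J_n)\triangleleft M$, as prescribed by Proposition~\ref{prop_matrix_char_SDP_new_form}$(ii)$. Here I would invoke Proposition~\ref{prop_fooling_matrices_cycle}: for each pair $\bx\in\Vset(\GG_{s,t})^2$, pick an orbital $\tilde\omega^{(\bx)}$ of $\C_n$ with $v_{\omega^{(\bx)}\tilde\omega^{(\bx)}}\neq 0$ (matching the diagonal orbital to itself and the edge orbital to the edge orbital, which is forced by \emph{($c_3$)},\emph{($c_4$)}), and place the integral matrix $f(\tilde\omega^{(\bx)})$ in the $\bx$-th block of $N$. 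The row/column-sum properties of $f$ guarantee that $N$ is a relaxation matrix, hence an $\AIP$-matrix, and $\supp(N_\bx)\subseteq\tilde\omega^{(\bx)}\subseteq\supp(M_\bx)$ yields the refinement. The main obstacle is arranging the SDP feasibility in the second step: getting the entries of $PV\tilde P^T$ nonnegative relies crucially on the spectral gap $\delta<2$ for the primitive matrix $\adj(\C_n)$ (odd $n$), which is exactly why even cycles would not work and why the argument meshes with the known tractability of bipartite templates via $\SDP$.
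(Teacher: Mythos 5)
What you have written is, in essence, a sketch of the proof of Theorem~\ref{thm_main_SDA_no_solves_approximate_homomorphism}: the reduction to an odd cycle $\C_n$ and a clique $\K_{n'}$, the Kneser fooling instance $\GG_{s,t}$, the orbital matrix $V=WK$ checked against the conditions of Corollary~\ref{cor_critierion_acceptance_SDP}, and the $\AIP$-matrix assembled blockwise via Proposition~\ref{prop_fooling_matrices_cycle} --- this matches the paper's argument in Section~\ref{subsec_lower_bound_on_SDA} almost step for step, and as a proof of that theorem it is fine. But the statement you were asked to prove is the corollary about \emph{robust algorithms}, and your proposal never engages with it: there is no mention of $\epsilon$-homomorphisms, of what a robust algorithm is, of the Unique Games Conjecture, or of why an integrality gap for $\SDP$ should constrain arbitrary polynomial-time robust algorithms.

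The missing ingredient is the bridge supplied by Raghavendra's framework~\cite{Raghavendra08:everycsp} (as observed in~\cite{bgs_robust23stoc}): assuming the UGC of~\cite{Khot02stoc} and $\operatorname{P}\neq\operatorname{NP}$, any PCSP that admits a polynomial-time robust algorithm is solved by $\SDP$. Granting this, the corollary follows in one line from the SDP part of your construction: the fooling instance satisfies $\SDP(\GG_{s,t},\A)=\YES$ while $\GG_{s,t}\not\to\B$, so $\SDP$ does not solve $\PCSP(\A,\B)$, and hence no polynomial-time robust algorithm can exist under the stated assumptions. Without invoking (or reproving) that transfer principle, your argument establishes only the unconditional lower bound against $\SDA$ and $\SDP$ and says nothing about robust algorithms; the step you omitted is exactly the one that gives the corollary its content.
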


\section{Incomparability to the BA hierarchy}
\label{section_incomparability}
The $\BLP+\AIP$ algorithm---whose name we abbreviate to $\BA$ in this paper---was introduced in~\cite{bgwz20} as a combination of two standard algorithmic techniques for CSPs: the \emph{basic linear programming relaxation} $\BLP$ and the affine integer programming relaxation $\AIP$ (the same we use in the current work as the linear Diophantine part of $\SDA$). Unlike $\SDA$, this relaxation does not solve all bounded-width CSPs, as noted in~\cite{bgwz20}.
Consequently, $\BA$ is strictly less powerful than $\SDA$. 
It is possible to progressively strengthen the $\BA$ algorithm through the
\emph{lift-and-project} technique~\cite{Laurent03}, which results in the so-called \emph{$\BA$ hierarchy}~\cite{cz23soda:minions}. The \emph{$k$-th level} of the hierarchy, denoted by $\BA^k$, corresponds to applying $\BA$ to a modified instance whose variables are sets of variables of the original instance of size up to $k$.\footnote{For the explicit definitions of $\BLP$, $\BA$, and $\BA^k$, we refer the reader to~\cite{BBKO21},~\cite{bgwz20}, and~\cite{cz23soda:minions}, respectively.}
Recently,~\cite{cz23stoc:ba} established 
a lower bound against this model, by showing that no constant level of the $\BA$ hierarchy solves the approximate graph colouring problem.
Since constant levels of the $\BA$ hierarchy do solve all bounded-width CSPs, it is natural to investigate how the hierarchy compares to $\SDA$. In particular, if some level of the hierarchy dominated $\SDA$ (in a sense that we will now make formal), the lower bound in~\cite{cz23stoc:ba} would immediately imply the same lower bound against $\SDA$ (though only for the approximate graph colouring problem).
In this section, we establish that this is not the case: $\SDA$ (and, in fact, already $\SDP$ and a weaker version of the latter) is not dominated by any level of the $\BA$ hierarchy.
As a consequence, the specialisation of Theorem~\ref{thm_main_SDA_no_solves_approximate_homomorphism} to approximate graph colouring does not follow as a corollary of~\cite{cz23stoc:ba}. 

\begin{defn}
\label{defn_tests_and_annexes}
Let $\mathfrak{D}$ be the family of all (finite) digraphs. We define a \emph{test} to be a function $T:\mathfrak{D}^2\to\{\YES,\NO\}$. We say that a test $T$ is a \emph{polynomial-time} test if, for each $\A\in\mathfrak{D}$, there exists an algorithm $\Alg_\A$ taking digraphs as inputs and returning values in $\{\YES,\NO\}$, such that 
\begin{itemize}
    \item[$(i)$] $T(\X,\A)=\Alg_\A(\X)$ for each $\X\in\mathfrak{D}$, and
    \item[$(ii)$] $\Alg_\A$ can be implemented in polynomial time in the size of the input. 
\end{itemize}
Also, we say that a test $T$ is \emph{complete} if $T(\X,\A)=\YES$ for any $\X,\A\in\mathfrak{D}$ such that $\X\to\A$; i.e., a complete test has no false negatives.
We define a partial order ``$\preceq$'' on the set of tests:
Given two tests $T_1,T_2$, we write $T_1\preceq T_2$ (and we say that $T_2$ \emph{dominates} $T_1$) if, 
for any $\X,\A\in\mathfrak{D}$, $T_2(\X,\A)=\YES$ implies $T_1(\X,\A)=\YES$.
\end{defn}

All relaxations mentioned in this work are complete tests. For such tests, the fact that $T_1\preceq T_2$ means that $T_2$ is at least as powerful as $T_1$, in that it has fewer false positives.
For example, it directly follows from the definitions that $\SDA$ dominates both $\SDP$ and $\AIP$. Moreover, since any solution to $\SDP$ can be turned into a solution to $\BLP$ (by taking the norms of the vector variables $\blambda_{x,a}$), it follows that $\BLP\preceq\SDP$ and 
$\BA=\BA^1\preceq\SDA$.
If, for some $k\in\N$, we had $\SDA\preceq\BA^k$, the results in~\cite{cz23stoc:ba} would directly imply that approximate graph colouring is not solved by $\SDA$ and, moreover, that the same fooling instances produced in~\cite{cz23stoc:ba} could be used for fooling $\SDA$. In Subsection~\ref{subsec_SDPeps_vs_BAk}, we show that this is not the case, as not even a weaker, polynomial-time version of $\SDP$---the test $\SDP^\epsilon$ described in Subsection~\ref{subsec_SDPeps} below---is dominated by $\BA^k$.

\subsection{A tale of two polytopes}
\label{subsec_SDPeps}
A few years ago, O'Donnell
noted that 
polynomial-time solvability of certain semidefinite programming relaxations, assumed in several papers in the context
of the Sum-of-Squares proof system, is not known in general~\cite{ODonnell17:itcs}.
In fact,
it is a well-known open question in optimisation theory whether all semidefinite programs can be solved to near-optimality in polynomial time~\cite{Ramana97:mp,Gartner2012approximation}. 
To the best of the authors' knowledge, details of how the semidefinite
program SDP can be efficiently solved to
near-optimality (if at all) have not been made explicit in the literature. This motivates us to
give a formal argument showing that this is indeed possible. As we shall
see, the issue requires some unexpected matrix-theoretic considerations.

It is well known that a polynomial-time algorithm (in the Turing model of computation) for semidefinite programming based on the ellipsoid method exists \cite{grotschel1981ellipsoid,MR1261419,Vandenberghe96} under the assumption that the feasible region contains a ``large enough'' inner ball and is contained in a ``small enough'' outer ball---a requirement known as \emph{Slater condition}.\footnote{Another polynomial-time algorithm is based on the interior-point methods~\cite{de2016turing}.}
In this subsection, we show that the semidefinite program SDP can be solved to near-optimality in polynomial time, by reformulating it as an optimisation problem meeting Slater condition.

Recall that $\Frob{M}{N}=\Tr(M^TN)$ denotes the Frobenius inner product of matrices, and let $\|M\|_{\operatorname{F}}=\sqrt{\Frob{M}{M}}$ denote the corresponding norm. 
Given a set $\mathscr{M}$ of square matrices of equal size, a matrix $M\in \mathscr{M}$, and a real number ${r}>0$, we consider the ball $\mathscr{B}_\mathscr{M}(M;{r})=\{N\in \mathscr{M}:\|N-M\|_{\operatorname{F}}<{r}\}$. Throughout this and the next subsections, we shall denote the cone of positive semidefinite matrices by $\mathscr P$.
For $\ell,m\in\N$,
let $C,A_1,\dots,A_m$ be rational $\ell\times\ell$ matrices, and let $b_1,\dots,b_m$ be rational numbers. We denote by $\mathscr{V}$ the polytope containing all real symmetric $\ell\times\ell$ matrices $M$ satisfying $\Frob{A_i}{{M}}\leq b_i$ for each $i\in [m]$.
Consider the semidefinite program in standard form
\begin{align}
\label{eqn_1246_27_06}
\begin{array}{ll}
     \inf&\Frob{C}{{M}}  \\
     \mbox{subject to}&{M}\in\mathscr{V}\cap\mathscr{P}
\end{array}
\end{align}
and let $\nu$ be the optimal value of the program.
Let also $\mathscr{V}^{\operatorname{a}}$ denote the affine hull of $\mathscr{V}$, i.e., 
the intersection of all affine spaces containing $\mathscr{V}$ (where a set $\mathscr{S}$ of $\ell\times\ell$ real matrices is an affine space if $\lambda M+(1-\lambda)N\in \mathscr{S}$ whenever $M,N\in \mathscr{S}$ and $\lambda\in\R$).
For rationals $r,R>0$, we say that a matrix ${M_0}\in\mathscr{V}\cap\mathscr{P}$ is an \emph{$(r,R)$-Slater point} for~\eqref{eqn_1246_27_06} if 
$\mathscr{B}_{\mathscr{V}^{\operatorname{a}}}({M_0};r)\;\subseteq\;\mathscr{V}\cap\mathscr{P}\;\subseteq\;\mathscr{B}_{\mathscr{V}^{\operatorname{a}}}({M_0};R)$. 
The next result from~\cite{grotschel1981ellipsoid} (see also the formulation in~\cite{de2016turing}) establishes that the ellipsoid method 
can be used to solve a semidefinite program up to arbitrary precision in polynomial time \emph{provided that} there exists a Slater point. 
\begin{thm}[\cite{grotschel1981ellipsoid}]
\label{thm_ellipsoid_method_is_fast_with_Slater}
    Let ${M_0}$ be an $(r,R)$-Slater point for the semidefinite program~\eqref{eqn_1246_27_06}. Then for any rational $\epsilon>0$ one can find a rational matrix ${M}^*\in\mathscr{V}\cap\mathscr{P}$ such that $\Frob{C}{{M}^*}-\nu\leq\epsilon$ in time polynomial in $\ell$, $m$, $\log(R/r)$, $\log(1/\epsilon)$, and the bit-complexity of the input data $C$, $A_i$, $b_i$, and $M_0$.
\end{thm}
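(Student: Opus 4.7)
My plan is to reduce the optimisation problem~\eqref{eqn_1246_27_06} to a sequence of convex feasibility problems and solve each of them with the ellipsoid method, exploiting the Slater point $M_0$ to meet the quantitative hypotheses that make that method polynomial. Since the objective $\Frob{C}{M}$ is linear, binary search on its value turns optimisation into feasibility: for a rational guess $\gamma$ of the optimum, I augment $\mathscr{V}\cap\mathscr{P}$ with the half-space $\{M:\Frob{C}{M}\leq \gamma\}$ and ask whether the resulting set is non-empty. The number of binary-search iterations needed to pin down $\nu$ within $\epsilon$ is $O(\log(1/\epsilon))$ plus a logarithmic factor depending on an a priori bound on $|\nu|$ derived from $R$ and $\|C\|_{\operatorname{F}}$, so the overall running time is dominated by that of a single feasibility call.

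The ellipsoid method requires a weak separation oracle for $\mathscr{V}\cap\mathscr{P}$. The linear constraints $\Frob{A_i}{M}\leq b_i$ supply an oracle for $\mathscr{V}$ trivially. For $\mathscr{P}$, given a candidate symmetric matrix $N$, I would compute an (approximate) eigendecomposition: if all eigenvalues are $\geq 0$ up to the required tolerance then $N$ is accepted, otherwise a unit eigenvector $v$ with $v^TNv<0$ yields the separating hyperplane $\Frob{vv^T}{X}\geq 0$. Approximate eigenvalue computation is polynomial in the bit length of $N$ (by, e.g., iterated bisection applied to the characteristic polynomial), so the oracle is polynomial. To keep the search inside $\mathscr{V}^{\operatorname{a}}$ rather than the full ambient space of symmetric matrices, I would compute a rational basis of $\mathscr{V}^{\operatorname{a}}$ from the $A_i$ (by Gaussian elimination on the corresponding linear system) and run the ellipsoid method in the coordinates of this basis. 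This is where $M_0$ becomes essential: the inner ball $\mathscr{B}_{\mathscr{V}^{\operatorname{a}}}(M_0;r)$ and the outer ball $\mathscr{B}_{\mathscr{V}^{\operatorname{a}}}(M_0;R)$ give, respectively, the volume lower bound needed to argue that the method detects feasibility and the starting ellipsoid; after at most $O((\dim\mathscr{V}^{\operatorname{a}})^2\log(R/r))$ volume-halving steps the ellipsoid method must produce a feasible point or certify infeasibility up to precision $\epsilon$. Standard estimates for the ellipsoid method (as in~\cite{grotschel1981ellipsoid}) then bound all arithmetic in the procedure by polynomials in the input bit-complexity together with $\log(R/r)$ and $\log(1/\epsilon)$.

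The main obstacle I expect is the rationality of the returned $M^*$. The ellipsoid method naturally produces a rational point, but it lives in the affine space $\mathscr{V}^{\operatorname{a}}$ with some bit-complexity that grows with the iteration count; I would need to round its coefficients while staying inside $\mathscr{V}\cap\mathscr{P}$. The Slater ball of radius $r$ around $M_0$ provides the slack: taking a suitable convex combination $M^*=(1-\tau)\tilde M+\tau M_0$ for some small rational $\tau$ proportional to $\epsilon/R$ moves any approximately feasible $\tilde M$ strictly into the interior of $\mathscr{V}\cap\mathscr{P}$ and allows rounding to a rational matrix of controlled bit-length without losing feasibility, at the cost of an additive $O(\tau\|C\|_{\operatorname{F}}R)$ perturbation of the objective, which can be absorbed into $\epsilon$. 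Combining the binary search, the ellipsoid iterations on the separation oracle described above, and the final rational rounding produces a matrix $M^*\in\mathscr{V}\cap\mathscr{P}$ with $\Frob{C}{M^*}-\nu\leq\epsilon$ in the claimed running time.
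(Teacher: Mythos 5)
The paper never proves this statement: it is imported as a black box from Gr\"otschel--Lov\'asz--Schrijver~\cite{grotschel1981ellipsoid} (in the formulation of~\cite{de2016turing}) and is only \emph{used} later, to justify that the test $\SDP^\epsilon$ runs in polynomial time. So there is no in-paper argument to compare yours against; what you have written is a sketch-level reconstruction of the standard proof in the cited references, and it follows the same route: reduce optimisation to feasibility (binary search or a sliding objective), run the ellipsoid method inside the affine hull $\mathscr{V}^{\operatorname{a}}$ with a weak separation oracle (the linear constraints checked directly, the cone $\mathscr{P}$ separated via an approximate eigenvector), use the Slater inclusions $\mathscr{B}_{\mathscr{V}^{\operatorname{a}}}(M_0;r)\subseteq\mathscr{V}\cap\mathscr{P}\subseteq\mathscr{B}_{\mathscr{V}^{\operatorname{a}}}(M_0;R)$ to bound the number of volume-reduction steps by a polynomial in the dimension and $\log(R/r)$, and finally pull an approximately feasible iterate strictly into the interior by a convex combination with $M_0$ so it can be rounded to a rational matrix with controlled bit-length and objective loss absorbed into $\epsilon$.

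Two places where your sketch is looser than what the cited works actually establish. First, the affine hull of a polyhedron given by inequalities is \emph{not} obtained by Gaussian elimination on the system $\Frob{A_i}{M}=b_i$; one has to identify the implicit equalities among the constraints (which can be done in polynomial time, e.g.\ via auxiliary linear programs, as in~\cite{grotschel1981ellipsoid}), and only then eliminate. Second, the clean dichotomy ``produce a feasible point or certify infeasibility'' is not available with inexact arithmetic and an approximate eigenvalue oracle; the rigorous statement needs the weak-separation/weak-optimisation formalism of~\cite{grotschel1981ellipsoid}, which is precisely what packages the accumulated numerical error into the additive $\epsilon$ of the conclusion. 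Neither point changes the strategy, but if you wanted a self-contained proof these are the steps requiring the careful treatment found in~\cite{grotschel1981ellipsoid,de2016turing}; for the purposes of this paper, citing the theorem (as the authors do) is the intended resolution.
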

Our goal is then to reformulate the system~\eqref{eqns_SDP} as a program in the form~\eqref{eqn_1246_27_06}, and to find for it a suitable Slater point. First of all, observe that we cannot simply introduce a dummy objective function to be minimised over the feasible set of ~\eqref{eqns_SDP} (i.e., the set of solutions to (\ref{SDP1})--(\ref{SDP4})), as this set can be empty, in which case it clearly contains no Slater points.\footnote{In fact, the set is nonempty precisely when $\SDP(\X,\A)=\YES$.} The natural choice is then to relax the condition (\ref{SDP3})---which requires that the solution should be compatible with the edge sets of $\X$ and $\A$---by turning it into an objective function to be minimised: Given a $pn\times pn$ matrix $M$, we let 
\begin{align}
\label{eqn_f_of_M}
f(M)=
\sum_{(x,y)\in\Eset(\X)}\sum_{(a,b)\in \Vset(\A)^2\setminus\Eset(\A)}(\be_x\otimes\be_a)^TM(\be_y\otimes\be_b).
\end{align}
(Notice that we are working with the matrix formulation of~\eqref{eqns_SDP}, which is compatible with the standard form~\eqref{eqn_1246_27_06}.) This is sufficient to make the feasible set nonempty, as is witnessed, for example, by the positive semidefinite matrix $\frac{1}{n}J_p\otimes I_n$. We now need to declare which polytope takes the role of $\mathscr{V}$ in the standard form~\eqref{eqn_1246_27_06}. 
This is an important choice: The definition of Slater points takes into account not only the feasible set $\mathscr{V}\cap\mathscr{P}$ of a program, but also the polytope $\mathscr{V}$ involved in its formulation. Hence, it might happen that Slater condition can be enforced by modifying the formulation of a program in a way that the dimension of the polytope $\mathscr{V}$ is reduced, while still preserving the feasible set $\mathscr{V}\cap\mathscr{P}$.
In the current setting,
one natural candidate is the polytope described by taking the constraints of ~\eqref{eqns_SDP} and discarding (\ref{SDP3}) and positive semidefiniteness---i.e., in the matrix formulation, the polytope $\mathscr{W}$ containing all $pn\times pn$ symmetric entrywise-nonnegative matrices satisfying the conditions (\ref{r1}) (``diagonal blocks are diagonal'') and (\ref{r6}) (``the entries in each block sum up to $1$''). Another natural choice consists in looking at the conditions defining an $\SDP$-matrix, and discarding the condition (\ref{r2}) and positive semidefiniteness: We let $\mathscr{U}$ denote the polytope of $pn\times pn$ symmetric entrywise-nonnegative matrices satisfying the conditions (\ref{r1}), (\ref{r3}), (\ref{r4}), and (\ref{r5}).
The two choices result in two different programs:
\vspace{-.4cm}
\begin{center}
    \begin{minipage}{.4\textwidth}
        \begin{align}
        \label{eqn_SDP_W}
        \tag{SDP$'$}
        \begin{array}{ll}
             \inf&f(M)  \\
             \mbox{subject to}&M\in \mathscr{W}\cap\mathscr{P}
        \end{array}
        \end{align}
    \end{minipage}
\hspace{.15\textwidth}
    \begin{minipage}{.4\textwidth}
        \begin{align}
        \label{eqn_SDP_U}
        \tag{SDP$''$}
        \begin{array}{ll}
             \inf&f(M)  \\
             \mbox{subject to}&M\in \mathscr{U}\cap\mathscr{P}.
        \end{array}
        \end{align}
    \end{minipage}
\end{center}
It follows from Proposition~\ref{prop_equivalence_some_conditions_SDP} that $\mathscr{U}\subseteq \mathscr{W}$ and $\mathscr{U}\cap\mathscr P= \mathscr{W}\cap\mathscr P$. In particular, this means that~\eqref{eqn_SDP_W} and~\eqref{eqn_SDP_U} are two different formulations of \emph{the same minimisation problem}. Nevertheless,
the two propositions below show that \emph{only the second} formulation of the program meets Slater condition, which guarantees the existence of a polynomial-time algorithm solving it to near-optimality in the Turing model of computation.

\begin{prop}
\label{prop_Slater_condition_for_SDP}
    There exists ${M_0}\in \mathscr{U}\cap\mathscr P$ such that $\mathscr{B}_{\mathscr{U}^{\operatorname{a}}}({M_0};{\frac{1}{n^2}})\,\subseteq\,\mathscr{U}\cap\mathscr{P}\,\subseteq\,\mathscr{B}_{\mathscr{U}^{\operatorname{a}}}({M_0};2p^2+1)$.
\end{prop}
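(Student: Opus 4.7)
The plan is to produce an explicit Slater point and then verify both inclusions. I take
\[
M_0 \;=\; \frac{1}{n}\, I_p \otimes I_n \;+\; \frac{1}{n^2}\,(J_p - I_p) \otimes J_n,
\]
whose $(x,x)$-blocks equal $\tfrac{1}{n} I_n$ and whose $(x,y)$-blocks for $x\neq y$ equal $\tfrac{1}{n^2} J_n$. First I would verify that $M_0 \in \mathscr{U}\cap\mathscr{P}$: the block structure gives (\ref{r1}) immediately; for (\ref{r3}) a direct computation yields $M_0(\be_x\otimes\bone_n)=\tfrac{1}{n}\bone_{pn}$, independently of $x$; symmetry gives (\ref{r4}); summing yields $\bone^T M_0 \bone = p^2$, i.e.~(\ref{r5}). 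Positive semidefiniteness follows from the decomposition
\[
M_0 \;=\; \frac{1}{n^2}\,J_p \otimes J_n \;+\; \frac{1}{n}\, I_p \otimes \bigl(I_n - \tfrac{1}{n} J_n\bigr),
\]
since both summands are Kronecker products of PSD matrices (note $I_n - \tfrac{1}{n} J_n$ is a projection).

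The main work is the inner ball. Let $N \in \mathscr{U}^{\operatorname{a}}$ with $\|N-M_0\|_{\operatorname{F}} < 1/n^2$; I must show $N \in \mathscr{U}\cap\mathscr P$. For entrywise nonnegativity, I split entries by position: off-diagonal entries within diagonal blocks are forced to be $0$ in the whole affine hull by (\ref{r1}); all other entries of $M_0$ are either $\tfrac{1}{n}$ or $\tfrac{1}{n^2}$, and since each entry of $N-M_0$ has absolute value at most $\|N-M_0\|_{\operatorname{F}}<1/n^2$, the corresponding entry of $N$ is strictly positive. The crucial step is positive semidefiniteness. Here I would exploit (\ref{r3}): any $N\in\mathscr{U}^{\operatorname{a}}$ satisfies $N(\be_x\otimes\bone_n)=N(\be_y\otimes\bone_n)$, whence $N(u\otimes\bone_n)=0$ for every $u\perp \bone_p$. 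Thus the $(p-1)$-dimensional subspace $\mathscr{V}=\{u\otimes\bone_n:u\perp\bone_p\}$ lies in $\ker N$, and by symmetry $N$ acts only on $\mathscr{V}^\perp$. The eigenvalue computation using the decomposition above shows that $M_0$ restricted to $\mathscr{V}^\perp$ has minimum eigenvalue $\tfrac{1}{n}$ (the only $0$-eigenspace of $M_0$ lies inside $\mathscr{V}$). Weyl's inequality, together with $\|(N-M_0)|_{\mathscr{V}^\perp}\|_2\leq \|N-M_0\|_{\operatorname{F}}<1/n^2$, then forces $\lambda_{\min}(N|_{\mathscr{V}^\perp})>\tfrac{1}{n}-\tfrac{1}{n^2}>0$, so $N\succcurlyeq 0$.

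For the outer ball, I would use that any $N\in\mathscr{U}\cap\mathscr P$ has entries in $[0,1]$: (\ref{r1}) combined with (\ref{r6}) (available via Proposition~\ref{prop_equivalence_some_conditions_SDP}) says each block has nonnegative entries summing to $1$, hence individual entries are bounded by $1$. Then $\|N\|_{\operatorname{F}}^2=\sum N_{ij}^2\leq \sum N_{ij}=\bone^T N\bone=p^2$, so $\|N\|_{\operatorname{F}}\leq p$. The same bound applies to $M_0$ (or one checks directly that $\|M_0\|_{\operatorname{F}}^2=p/n+p(p-1)/n^2\leq p^2$). The triangle inequality yields $\|N-M_0\|_{\operatorname{F}}\leq 2p\leq 2p^2+1$, establishing the outer inclusion.

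I expect the main obstacle to be the positive-semidefiniteness part of the inner inclusion, since $M_0$ itself is only on the boundary of $\mathscr{P}$ (it has $p-1$ zero eigenvalues). The workaround is the observation above that the entire affine hull $\mathscr{U}^{\operatorname{a}}$ shares the same kernel subspace $\mathscr{V}$, so the perturbation only acts on $\mathscr{V}^\perp$, where $M_0$ has a safe spectral gap of $1/n$. This is precisely why the proposition holds for $\mathscr{U}$ but not for the larger polytope $\mathscr{W}$ (whose affine hull is not confined to $\mathscr{V}^\perp$).
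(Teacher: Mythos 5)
Your proposal is correct and follows essentially the same route as the paper: the same Slater point $M_0$, the same key observation that every matrix in $\mathscr{U}^{\operatorname{a}}$ annihilates the subspace $\{u\otimes\bone_n:u\perp\bone_p\}$ (so the perturbation only acts where $M_0$ has spectral gap $\tfrac{1}{n}$, handled by Weyl in place of the paper's Courant--Fischer bound), and the same entrywise and norm estimates for the two inclusions. The only cosmetic difference is that you obtain the spectrum of $M_0$ from the elementary decomposition $\tfrac{1}{n^2}J_p\otimes J_n+\tfrac{1}{n}I_p\otimes(I_n-\tfrac{1}{n}J_n)$, whereas the paper reads it off the character tables of cliques via Theorem~\ref{prop_decomposing_spectrum_M_orbital}.
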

\begin{prop}
\label{prop_no_Slater_condition_for_SDP_bis}
    If $p,n\geq 2$, $\mathscr{B}_{\mathscr{W}^{\operatorname{a}}}(M_0;{r})\,\not\subseteq\,\mathscr{P}$ for any $M_0\in \mathscr{W}\cap\mathscr P$ and ${r}>0$.
\end{prop}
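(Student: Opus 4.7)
The plan is to exploit Proposition~\ref{prop_equivalence_some_conditions_SDP}$(ii)$: on the cone $\mathscr{P}$, the weak condition $(r_6)$ that cuts out $\mathscr{W}^a$ is automatically upgraded to the stronger $(r_3)$--$(r_5)$, so although $\mathscr{U}^a \subsetneq \mathscr{W}^a$, the intersections with $\mathscr{P}$ coincide: $\mathscr{W}^a \cap \mathscr{P} = \mathscr{U}^a \cap \mathscr{P}$. Thus the PSD slice of $\mathscr{W}^a$ is confined to a proper affine subspace, and to defeat the Slater condition at any $M_0 \in \mathscr{W} \cap \mathscr{P}$ it suffices to produce a perturbation $D \in \mathscr{W}^a - M_0$ transverse to $\mathscr{U}^a$: $M_0 + \epsilon D$ will remain in $\mathscr{W}^a$ by construction, yet PSDness of $M_0 + \epsilon D$ would force $(r_3)$, which the transverse direction is designed to break.

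Concretely, I would take $D = (E_{12} + E_{21}) \otimes C$, where $E_{ij} = \be_i \be_j^T \in \R^{p \times p}$ (using $p \geq 2$ so these indices are valid) and $C \in \R^{n \times n}$ is a symmetric matrix satisfying the two requirements $\bone_n^T C \bone_n = 0$ and $C \bone_n \neq \bzero_n$. A concrete choice, available since $n \geq 2$, is $C = \be_1 \be_1^T - \be_2 \be_2^T$, for which $C \bone_n = \be_1 - \be_2 \neq \bzero_n$. The first property of $C$ ensures that every block of $D$ has zero total sum; combined with the vanishing of the diagonal blocks of $D$, this yields $M_0 + \epsilon D \in \mathscr{W}^a$ for every $\epsilon \in \R$.

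To conclude, suppose for contradiction that $M_0 + \epsilon D \in \mathscr{P}$ for some $\epsilon \neq 0$. Applying Proposition~\ref{prop_equivalence_some_conditions_SDP}$(ii)$ to $M_0 + \epsilon D$ (which is in $\mathscr{W}^a$ and PSD) forces $(r_3)$ for this matrix, and applying it to $M_0 \in \mathscr{W} \cap \mathscr{P}$ forces $(r_3)$ for $M_0$; subtracting yields $D(\be_1 \otimes \bone_n) = D(\be_2 \otimes \bone_n)$. But the Kronecker mixed-product law gives $D(\be_x \otimes \bone_n) = ((E_{12} + E_{21}) \be_x) \otimes (C \bone_n)$, which evaluates to $\be_2 \otimes C\bone_n$ at $x = 1$ and to $\be_1 \otimes C\bone_n$ at $x = 2$; since $C \bone_n \neq \bzero_n$ and $\be_1 \neq \be_2$, these two vectors differ, a contradiction. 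Any $\epsilon \in (0, r/\|D\|_{\operatorname{F}})$ then puts $M_0 + \epsilon D$ into $\mathscr{B}_{\mathscr{W}^a}(M_0; r) \setminus \mathscr{P}$, settling the proposition.

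The argument has no real obstacle; the only gentle technicality is to pick a symmetric $C$ whose $\bone_n$-quadratic form vanishes while its $\bone_n$-linear image does not. This is exactly where the hypothesis $n \geq 2$ enters, together with $p \geq 2$, which is needed for the block indices in $E_{12}, E_{21}$ to exist. The whole proof may be read as a concrete manifestation of the fact that, in Proposition~\ref{prop_equivalence_some_conditions_SDP}$(ii)$, the implication from $(r_6)$ to $(r_3)$ genuinely relies on the PSD hypothesis.
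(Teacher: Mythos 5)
Your route is genuinely different from the paper's, and arguably cleaner: the paper exhibits an explicit $H\in\mathscr{W}$, passes to the convex combination $N=(1-s)M_0+sH$ (so membership in $\mathscr{W}\subseteq\mathscr{W}^{\operatorname{a}}$ is automatic), and proves $N\not\in\mathscr{P}$ by a block decomposition via the subspace $Z$ together with the fact that a positive semidefinite matrix with a zero diagonal entry has the corresponding row and column equal to zero. You instead perturb along a symmetric direction $D=(\be_1\be_2^T+\be_2\be_1^T)\otimes C$ that preserves (\ref{r1}) and (\ref{r6}) but breaks (\ref{r3}), and let Proposition~\ref{prop_equivalence_some_conditions_SDP}$(ii)$ deliver the contradiction. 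Your contradiction step is correct: if $M_0+\epsilon D$ were positive semidefinite, both it and $M_0$ would satisfy (\ref{r3}), and subtracting forces $D(\be_1\otimes\bone_n)=D(\be_2\otimes\bone_n)$, which your computation $D(\be_x\otimes\bone_n)=((\be_1\be_2^T+\be_2\be_1^T)\be_x)\otimes C\bone_n$ rules out since $C\bone_n\neq\bzero$.

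There is, however, one step you assert rather than prove, and as written the justification is not valid: the membership $M_0+\epsilon D\in\mathscr{W}^{\operatorname{a}}$. By definition, $\mathscr{W}^{\operatorname{a}}$ is the affine hull of $\mathscr{W}$, not the affine space cut out by symmetry, (\ref{r1}) and (\ref{r6}); a priori the former could be strictly smaller, and verifying that $M_0+\epsilon D$ satisfies those linear constraints does not by itself place it in the affine hull. Since the whole point of this subsection is that changing the ambient polytope (hence its affine hull) changes the Slater analysis, this distinction cannot be waved away. Fortunately the gap closes in one line: with your concrete $C=\be_1\be_1^T-\be_2\be_2^T$ one has $D=H_1-H_2$, where $H_a=\frac{1}{n}(J_p-\be_1\be_2^T-\be_2\be_1^T)\otimes I_n+(\be_1\be_2^T+\be_2\be_1^T)\otimes\be_a\be_a^T\in\mathscr{W}$ for $a\in\{1,2\}$ (these are exactly the paper's matrices $H$), so $M_0+\epsilon D=M_0+\epsilon H_1-\epsilon H_2$ is an affine combination of points of $\mathscr{W}$ and hence lies in $\mathscr{W}^{\operatorname{a}}$. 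Alternatively, one can show $\mathscr{W}^{\operatorname{a}}$ equals the constraint-defined affine space by observing that $\frac{1}{n}I_p\otimes I_n+\frac{1}{n^2}(J_p-I_p)\otimes J_n$ is a relative interior point of $\mathscr{W}$ in that space. With either patch your proof is complete.
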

\noindent For this reason, we define the test $\SDP^{\epsilon}$ using the formulation~\eqref{eqn_SDP_U}. More precisely, for $\epsilon>0$, $\SDP^\epsilon$ is described as follows:
\begin{itemize}
    \item Take two digraphs $\X,\A$ as input;
    \item run the ellipsoid method~\cite{grotschel1981ellipsoid} 
    on the program~\eqref{eqn_SDP_U} with precision $\epsilon$, obtaining an output $M^\ast\in\mathscr{U}\cap\mathscr{P}$;
    \item if $f(M^\ast)\leq\epsilon$, set $\SDP^{\epsilon}(\X,\A)=\YES$; otherwise, set $\SDP^{\epsilon}(\X,\A)=\NO$.
\end{itemize}
\noindent We thus obtain the following result.
\begin{thm}
\label{thm_SDP_eps_complete_pol}
For each $\epsilon>0$, $\SDP^{\epsilon}$ is a complete, polynomial-time test. Moreover, $\SDP^{\epsilon}\preceq\SDP$.
\end{thm}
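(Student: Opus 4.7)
The plan is to verify the three claims in turn, using mostly the machinery already assembled. For the polynomial-time bound, the strategy is to apply Theorem~\ref{thm_ellipsoid_method_is_fast_with_Slater} to the standard-form program~\eqref{eqn_SDP_U}. All the input data are rational with bit-complexity polynomial in $|\Vset(\X)|$ and $|\Vset(\A)|$: the objective $f$ defined in~\eqref{eqn_f_of_M} is a sum of matrix entries indexed by $\Eset(\X)$ and $\Vset(\A)^2\setminus\Eset(\A)$, and $\mathscr{U}$ is cut out by a polynomial number of rational linear equalities and inequalities encoding (\ref{r1}), (\ref{r3}), (\ref{r4}), (\ref{r5}), and entrywise nonnegativity. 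Proposition~\ref{prop_Slater_condition_for_SDP} supplies a Slater point $M_0$ with $r = 1/n^2$ and $R = 2p^2+1$, so $\log(R/r) = O(\log(pn))$; assuming (and, in fact, verifying from the explicit construction underlying that proposition) that $M_0$ itself has polynomial bit-complexity, Theorem~\ref{thm_ellipsoid_method_is_fast_with_Slater} returns a rational $M^\ast\in\mathscr{U}\cap\mathscr{P}$ satisfying $f(M^\ast)-\nu\leq\epsilon$ in polynomial time, where $\nu$ denotes the optimum of~\eqref{eqn_SDP_U}.

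The claims of completeness and $\SDP^\epsilon\preceq\SDP$ reduce to a single observation: whenever there exists an $\SDP$-matrix for $\X,\A$, the optimum $\nu$ of~\eqref{eqn_SDP_U} equals zero. Indeed, suppose $\SDP(\X,\A) = \YES$. By Proposition~\ref{prop_matrix_formulation_SDP_SDA}(i), there is a positive semidefinite, entrywise-nonnegative matrix $M$ satisfying (\ref{r1})--(\ref{r5}). Conditions (\ref{r1}), (\ref{r3}), (\ref{r4}), (\ref{r5}), together with entrywise nonnegativity, are precisely the defining conditions of $\mathscr{U}$, so $M\in\mathscr{U}\cap\mathscr{P}$; meanwhile, (\ref{r2}) forces every term of the sum in~\eqref{eqn_f_of_M} to vanish, giving $f(M)=0$. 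Because every element of $\mathscr{U}$ is entrywise nonnegative, $f\geq 0$ on the whole feasible region, hence $\nu = 0$. The ellipsoid output $M^\ast$ therefore satisfies $f(M^\ast)\leq\epsilon$, so $\SDP^\epsilon(\X,\A) = \YES$. This establishes $\SDP^\epsilon\preceq\SDP$. Completeness then follows by combining this with the remark, recorded after Proposition~\ref{prop_monotonicity_SDP_SDA_wrt_homomorphisms}, that $\X\to\A$ always implies $\SDP(\X,\A)=\YES$.

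The only delicate point is the bit-complexity of the Slater point: Theorem~\ref{thm_ellipsoid_method_is_fast_with_Slater} bounds the running time polynomially in the encoding length of $M_0$, so one must check that the matrix produced in Proposition~\ref{prop_Slater_condition_for_SDP} is given by an explicit rational formula of small size (rather than, say, through an existence argument). Everything else is a direct bookkeeping consequence of the already-established equivalence $\mathscr{U}\cap\mathscr{P} = \mathscr{W}\cap\mathscr{P}$ (Proposition~\ref{prop_equivalence_some_conditions_SDP}) and of the near-optimality guarantee of the ellipsoid method.
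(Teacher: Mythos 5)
Your proposal is correct and follows essentially the same route as the paper: invoke Proposition~\ref{prop_Slater_condition_for_SDP} to get a $(\frac{1}{n^2},2p^2+1)$-Slater point, apply Theorem~\ref{thm_ellipsoid_method_is_fast_with_Slater} for the polynomial-time bound, and observe that $\SDP(\X,\A)=\YES$ forces the optimum of~\eqref{eqn_SDP_U} to be $0$, so the ellipsoid output satisfies $f(M^\ast)\leq\epsilon$, giving $\SDP^\epsilon\preceq\SDP$ and hence completeness. Your extra bookkeeping (rationality and bit-size of the constraint data and of the explicit Slater point $M_0=\frac{1}{n}I_p\otimes I_n+\frac{1}{n^2}(J_p-I_p)\otimes J_n$, and the fact that $f\geq 0$ on $\mathscr{U}$ so that the optimum is exactly $0$) is left implicit in the paper but is correctly handled.
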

\begin{proof}
    It follows from Proposition~\ref{prop_Slater_condition_for_SDP} that the program~\eqref{eqn_SDP_U} has a $(\frac{1}{n^2},2p^2+1)$-Slater point.
    Using Theorem~\ref{thm_ellipsoid_method_is_fast_with_Slater}, we deduce that we can find a near-optimal solution to~\eqref{eqn_SDP_U} up to any given precision $\epsilon>0$ in time polynomial in the sizes of $\X$ and $\A$. In particular, if we fix $\A$, $\SDP^\epsilon$ can be implemented in polynomial time in the size of $\X$ and it is thus a polynomial-time test, as per Definition~\ref{defn_tests_and_annexes}. 
    Moreover, if $\SDP(\X,\A)=\YES$, the optimal value of~\eqref{eqn_SDP_U} is $0$. As a consequence, the solution $M^\ast$ found by the ellipsoid method satisfies $f(M^\ast)\leq\epsilon$ (cf.~Theorem~\ref{thm_ellipsoid_method_is_fast_with_Slater}), which means that $\SDP^\epsilon(\X,\A)=\YES$. It follows that $\SDP^\epsilon\preceq\SDP$. In particular,
    since $\SDP$ is complete, this implies that $\SDP^\epsilon$ is also complete.
\end{proof}

In Subsection~\ref{subsec_SDPeps_vs_BAk}, this weaker, polynomial-time version of $\SDP$ will prove to be strong enough to correctly classify cliques and, therefore, not to be dominated by the $\BA$ hierarchy. We now give a proof of Proposition~\ref{prop_Slater_condition_for_SDP}, that finds a Slater point for the program~\eqref{eqn_SDP_U}. The following, simple description of the association schemes corresponding to cliques shall be useful. 

\begin{rem}
\label{rem_association_scheme_cliques}
It is straightforward to check that, for any $n\geq 2$, the clique $\K_n$ is generously transitive, and the association scheme corresponding to $\Orb(\K_n)$ consists of the two matrices $I_n$ and $J_n-I_n$. Either by a direct computation or noting that $\K_n=\GG_{n,1}$, we find that the character table of $\K_n$ is the matrix
$\begin{bmatrix}
    1&n-1\\1&-1
\end{bmatrix}$.
\end{rem}

\begin{proof}[Proof of Proposition~\ref{prop_Slater_condition_for_SDP}]
    Let ${M_0}=\frac{1}{n}I_p\otimes I_n+\frac{1}{n^2}(J_p-I_p)\otimes J_n$, and notice that ${M_0}\in \mathscr{U}$.
    Letting $\omega_0$, $\omega_1$ denote the diagonal and edge orbitals of $\K_p$ and $\tilde\omega_0$, $\tilde\omega_1$ denote the diagonal and edge orbitals of $\K_n$, we see that ${M_0}$ may be written in the form~\eqref{eqn_1334_0504} with $v_{\omega_0\tilde\omega_0}=\frac{1}{n}$, $v_{\omega_0\tilde\omega_1}=0$, and $v_{\omega_1\tilde\omega_0}=v_{\omega_1\tilde\omega_1}=\frac{1}{n^2}$. It follows from Proposition~\ref{prop_orbital_matrix_of_balanced_matrix} that ${M_0}$ is balanced for $\K_p$, $\K_n$, with orbital matrix $V=\begin{bmatrix}
        \frac{1}{n}&0\\\frac{1}{n^2}&\frac{1}{n^2}
    \end{bmatrix}$. Let $P$ and $\tilde P$ be the character tables of $\K_p$ and $\K_n$, respectively, and recall their expressions from Remark~\ref{rem_association_scheme_cliques}. Using Theorem~\ref{prop_decomposing_spectrum_M_orbital}, we deduce that the spectrum of ${M_0}$ consists of the entries of the matrix
    \begin{align}
    \label{eqn_2606_1633}
    PV\tilde P^T
    \spaceeq
    \begin{bmatrix}
        1&p-1\\1&-1
    \end{bmatrix}
    \begin{bmatrix}
        \frac{1}{n}&0\\\frac{1}{n^2}&\frac{1}{n^2}
    \end{bmatrix}
    \begin{bmatrix}
        1&1\\n-1&-1
    \end{bmatrix}
    \spaceeq
    \begin{bmatrix}
        \frac{p}{n}&\frac{1}{n}\\0&\frac{1}{n}
    \end{bmatrix}.
    \end{align}
    In particular, ${M_0}\in\mathscr{P}$. 
    
    In order to show that $\mathscr{U}\cap\mathscr{P}\subseteq\mathscr{B}_{\mathscr{U}^{\operatorname{a}}}({M_0};2p^2+1)$,
    notice that any matrix $M\in\mathscr{U}$ satisfies
    $
        \|M\|_{\operatorname{F}}^2
        =
        \Tr(M^2)
        \leq
        (\bone^TM\bone)^2
        =
        p^4$.
    Therefore, 
    \begin{align*}
        \|M-M_0\|_{\operatorname{F}}
        \;\;\leq\;\;
        \|M\|_{\operatorname{F}}+\|M_0\|_{\operatorname{F}}
        \;\;\leq\;\;
        2p^2
        \;\;<\;\;
        2p^2+1,
    \end{align*}
    as needed.
    
    We now need to prove that $\mathscr{B}_{\mathscr{U}^{\operatorname{a}}}({M_0};{\frac{1}{n^2}})\subseteq\mathscr{U}\cap\mathscr{P}$.
    Let $Z=\spann(\{(\be_x-\be_y)\otimes\bone_n:x,y\in \Vset(\K_p)\})$, and consider two matrices $Q_1$ and $Q_2$ whose columns form orthonormal bases for $Z$ and $Z^\perp$, respectively, where $Z^\perp$ denotes the orthogonal complement of $Z$ in $\R^{pn}$. 
    Let $\mathscr{H}$ be the vector space of  $pn\times pn$ symmetric matrices satisfying (\ref{r3}). Since $\mathscr{H}$ is in particular an affine space and $\mathscr{U}\subseteq\mathscr{H}$, we have $\mathscr{U}^{\operatorname{a}}\subseteq\mathscr{H}$.
    Given any $A\in\mathscr{H}$, let $\widehat{A}=Q_2^TAQ_2$.
    The condition (\ref{r3})
     implies that $Z\subseteq \ker(A)$, which means that $AQ_1=O$. Letting $Q=\begin{bmatrix}
        Q_1&Q_2
    \end{bmatrix}\in\R^{pn\times pn}$, we deduce that 
    \begin{align}
    \label{eqn_2023_07_06}
    Q^TAQ
    \spaceeq
    \begin{bmatrix}
        Q_1^TAQ_1&Q_1^TAQ_2\\Q_2^TAQ_1&Q_2^TAQ_2
    \end{bmatrix}
    \spaceeq
    \begin{bmatrix}
        O&O\\O&Q_2^TAQ_2
    \end{bmatrix}
    \spaceeq
    \begin{bmatrix}
        O&O\\O&\widehat{A}
    \end{bmatrix}.
    \end{align}
    We claim that $Z=\ker({M_0})$. The inclusion $Z\subseteq\ker(M_0)$ is clear from the fact that $M_0\in\mathscr{H}$.
Take $\bw\in Z^\perp$, and notice that this implies that there exists a constant $c$ for which $(\be_x\otimes \bone_n)^T\bw=c$ for every $x\in \Vset(\K_p)$. If $\bw\in\ker({M_0})$, we have
\begin{align*}
    \bzero
    \spaceeq
    {M_0}\bw
    \spaceeq
    \frac{1}{n}\bw+\frac{c(p-1)}{n^2}\bone_{pn}.
\end{align*}
It follows that $\bw=\frac{c(1-p)}{n}\bone_{pn}$, which gives, for any $x\in \Vset(\K_p)$, 
\begin{align*}
    c
    \spaceeq
    (\be_x\otimes \bone_n)^T\bw
    \spaceeq
    \frac{c(1-p)}{n}(\be_x\otimes \bone_n)^T\bone_{pn}
    \spaceeq
    c-pc,
\end{align*}
whence we find $c=0$ and, thus, $\bw=\bzero$. It follows that $\ker({M_0})\cap Z^\perp=\{\bzero\}$, which yields the claimed identity $Z=\ker({M_0})$.

Take  $N\in\mathscr{B}_{\mathscr{U}^{\operatorname{a}}}({M_0};\frac{1}{n^2})$; we need to show that $N\in\mathscr{U}\cap\mathscr{P}$. Observe that $N\in\mathscr{H}$, so $\widehat{N}$ is well defined. We claim that $\widehat{N}$ is a positive definite matrix. By~\eqref{eqn_2023_07_06}, this would imply that $Q^TNQ\in\mathscr{P}$ and, thus, that $N\in\mathscr{P}$.
Since $\mathscr{H}$ is a vector space, we have $N-M_0\in\mathscr{H}$; moreover,  $\widehat{N-{M_0}}=\widehat{N}-\widehat{{M_0}}$.
Let $q=\dim(Z)=\dim(\ker({M_0}))$, and take a vector $\bv\in\R^{pn-q}$ having unitary norm. Order the eigenvalues of ${M_0}$ as $\lambda_1({M_0})\leq\dots\leq\lambda_{pn}({M_0})$.
From~\eqref{eqn_2606_1633}, we see that $0=\lambda_{q}({M_0})<\lambda_{q+1}({M_0})=\frac{1}{n}$. Using the Courant--Fischer variational characterisation of the spectrum of symmetric matrices (see~\cite[\S~8.2]{hogben2013handbook}),
we deduce that
\begin{align*}
\bv^T\widehat{{M_0}}\bv
\spaceeq
(Q_2\bv)^T{M_0}(Q_2\bv)
\geq
\min_{\substack{\by\in Z^\perp\\\by^T\by=1}}\by^T{M_0}\by
\spaceeq
\lambda_{q+1}({M_0})
\spaceeq \frac{1}{n}.
\end{align*}  
Moreover, letting $\|\cdot\|_2$ denote the spectral matrix norm, we have
    \begin{align*}
        \lvert\bv^T\widehat{N-{M_0}}\bv\rvert
        \;\;&\leq\;\;
        \|\widehat{N-{M_0}}\|_2
        \;\;\leq\;\;
        \|N-{M_0}\|_2\;\|Q_2\|_2^2
        \spaceeq
        \|N-{M_0}\|_2\\
        &\leq\;\;
        \|N-{M_0}\|_{\operatorname{F}}
        \;\;<\;\;
        \frac{1}{n^2}.
    \end{align*}
    In the expression above, the first inequality comes from the definition of the spectral norm and the Cauchy--Schwarz inequality, the second inequality is due to the submultiplicativity of the spectral norm, the first equality follows from the fact that $\|Q_2\|_2=1$ since the columns of $Q_2$ are orthonormal, the third inequality is a standard property of matrix norms (see~\cite[Thm.~5.6.34]{Horn2012matrix}), and the fourth inequality holds since $N\in\mathscr{B}_{\mathscr{U}^{\operatorname{a}}}({M_0};{\frac{1}{n^2}})$.
    It follows that
    \begin{align*}
        \bv^T\widehat{N}\bv
        &\spaceeq
        \bv^T\widehat{{M_0}}\bv+\bv^T\widehat{N-{M_0}}\bv
        \;\;\geq\;\;
        \bv^T\widehat{{M_0}}\bv-\lvert\bv^T\widehat{N-{M_0}}\bv\rvert 
        \;\;>\;\;
        \frac{1}{n}-\frac{1}{n^2}
        \;\;\geq\;\;
        0,
    \end{align*}
    thus proving the claim.

    We are left to show that $N\in\mathscr{U}$. It suffices to prove that $N$ is entrywise nonnegative, as all other conditions describing $\mathscr{U}$ are implied by the fact that $N\in\mathscr{U}^{\operatorname{a}}$. Take $x,y\in\Vset(\X)$ and $a,b\in\Vset(\A)$. If $x=y$ and $a\neq b$, $(\be_x\otimes\be_a)^TN(\be_y\otimes\be_b)=0$ by (\ref{r1}). Otherwise, noticing that $\|M_0-N\|^2_{\operatorname{F}}$
    is the sum of the squares of the entries of $M_0-N$, we find
\begin{align*}
    \lvert(\be_x\otimes\be_a)^T(M_0-N)(\be_y\otimes\be_b)\rvert
    \;\;\leq\;\;
    \|M_0-N\|_{\operatorname{F}}
    \;\;<\;\;
    \frac{1}{n^2}.
\end{align*} 
Noting that $(\be_x\otimes\be_a)^TM_0(\be_y\otimes\be_b)\geq \frac{1}{n^2}$, it follows that  $(\be_x\otimes\be_a)^TN(\be_y\otimes\be_b)> 0$, which establishes that $N\geq 0$ and concludes the proof of the proposition.
\end{proof}

\noindent Finally, we prove Proposition~\ref{prop_no_Slater_condition_for_SDP_bis}, implying that there exist no Slater points for the program~\eqref{eqn_SDP_W}.
\begin{proof}[Proof of Proposition~\ref{prop_no_Slater_condition_for_SDP_bis}]
    \noindent Given $M_0\in \mathscr{W}\cap\mathscr P$ and ${r}>0$, choose two distinct vertices $x,y\in\Vset(\X)$ and a vertex $a\in\Vset(\A)$. Consider the matrix $H=\frac{1}{n}(J_p-\be_x\be_y^T-\be_y\be_x^T)\otimes I_n+(\be_x\be_y^T+\be_y\be_x^T)\otimes\be_a\be_a^T$ and the number $s=\min(\frac{r}{2p^2+1},1)$, and define $N=(1-s)M_0+sH$. It is straightforward to check that $H\in\mathscr{W}$; since $\mathscr{W}$ is a convex set, it follows that $N\in\mathscr{W}\subseteq\mathscr{W}^{\operatorname{a}}$. Observe that each matrix $M\in\mathscr{W}$ satisfies $\|M\|_{\operatorname{F}}\leq p^2$
    (because of (\ref{r6}) and the entrywise nonnegativity of the entries). Therefore,
\begin{align*}
    \|N-M_0\|_{\operatorname{F}}
    \spaceeq
    \|s(H-M_0)\|_{\operatorname{F}}
    \spaceeq
    s\|H-M_0\|_{\operatorname{F}}
    \;\;\leq\;\;
    s(\|H\|_{\operatorname{F}}+\|M_0\|_{\operatorname{F}})
    \;\;\leq\;\;
    2p^2s
    \;\;<\;\;
    r,
\end{align*}
thus showing that $N\in\mathscr{B}_{\mathscr{W}^{\operatorname{a}}}(M_0;{r})$. We now prove that $N\not\in\mathscr{P}$. Let the space $Z$ and the matrices $Q_1,Q_2,Q$
be defined in the same way as in the proof of Proposition~\ref{prop_Slater_condition_for_SDP}. Since $\mathscr{U}\cap\mathscr P= \mathscr{W}\cap\mathscr P$, we have that $M_0\in\mathscr{U}$ and, thus, $Q^TM_0Q=\begin{bmatrix}
    O&O\\O&Q_2^TM_0Q_2
\end{bmatrix}$. Noting that $H\in\mathscr{W}$ and, hence, it satisfies (\ref{r6}), we see that $[(\be_{x'}-\be_{y'})\otimes\bone_n]^TH[(\be_{x''}-\be_{y''})\otimes\bone_n]=0$ for each $x',x'',y',y''\in\Vset(\X)$. Hence, $\bw^TH\bw'=0$ for each $\bw,\bw'\in Z$, which implies that $Q_1^THQ_1=O$.
We deduce that
\begin{align}
\label{eqn_2806_100}
    Q^TNQ
    \spaceeq
    (1-s)Q^TM_0Q+sQ^THQ
    \spaceeq
    \begin{bmatrix}
    O&sQ_1^THQ_2\\sQ_2^THQ_1&Q_2^TNQ_2
\end{bmatrix}.
\end{align}
Letting $\bz=(\be_x-\be_y)\otimes\bone_n$,
observe that
\begin{align*}
    H\bz
    &\spaceeq
    \frac{1}{n}(J_p-\be_x\be_y^T-\be_y\be_x^T)(\be_x-\be_y)\otimes(I_n\bone_n)+(\be_x\be_y^T+\be_y\be_x^T)(\be_x-\be_y)\otimes(\be_a\be_a^T\bone_n)\\
    &\spaceeq
    \frac{1}{n}(\be_x-\be_y)\otimes\bone_n+(\be_y-\be_x)\otimes\be_a,
\end{align*}
which does not belong to $Z$ (as we are assuming $x\neq y$ and $n\geq 2$). It follows that $Q_2^TH\bz\neq\bzero$; indeed, otherwise, we would have $H\bz\in(Z^\perp)^\perp=Z$. Since $\bz\in Z$, we have $\bz=Q_1\bv$ for some vector $\bv$. We deduce that $Q_2^THQ_1\bv\neq\bzero$, which means that $Q_2^THQ_1\neq O$. It is well known that, if a diagonal entry of a positive semidefinite matrix is zero, the corresponding row and column are zero (see~\cite[Obs.~7.1.10]{Horn2012matrix}). Looking at~\eqref{eqn_2806_100}, we deduce that $Q^TNQ\not\in\mathscr{P}$, thus yielding $N\not\in\mathscr{P}$, as needed.
\end{proof}

\subsection{SDP${}^{\epsilon}$ vs. BA${}^k$}
\label{subsec_SDPeps_vs_BAk}
The goal of this subsection is to establish the following result, which states that the test $\SDP^\epsilon$ is not dominated by the $\BA$ hierarchy for $\epsilon$ sufficiently small.
\begin{thm}
\label{thm_no_dominance_SDPeps_BAk}
For each $k\in\N$ there exists $\epsilon>0$ such that $\SDP^{\epsilon}\not\preceq \BA^k$.
\end{thm}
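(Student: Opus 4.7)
The plan is to exhibit, for each $k\in\N$, an explicit pair $(\X,\A)$ of digraphs which $\BA^k$ accepts while the convex program~\eqref{eqn_SDP_U} has optimum bounded away from $0$; by the ellipsoid guarantee of Theorem~\ref{thm_ellipsoid_method_is_fast_with_Slater}, any matrix $M^\ast$ returned by $\SDP^{\epsilon}$ satisfies $f(M^\ast)\geq\nu$, so taking $\epsilon$ strictly smaller than this lower bound $\nu$ immediately forces $\SDP^{\epsilon}(\X,\A)=\NO$. The natural choice of witness is a clique pair: I would take $\X=\K_{n+1}$ and $\A=\K_{n}$ for $n=n(k)$ suitably large.

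The SDP lower bound is a direct application of the association-scheme framework. Both $\K_{n+1}$ and $\K_{n}$ are generously transitive with exactly two orbitals each, and their character tables are $P=\begin{bmatrix}1 & n\\1 & -1\end{bmatrix}$ and $\tilde P=\begin{bmatrix}1 & n-1\\1 & -1\end{bmatrix}$ by Remark~\ref{rem_association_scheme_cliques}. Applying Corollary~\ref{cor_critierion_acceptance_SDP} but relocating the edge condition (c4) into the objective $f$ of~\eqref{eqn_f_of_M}---as required by the formulation~\eqref{eqn_SDP_U}---a balanced feasible matrix is parametrized by a $2\times 2$ orbital matrix $V$ whose entries are forced by (c2) and (c3) to satisfy $v_{00}=1/n$, $v_{01}=0$, and $v_{10}+(n-1)v_{11}=1/n$. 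A direct computation of $PV\tilde P^T$ shows that (c1) reduces to the single nontrivial inequality $v_{11}\leq(n+1)/n^{3}$, whence $v_{10}\geq 1/n^{3}$. Since $f(M)=n^{2}(n+1)v_{10}$ on such $M$, and the averaging of Proposition~\ref{relaxation_matrix_can_be_balanced} preserves both feasibility and $f$, the optimum of~\eqref{eqn_SDP_U} is $\nu_{n}=(n+1)/n$, which is strictly greater than $1$ for every $n\geq 2$. Consequently, $\SDP^{\epsilon}(\K_{n+1},\K_{n})=\NO$ whenever $\epsilon<1$.

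For the $\BA^k$ side, I would argue that for $n$ sufficiently large relative to $k$, the $k$-th lift of $(\K_{n+1},\K_{n})$ admits simultaneous BLP- and AIP-feasible solutions obeying the BA refinement condition. The lifted instance treats each subset $S\subseteq\Vset(\K_{n+1})$ with $|S|\leq k$ as a variable whose admissible labelings are injective maps $S\to[n]$; for $n\geq k$, each such $S$ induces a subclique of size at most $n$ and therefore admits $|S|$-many distinct colours among $n$. The uniform probability measure on injective labelings furnishes a BLP solution with marginals consistent across overlapping subsets by the $\operatorname{Sym}_n$-symmetry of the construction; exploiting the same $\operatorname{Sym}_n$-action on the integer lattice spanned by these maps then provides an integer-valued AIP solution with full support, making the refinement condition vacuous. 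Hence $\BA^k(\K_{n+1},\K_{n})=\YES$ for every $n\geq n(k)$, where $n(k)$ is a suitable threshold depending on $k$.

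Combining the two sides, I would fix $n=n(k)$ and $\epsilon=\tfrac{1}{2}<\nu_{n(k)}$, yielding $(\K_{n+1},\K_{n})$ as the separating witness and thus proving $\SDP^{\epsilon}\not\preceq\BA^k$. The main technical hurdle is the $\BA^k$-acceptance step: although the symmetric lift is a natural construction, verifying rigorously that it produces integer AIP solutions consistent across all overlapping subsets of size up to $k$ requires a careful reading of the precise $\BA^k$ formalism from~\cite{cz23soda:minions}, and the threshold $n(k)$ may need to grow unboundedly with $k$ to absorb the extra AIP consistency constraints imposed at higher levels. The SDP optimum computation, by contrast, is essentially bookkeeping atop the association-scheme machinery already developed in Sections~\ref{sec_association_schemes}--\ref{sec_cycle_scheme}.
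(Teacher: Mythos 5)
Your overall strategy is the paper's own: use a pair of cliques as the separating instance, and use the association-scheme machinery to show that $\SDP^{\epsilon}$ must reject it. The $\SDP^{\epsilon}$ half of your argument is essentially correct and even slightly sharper than the paper's: by summing the forced entry $v_{10}\geq 1/n^3$ over all edge/non-edge pairs you bound the optimum of~\eqref{eqn_SDP_U} on $(\K_{n+1},\K_n)$ below by $(n+1)/n>1$, so any $\epsilon<1$ suffices, whereas Proposition~\ref{prop_no_acceptance_cliques_SDP_epsilon} only uses the single-entry bound $c\leq\epsilon$ and therefore needs $\epsilon<1/n^3$. One small point to make explicit: the averaging you invoke is not literally Proposition~\ref{relaxation_matrix_can_be_balanced} (matrices in $\mathscr{U}$ need not satisfy (\ref{r2})), but its adaptation to $\mathscr{U}\cap\mathscr{P}$ together with linearity of $f$, exactly as in the proof of Proposition~\ref{prop_critierion_acceptance_SDP_eps}; with that said, the reduction to balanced matrices and the $2\times 2$ orbital computation are fine.

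The genuine gap is the $\BA^k$-acceptance step, which you flag yourself but do not close. You propose to prove $\BA^k(\K_{n+1},\K_n)=\YES$ from scratch: the uniform distribution on injective labelings of each $\leq k$-set is indeed a consistent $\BLP$ solution, but the claim that ``the same $\operatorname{Sym}_n$-action on the integer lattice'' yields an \emph{integer}-valued $\AIP$ solution that satisfies all restriction-consistency equations of the level-$k$ lift and respects the refinement condition is precisely the nontrivial content here, and your sketch gives no mechanism for it (symmetrising over the group by averaging destroys integrality, and inclusion--exclusion-type signed constructions are exactly where the difficulty lies). Establishing this is the technical core of~\cite{cz23stoc:ba}---the construction behind Theorem~\ref{them_acceptnace_hollow_shadow}, where even the number of colours needed grows quadratically in $k$---so as written this step would not pass. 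The fix is simply to quote Theorem~\ref{them_acceptnace_hollow_shadow}: $\BA^k(\K_{k^2+1},\K_{k^2})=\YES$ for every $k\geq 2$ (with $k=1$ handled via $\BA^1\preceq\BA^2$), which is what the paper does; combined with your SDP bound at $n=k^2$ (or with Proposition~\ref{prop_no_acceptance_cliques_SDP_epsilon} and $\epsilon<1/k^6$), this completes the proof.
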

Since $\SDP^\epsilon\preceq\SDP\preceq\SDA$, it will follow that $\SDP\not\preceq\BA^k$ and $\SDA\not\preceq\BA^k$.\footnote{It is, however, an open question how the algorithms compare in terms of \emph{solvability} of PCSPs, see~Remark~\ref{rem_solvability_preorder}.}
In order to prove Theorem~\ref{thm_no_dominance_SDPeps_BAk}, we need to exhibit a class of digraphs that are correctly classified by $\SDP^\epsilon$ and not by $\BA^k$.
It turns out that cliques can serve as separating instances. Indeed, a result from~\cite{cz23stoc:ba} implies that the $\BA$ hierarchy is not sound on cliques. In contrast, $\SDP^\epsilon$ is able to correctly classify cliques provided that $\epsilon$ is small enough---as we shall prove next by leveraging the framework of association schemes.

We start by adapting the machinery developed in the previous sections to $\SDP^{\epsilon}$, thus obtaining the following result (akin to Corollary~\ref{cor_critierion_acceptance_SDP}).

\begin{prop}
\label{prop_critierion_acceptance_SDP_eps}
Let $\X$ and $\A$ be generously transitive digraphs, let $P$ and $\tilde{P}$ be the character tables of $\X$ and $\A$, respectively, let $\epsilon>0$, and suppose that $\SDP^\epsilon(\X,\A)=\YES$. Then there exists a real entrywise-nonnegative $\lvert\Orb(\X)\rvert\times\lvert\Orb(\A)\rvert$ matrix $V$ such that

    \begin{multicols}{2}
    \begin{itemize}
        \item[$(c_1)$\labeltext{$c_1$}{c1NEW}] $P V \tilde P^T\geq 0$;
        \item[$(c_2)$\labeltext{$c_2$}{c2NEW}] $V\bmu^\A=\bone$;
    \end{itemize}
    \end{multicols}
\vspace{-.5cm}
    \begin{itemize}
        \item[$(c_3)$\labeltext{$c_3$}{c3NEW}] $v_{\omega\tilde\omega}=0$ if $\omega$ is the diagonal orbital of $\X$ and $\tilde\omega$ is a non-diagonal orbital of $\A$;
        \item[$(c_4')$\labeltext{$c_4'$}{c4NEW}] $v_{\omega\tilde\omega}\leq\epsilon$
        if $\omega$ is an edge orbital of $\X$ and $\tilde\omega$ is a non-edge orbital of $\A$.
    \end{itemize} 
\end{prop}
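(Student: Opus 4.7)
The plan is to adapt the proof of Corollary~\ref{cor_critierion_acceptance_SDP} so that the relaxed edge condition $f(M)\leq\epsilon$, used in the definition of $\SDP^\epsilon$ in place of (\ref{r2}), is tracked through the symmetrisation procedure. The starting point is the matrix $M^\ast\in\mathscr{U}\cap\mathscr{P}$ returned by the ellipsoid method, which by assumption satisfies $f(M^\ast)\leq\epsilon$; this $M^\ast$ already enjoys positive semidefiniteness, entrywise nonnegativity, and the conditions (\ref{r1}), (\ref{r3}), (\ref{r4}), (\ref{r5}) defining $\mathscr{U}$.

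Next I would balance $M^\ast$ through Proposition~\ref{relaxation_matrix_can_be_balanced}, setting
$\overline{M}=\frac{1}{st}\sum_{\xi\in\Aut(\X),\,\alpha\in\Aut(\A)}(M^\ast)^{(\xi,\alpha)}$.
Since the conditions defining $\mathscr{U}\cap\mathscr{P}$ are preserved by each individual $(\xi,\alpha)$-transformation (by the same argument that shows an $\SDP$-matrix is sent to an $\SDP$-matrix) and by convex combinations, $\overline{M}$ still lies in $\mathscr{U}\cap\mathscr{P}$ and is balanced. Proposition~\ref{prop_orbital_matrix_of_balanced_matrix} then furnishes a well-defined orbital matrix $V$, whose entrywise nonnegativity follows from $\overline{M}\geq 0$. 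The verification of (\ref{c1NEW}) via Theorem~\ref{prop_decomposing_spectrum_M_orbital} applied to $\overline{M}\succcurlyeq 0$, of (\ref{c2NEW}) via (\ref{r6}) obtained from Proposition~\ref{prop_equivalence_some_conditions_SDP}, and of (\ref{c3NEW}) via (\ref{r1}), then reduces verbatim to the corresponding arguments in the proof of Corollary~\ref{cor_critierion_acceptance_SDP}.

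The one new step is (\ref{c4NEW}). For $\omega$ an edge orbital of $\X$ and $\tilde\omega$ a non-edge orbital of $\A$, I would pick representatives $(x,y)\in\omega$ and $(a,b)\in\tilde\omega$, and note that $v_{\omega\tilde\omega}=(\be_x\otimes\be_a)^T\overline{M}(\be_y\otimes\be_b)$. Expanding the averaging formula for $\overline{M}$ and applying Lemma~\ref{lem_basic_Q_f} to the factors $Q_\xi\otimes Q_\alpha^T$ and $Q_\xi^T\otimes Q_\alpha$, this becomes the mean over $(\xi,\alpha)$ of the quantities $(\be_{\xi(x)}\otimes\be_{\alpha^{-1}(a)})^T M^\ast(\be_{\xi(y)}\otimes\be_{\alpha^{-1}(b)})$. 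Because $\Aut(\X)$ maps $\Eset(\X)$ to itself and $\Aut(\A)$ maps $\Vset(\A)^2\setminus\Eset(\A)$ to itself, each such summand is one of the nonnegative terms contributing to $f(M^\ast)$, and is therefore bounded by $f(M^\ast)\leq\epsilon$; averaging then preserves this bound, yielding $v_{\omega\tilde\omega}\leq\epsilon$.

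The main obstacle, though ultimately minor, is confirming that the approximate edge condition survives balancing \emph{entry-by-entry}; this is where entrywise nonnegativity of $M^\ast$ is crucial, since without it we could only control the \emph{sum} $f(\overline{M})$ rather than each individual entry indexed by a pair of orbitals. Apart from this point, the proposition is a tidy recycling of the symmetry machinery already developed in Sections~\ref{sec_automorphisms_and_orbitals}--\ref{sec_association_schemes}.
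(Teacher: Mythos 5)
Your proposal is correct and follows essentially the same route as the paper: take the witness $M^\ast\in\mathscr{U}\cap\mathscr{P}$ with $f(M^\ast)\leq\epsilon$, symmetrise it via the averaging of Proposition~\ref{relaxation_matrix_can_be_balanced} (noting that the conditions defining $\mathscr{U}\cap\mathscr{P}$ survive each $(\xi,\alpha)$-transformation and convex combinations), and then reuse the arguments of Corollary~\ref{cor_critierion_acceptance_SDP} for (\ref{c1NEW})--(\ref{c3NEW}). The only cosmetic difference is in (\ref{c4NEW}): you bound each summand $(\be_{\xi(x)}\otimes\be_{\alpha^{-1}(a)})^TM^\ast(\be_{\xi(y)}\otimes\be_{\alpha^{-1}(b)})$ by $f(M^\ast)\leq\epsilon$ before averaging, whereas the paper first observes $f(\overline M)=f(M^\ast)$ and then bounds each relevant entry of $\overline M$ by $f(\overline M)$; both hinge on the same entrywise nonnegativity, which you correctly flag as the crucial point.
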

\begin{proof}
    Let $M\in\mathscr{U}\cap\mathscr{P}$ be a matrix witnessing that $\SDP^\epsilon(\X,\A)=\YES$, and observe that $f(M)\leq\epsilon$ by the definition of $\SDP^\epsilon$ (where $f(M)$ is defined in~\eqref{eqn_f_of_M}). Given two automorphisms $\xi\in\Aut(\X)$ and $\alpha\in\Aut(\A)$, consider the matrix
\begin{align*}
        M^{(\xi,\alpha)}
        \spaceeq
        (Q_\xi\otimes Q_\alpha^T)M(Q_\xi^T\otimes Q_\alpha).
    \end{align*}
    The same argument as in the proof of Proposition~\ref{prop_relaxation_matrix_preservation_homo} (see Appendix~\ref{sec:app}) shows that $M^{(\xi,\alpha)}\in\mathscr{U}\cap\mathscr{P}$ and $f(M^{(\xi,\alpha)})=f(M)$.
    Hence, a minor modification of the proof of Proposition~\ref{relaxation_matrix_can_be_balanced} implies that the matrix
    \begin{align*}
    \overline M
    \spaceeq
    \frac{1}{\lvert\Aut(\X)\rvert\cdot\lvert\Aut(\A)\rvert}\sum_{\substack{\xi\in\Aut(\X)\\\alpha\in\Aut(\A)}}M^{(\xi,\alpha)}
\end{align*}
    is a balanced matrix for $\X,\A$ and satisfies $\overline M\in \mathscr{U}\cap\mathscr{P}$ and $f(\overline M)=f(M)$. In particular, this means that $(\be_x\otimes\be_a)^T\overline M(\be_y\otimes\be_b)\leq\epsilon$ for each $(x,y)\in\Eset(\X)$, $(a,b)\in \Vset(\A)^2\setminus\Eset(\A)$. We can then express $\overline M$ in the basis $\mathscr{R}$ as per Proposition~\ref{prop_orbital_matrix_of_balanced_matrix}, and conclude the proof by making use of Theorem~\ref{prop_decomposing_spectrum_M_orbital} in the same way as in the proof of Corollary~\ref{cor_critierion_acceptance_SDP}.
\end{proof}

We point out that the necessary condition for $\SDP^\epsilon$ acceptance expressed in Proposition~\ref{prop_critierion_acceptance_SDP_eps} is not sufficient, unlike the similar condition for $\SDP$ in Corollary~\ref{cor_critierion_acceptance_SDP}. 
    The reason is that the existence of a feasible solution to~\eqref{eqn_SDP_U} with objective-function value at most $\epsilon$ does not guarantee that $\SDP^{\epsilon}(\X,\A)=\YES$. Indeed, the only piece of information we would be able to derive in this case is that the optimal value $\nu$ of the program~\eqref{eqn_SDP_U} satisfies $\nu\leq\epsilon$. Theorem~\ref{thm_ellipsoid_method_is_fast_with_Slater} would then guarantee that the ellipsoid method finds a feasible solution $M^\ast$ with $f(M^\ast)\leq\epsilon+\nu\leq 2\epsilon$, while to ensure that $\SDP^\epsilon(\X,\A)=\YES$ we need that $f(M^\ast)\leq\epsilon$.

We now show that $\SDP^{\epsilon}$ is sound on cliques for $\epsilon$ small enough, by using the description of the corresponding association schemes.
Of course, it follows that the same holds for $\SDP$ and $\SDA$, since $\SDP^{\epsilon}\preceq\SDP\preceq\SDA$.

\begin{prop}
\label{prop_no_acceptance_cliques_SDP_epsilon}
Let $p,n\geq 2$ and $0<\epsilon<\frac{1}{n^3}$. Then
  $\SDP^{\epsilon}(\K_p,\K_n)=\YES$ if and only if 
  $p\leq n$.
\end{prop}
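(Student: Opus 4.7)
The plan is to handle the two directions separately. The ``if'' direction is immediate: if $p \leq n$, then the identity embedding is a homomorphism $\K_p \to \K_n$, so by completeness of $\SDP^{\epsilon}$ (Theorem~\ref{thm_SDP_eps_complete_pol}) we get $\SDP^{\epsilon}(\K_p,\K_n) = \YES$.

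For the ``only if'' direction, I would contrapose: assume $p > n$ and $\SDP^{\epsilon}(\K_p,\K_n) = \YES$, and derive a contradiction. Since both $\K_p$ and $\K_n$ are generously transitive with just two orbitals each (the diagonal and the edge orbital), and their character tables are known explicitly (Remark~\ref{rem_association_scheme_cliques}), Proposition~\ref{prop_critierion_acceptance_SDP_eps} yields an entrywise-nonnegative $2\times 2$ matrix $V = (v_{ij})_{i,j\in\{0,1\}}$ satisfying conditions (\ref{c1NEW})--(\ref{c4NEW}). The plan is to reduce these constraints to a single inequality on the entry $v_{10}$ and show that it is incompatible with $v_{10} \leq \epsilon < 1/n^3$.

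Concretely, (\ref{c3NEW}) forces $v_{01}=0$, and then (\ref{c2NEW}), together with $\bmu^{\K_n} = (n, n(n-1))^T$, determines $v_{00} = 1/n$ and $v_{11} = \frac{1 - n\,v_{10}}{n(n-1)}$. A direct multiplication using the character tables $P = \begin{bmatrix}1 & p-1 \\ 1 & -1\end{bmatrix}$ and analogously for $\tilde P$ shows that three of the four entries of $P V \tilde P^T$ are either identically zero, equal to $p/n$, or nonnegative as a consequence of (\ref{c2NEW}); only the $(0,1)$-entry gives a substantive constraint, which simplifies (after clearing denominators) to $(n-p) + (p-1)\,n^2\,v_{10} \geq 0$, i.e.\ $v_{10} \geq \frac{p-n}{(p-1)n^2}$.

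The final step is the arithmetic check: for $p \geq n+1$, the inequality $\frac{p-n}{p-1} \geq \frac{1}{n}$ holds (equivalent to $(p-n)n \geq p-1$, i.e.\ $p(n-1) \geq n^2-1$), so $v_{10} \geq \frac{1}{n^3}$. Combined with (\ref{c4NEW}), this gives $\frac{1}{n^3} \leq v_{10} \leq \epsilon < \frac{1}{n^3}$, a contradiction. The main obstacle is simply bookkeeping in the $2\times 2$ computation of $P V \tilde P^T$ and verifying that the extremal case $p = n+1$ realises the tight bound $\frac{1}{n^3}$ that motivates the hypothesis $\epsilon < 1/n^3$; no further ingenuity beyond Proposition~\ref{prop_critierion_acceptance_SDP_eps} is needed.
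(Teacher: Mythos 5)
Your proposal is correct and follows essentially the same route as the paper: both invoke Proposition~\ref{prop_critierion_acceptance_SDP_eps} with the clique character tables from Remark~\ref{rem_association_scheme_cliques}, reduce everything to the $(0,1)$-entry of $PV\tilde P^T$, and play the resulting lower bound on the entry $v_{10}$ (the paper's $c$) against the upper bound $v_{10}\leq\epsilon<\frac{1}{n^3}$ from (\ref{c4NEW}). The only cosmetic difference is that the paper deduces $p\leq n$ directly from $0\leq n-p+n^2c(p-1)<\frac{(n-1)(n+1-p)}{n}$, whereas you contrapose and exhibit the contradiction at $p\geq n+1$; the algebra is identical.
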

\begin{proof}
    If $p\leq n$, $\K_p\to\K_n$; since $\SDP^{\epsilon}$ is a complete test by Theorem~\ref{thm_SDP_eps_complete_pol}, it follows that $\SDP^{\epsilon}(\K_p,\K_n)=\YES$.
    Conversely, suppose that $\SDP^{\epsilon}(\K_p,\K_n)=\YES$. 
    Recall the description of the association schemes for cliques in Remark~\ref{rem_association_scheme_cliques}, and observe that it implies $\bmu^{\K_n}=\begin{bmatrix}
        n\\n^2-n
    \end{bmatrix}$.
    Let $V=\begin{bmatrix}
        a&b\\c&d
    \end{bmatrix}$ be the $2\times 2$ matrix whose existence is guaranteed by Proposition~\ref{prop_critierion_acceptance_SDP_eps}, and notice that the requirements (\ref{c3NEW}) and (\ref{c4NEW}) yield $b=0$ and $c\leq\epsilon$, respectively. As for (\ref{c2NEW}), it gives
    \begin{align*}
        \begin{bmatrix}
            1\\1
        \end{bmatrix}
        &\spaceeq
        \begin{bmatrix}
        a&0\\c&d
    \end{bmatrix}
    \begin{bmatrix}
        n\\n^2-n
    \end{bmatrix}
    \spaceeq
    \begin{bmatrix}
        an\\
        cn+d(n^2-n)
    \end{bmatrix},
    \end{align*}
    whence it follows that
        $a= c+d(n-1)=\frac{1}{n}$ and $c-d=\frac{n^2c-1}{n^2-n}$.
Letting $P$ and $\tilde P$ denote the character tables of $\K_p$ and $\K_n$, respectively, we obtain
\begin{align*}
    V\tilde P^T
    \spaceeq
    \begin{bmatrix}
        a&0\\c&d
    \end{bmatrix}
    \begin{bmatrix}
        1&1\\
        n-1&-1
    \end{bmatrix}
    \spaceeq
    \begin{bmatrix}
        a&a\\
        c+d(n-1)&c-d
    \end{bmatrix}
    \spaceeq
    \begin{bmatrix}
        \frac{1}{n}&\frac{1}{n}\\\frac{1}{n}&\frac{n^2c-1}{n^2-n}
    \end{bmatrix}.
\end{align*}
Therefore,
\begin{align*}
    PV\tilde P^T
    \spaceeq
    \begin{bmatrix}
        1&p-1\\1&-1
    \end{bmatrix}
    \begin{bmatrix}
        \frac{1}{n}&\frac{1}{n}\\\frac{1}{n}&\frac{n^2c-1}{n^2-n}
    \end{bmatrix}
    \spaceeq
    \begin{bmatrix}
        \frac{p}{n}&\frac{n-p+n^2c(p-1)}{n^2-n}
        \\
        0&\frac{1-nc}{n-1}
    \end{bmatrix}.
\end{align*}
The condition (\ref{c1NEW}) implies in particular that the $(1,2)$-th entry of the matrix above is nonnegative. Combining this with the fact that $c\leq\epsilon$ and the assumption that $\epsilon<\frac{1}{n^3}$ yields
\begin{align*}
    0
    \;\;\leq\;\;
    n-p+n^2c(p-1)
    \;\;<\;\;
    n-p+\frac{p-1}{n}
    \spaceeq
    \frac{n-1}{n}(n+1-p),
\end{align*}
which implies that $p\leq n$, as needed.
\end{proof}

\noindent On the other hand, the next result from~\cite{cz23stoc:ba}---used to rule out solvability of approximate graph colouring through the $\BA$ hierarchy---shows that $\BA^k(\X,\A)$ always accepts if $\A$ is a large-enough clique.

\begin{thm}[\cite{cz23stoc:ba}]
\label{them_acceptnace_hollow_shadow}
    Let $2\leq k\in\N$ and let $\X$ be a loopless digraph. Then $\BA^k(\X,\K_{k^2})=\YES$.
\end{thm}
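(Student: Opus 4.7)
The plan is to exhibit explicit solutions to both the $\BLP$ and $\AIP$ components of $\BA^k$ applied to $(\X, \K_{k^2})$, together with the refinement condition linking them, thus witnessing acceptance.

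For the $\BLP$ part at level $k$, I would use the \emph{uniform distribution over injective partial assignments}: for each $S \subseteq \Vset(\X)$ with $|S| \leq k$ and each $f : S \to [k^2]$, set $\lambda_{S, f} = \frac{(k^2 - |S|)!}{(k^2)!}$ when $f$ is injective and $0$ otherwise. The inequality $k \leq k^2$ guarantees that injections always exist, and marginalising a uniform distribution over injections $S \to [k^2]$ onto a subset $T \subseteq S$ produces the uniform distribution over injections $T \to [k^2]$, so the level-$k$ marginalisation and normalisation constraints of $\BLP$ are satisfied automatically. Since $\X$ is loopless, every edge $(x, y) \in \Eset(\X)$ has $x \neq y$, and any injective $f$ sends it to a pair of distinct colours, which is an edge of $\K_{k^2}$; this handles the edge constraints. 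This uniform solution is intrinsically symmetric under the action of $\operatorname{Sym}_{k^2}$ on colours, which will be essential for the next step.

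For the $\AIP$ part, I need an integer-valued family $\mu_{S, f}$ satisfying the normalisation $\sum_a \mu_{\{x\}, a} = 1$ together with all the level-$k$ marginalisation identities, and supported on injections so that the refinement condition holds against the $\BLP$ solution above. The idea is to exploit the gap $k^2 \gg k$: because the target clique is strictly larger than the support size, one can write $\mu_{S,f}$ as a signed integer combination of indicator functions of injections, chosen so that the restriction of the combination to any $T \subseteq S$ collapses to the analogous combination on $T$. Concretely, the system of $\AIP$ equations restricted to injective assignments is a finite $\Z$-linear system whose rational solvability is witnessed by the $\BLP$ solution, and whose denominators one can clear by invoking an orbit-counting identity coming from the transitive $\operatorname{Sym}_{k^2}$-action on injections $S \hookrightarrow [k^2]$ for each $|S| \leq k$.

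The main obstacle is the $\AIP$ step: the $\BLP$ solution is a natural symmetrisation argument, but $\AIP$ demands \emph{integer} weights, so one cannot simply rescale the uniform-over-injections distribution. The technical heart of the argument lies in producing the signed integer combination described above, and in verifying that the marginalisation identities close up consistently across all levels $\leq k$; the compatibility with the edge constraints of $\K_{k^2}$ is then free, since $\mu$ is supported on injections and distinct vertices necessarily receive distinct colours. Once $\mu$ is constructed, the refinement condition is immediate (both $\lambda$ and $\mu$ are supported on injective partial maps), and $\BA^k(\X, \K_{k^2}) = \YES$ follows.
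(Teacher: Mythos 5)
This statement is not proved in the paper at all: it is quoted from~\cite{cz23stoc:ba}, where it is the main technical result and its proof occupies most of that paper (via a reduction of the $k$-th level of the hierarchy to a single application of $\BA$ on a tensor-power instance and the ``hollow shadow'' machinery for building the integral part of the solution). So the relevant question is whether your sketch actually reconstructs a proof, and it does not. Your $\BLP$ part is fine: the uniform distribution over injective partial assignments is consistent under marginalisation and, since $\X$ is loopless, is supported on partial homomorphisms to $\K_{k^2}$; and if an integral family $\mu$ supported on injections existed, the refinement condition would indeed be automatic because the uniform $\lambda$ has full support on injections.

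The genuine gap is the $\AIP$ step, which you leave as an assertion. Rational feasibility of the level-$k$ system (witnessed by $\lambda$) does not imply integral feasibility: you cannot rescale to clear denominators, because the constraints are affine with right-hand side $1$ (any scaling destroys the normalisation $\sum_a \mu_{\{x\},a}=1$), and you cannot average over the $\operatorname{Sym}_{k^2}$-action, because integral solutions are not closed under convex combinations (the paper itself notes this in the footnote after Proposition~\ref{prop_matrix_char_SDP_new_form}). An ``orbit-counting identity'' gives no mechanism for producing integer weights: the obstruction to integrality of an affine system lies in the Smith normal form of the constraint matrix, not in its rational solvability, and nothing in your sketch addresses it. Constructing signed integer weights on injections whose marginals close up consistently across \emph{all} sets of size up to $k$ simultaneously (not just on a single edge, where an ad hoc signed matrix is easy to write down) is exactly the hard content of the theorem; in~\cite{cz23stoc:ba} this is what requires the tensor/shadow apparatus. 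As written, your argument proves only the $\BLP$ half, i.e.\ essentially that $\BLP^k$ accepts, which is far weaker and does not yield $\BA^k(\X,\K_{k^2})=\YES$.
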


\noindent It follows, in particular, that the $\BA$ hierarchy is not sound on cliques.
We can then finalise the proof of Theorem~\ref{thm_no_dominance_SDPeps_BAk}.

\begin{proof}[Proof of Theorem~\ref{thm_no_dominance_SDPeps_BAk}]
It is enough to prove the result for $k$ large enough since $\BA^k\preceq\BA^{k+1}$ for any $k\in\N$ (as follows from the definition of the $\BA$ hierarchy and as was shown in a more general setting of lift-and-project hierarchies of relaxations in~\cite{cz23soda:minions}). For $k\geq 2$, choose a positive $\epsilon<\frac{1}{k^6}$.
We have from Theorem~\ref{them_acceptnace_hollow_shadow} that $\BA^k(\K_{k^2+1},\K_{k^2})=\YES$, while  Proposition~\ref{prop_no_acceptance_cliques_SDP_epsilon} implies that $\SDP^\epsilon(\K_{k^2+1},\K_{k^2})=\NO$. We conclude that $\SDP^{\epsilon}\not\preceq \BA^k$, as required.
\end{proof}

\begin{rem} 
\label{rem_solvability_preorder}
The partial order ``$\preceq$'' described in Definition~\ref{defn_tests_and_annexes} concerns \emph{acceptance} of tests. A different way of comparing two tests is by considering \emph{solvability}. More precisely, let $\mathfrak{E}=\{(\A,\B)\in\mathfrak{D}^2:\A\to\B\}$. We say that a test $T$ \emph{solves} $(\A,\B)\in\mathfrak{E}$
if, for each $\X\in\mathfrak{D}$, $\X\to\A$ implies $T(\X,\A)=\YES$ and $\X\not\to\B$ implies $T(\X,\A)=\NO$. We can then define a preorder ``$\leq$'' on the set of tests by setting $T_1\leq T_2$ when, for any $(\A,\B)\in\mathfrak{E}$, if $T_1$ solves $(\A,\B)$ then $T_2$ also solves it.
It is not difficult to check that  ``$\leq$''  is finer than ``$\preceq$'' on complete tests; i.e., for two complete tests $T_1$ and $T_2$,
if $T_1\preceq T_2$ then $T_1\leq T_2$. Observe that ``$\leq$'' is not antisymmetric and, thus, it is not a partial order.

The results proved in the current subsection imply that $\SDA\not\preceq \BA^k$ for any $k\in\N$.
    On the other hand, the question of whether $\SDA\leq \BA^k$ for some $k$ is, to the best of the authors' knowledge, open, and Theorem~\ref{thm_no_dominance_SDPeps_BAk} does not provide any evidence supporting a negative answer.
    In fact, it is not even clear how $\SDP$ compares to the Sherali--Adams LP hierarchy---i.e., the lift-and-project hierarchy built on top of the $\BLP$ algorithm, see~\cite{tz17:sicomp,cz23soda:minions}---in terms of the preorder ``$\leq$''. Finally, we point out that,
    in the definition of ``$\leq$'', we may replace $\mathfrak{E}$ with the smaller set of pairs $(\A,\A)$ with $\A\in\mathfrak{D}$. The resulting preorder ``$\leq_{\mbox{\tiny CSP}}$'' would correspond to comparing solvability for CSPs rather than PCSPs.  
    It is known that $\SDP$ and any level of the Sherali--Adams LP hierarchy higher than $2$ are $\leq_{\mbox{\tiny CSP}}$-equivalent, as they all solve precisely those CSPs having bounded width~\cite{Barto16:sicomp,tz17:sicomp,tz18}.
\end{rem}

{\small
\bibliographystyle{plainurl}
\bibliography{cz_bibliography}
}

\appendix

\section{Appendix} 
\label{sec:app}
This section contains the proofs of 
the results stated in Section~\ref{sec_preliminaries}.

\begin{prop*}[Proposition~\ref{prop_equivalence_some_conditions_SDP} restated]
Let $M$ be a real $pn\times pn$ matrix. Then
\begin{itemize}
    \item[$(i)$] the conditions \emph{(\ref{r3})}, \emph{(\ref{r4})}, and \emph{(\ref{r5})} imply the condition \emph{(\ref{r6})};
    \item[$(ii)$] if $M\succcurlyeq 0$, the conditions \emph{(\ref{r3})}, \emph{(\ref{r4})}, and \emph{(\ref{r5})} are equivalent to the condition \emph{(\ref{r6})}.
\end{itemize}
\end{prop*}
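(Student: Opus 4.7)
\smallskip

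\noindent\emph{Proof plan.}
The plan is to exploit the identity $\bone_{pn}=\sum_{x\in\Vset(\X)}\be_x\otimes\bone_n$ together with the relevant ``kernel'' characterisation of the zero-locus of the Rayleigh quotient of a positive semidefinite matrix, as hinted at in the paragraph preceding the statement.

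For part $(i)$, I would first observe that (\ref{r3}) says the vector $M(\be_x\otimes\bone_n)$ does not depend on $x$, and (\ref{r4}) says the vector $M^T(\be_x\otimes\bone_n)$ does not depend on $x$ either. Consequently, the scalar $c_{x,y}=(\be_x\otimes\bone_n)^TM(\be_y\otimes\bone_n)$ depends on neither $x$ nor $y$, so it equals some constant $c$. Summing over all pairs $(x,y)\in\Vset(\X)^2$ and using $\bone_{pn}=\sum_x\be_x\otimes\bone_n$ gives
\begin{align*}
    p^2 c
    \spaceeq
    \sum_{x,y\in\Vset(\X)}c_{x,y}
    \spaceeq
    \bone_{pn}^TM\bone_{pn}
    \spaceeq
    p^2
\end{align*}
by (\ref{r5}), so $c=1$, which is precisely (\ref{r6}).

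For part $(ii)$, assuming $M\succcurlyeq 0$, I already have one direction from $(i)$. For the converse, (\ref{r5}) follows by summing (\ref{r6}) over all pairs $(x,y)$ and again using $\bone_{pn}=\sum_x\be_x\otimes\bone_n$. To establish (\ref{r3}), I would apply the fact recalled in the excerpt that if $M\succcurlyeq 0$ and $\bv^TM\bv=0$, then $M\bv=\bzero$. Fix $x,y\in\Vset(\X)$ and let $\bv=(\be_x-\be_y)\otimes\bone_n$. Expanding $\bv^TM\bv$ into four terms, using the symmetry of $M$ (which is built into $M\succcurlyeq 0$ in this paper's convention) and (\ref{r6}) to evaluate each term to $1$, yields $\bv^TM\bv=1-1-1+1=0$. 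Therefore $M\bv=\bzero$, which is (\ref{r3}); and (\ref{r4}) follows from (\ref{r3}) by the symmetry of $M$.

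The only delicate point is the kernel-vs-quadratic-form fact used in part $(ii)$: the implication ``$\bv^TM\bv=0\Rightarrow M\bv=\bzero$'' is false for arbitrary symmetric $M$ but holds for $M\succcurlyeq 0$; this is exactly why the hypothesis cannot be dropped and why the implication in $(i)$ is strict in general. Everything else is an elementary reorganisation of the defining identities.
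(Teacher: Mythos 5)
Your proof is correct and follows essentially the same route as the paper: part $(i)$ by averaging the constant value of $(\be_x\otimes\bone_n)^TM(\be_y\otimes\bone_n)$ against (\ref{r5}), and part $(ii)$ by applying $\bv^TM\bv=0\Rightarrow M\bv=\bzero$ for $M\succcurlyeq 0$ to $\bv=(\be_x-\be_y)\otimes\bone_n$, with the paper merely making that fact explicit via a Cholesky factorisation $M=L^TL$. No gaps.
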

\begin{proof}
    If $M$ satisfies (\ref{r3}) and (\ref{r4}), the expression on the left-hand side of (\ref{r6}) has the same value for each choice of $x,y\in \Vset(\X)$. Therefore, (\ref{r5}) yields
    \begin{align*}
        p^2
        &\spaceeq
        \bone_{pn}^TM\bone_{pn}
        \spaceeq
        (\bone_p\otimes\bone_n)^TM(\bone_p\otimes\bone_n)
        \spaceeq
        \sum_{x',y'\in \Vset(\X)}(\be_{x'}\otimes\bone_n)^TM(\be_{y'}\otimes\bone_n)\\
        &\spaceeq
        p^2 (\be_x\otimes\bone_n)^TM(\be_y\otimes\bone_n),
    \end{align*}
    which implies (\ref{r6}). 
    
    Suppose now that $M$ is positive semidefinite and satisfies (\ref{r6}), and consider a Cholesky decomposition $M=L^TL$. For each $x,y\in \Vset(\X)$, we find
    \begin{align*}
    0\spaceeq
        ((\be_x-\be_y)\otimes\bone_n)^T M ((\be_x-\be_y)\otimes\bone_n)
        \spaceeq
        \|L((\be_x-\be_y)\otimes\bone_n)\|^2.
    \end{align*}
We deduce that $L((\be_x-\be_y)\otimes\bone_n)=\bzero$, so 
\begin{align*}
    \bzero
    \spaceeq
    L^TL((\be_x-\be_y)\otimes\bone_n)
    \spaceeq
    M((\be_x-\be_y)\otimes\bone_n)
    \spaceeq
    M(\be_x\otimes\bone_n)-M(\be_y\otimes\bone_n),
\end{align*}
as needed to prove (\ref{r3}) (as well as (\ref{r4}), since $M$ is symmetric). Furthermore,
\begin{align*}
    \bone^T_{pn}M\bone_{pn}
    \spaceeq
    \sum_{(x,y)\in \Vset(\X)^2}(\be_x\otimes\bone_n)^T M (\be_y\otimes\bone_n)
    \spaceeq
    \sum_{(x,y)\in \Vset(\X)^2} 1
    \spaceeq
    p^2,
\end{align*}
which shows that (\ref{r5}) holds, as well.
\end{proof}

\begin{prop*}[Proposition~\ref{prop_matrix_formulation_SDP_SDA} restated]
    Let $\X,\A$ be digraphs. Then
    \begin{itemize}
        \item[$(i)$] $\SDP(\X,\A)=\YES$ if and only if there exists an $\SDP$-matrix for $\X,\A$;
        \item[$(ii)$] if $\X$ is loopless, $\SDA(\X,\A)=\YES$ if and only if there exist an $\SDP$-matrix $M$ and an $\AIP$-matrix $N$ for $\X,\A$ such that $N\circ ((I_p+\adj(\X))\otimes J_n)\;\;\triangleleft\;\;M$.
    \end{itemize}
\end{prop*}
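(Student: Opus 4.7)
The approach is to build explicit two-way correspondences between the vector/scalar formulations and the matrix formulations, in both directions.

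For part $(i)$, given a solution $\blambda$ to~\eqref{eqns_SDP}, I would assemble the vectors $\blambda_{x,a}$ into a $pn\times pn$ matrix $L$ whose columns are indexed by the pairs $(x,a)$, and set $M=L^TL$. Then $M\succcurlyeq 0$ automatically, and the identity $(\be_x\otimes\be_a)^T M (\be_y\otimes\be_b) = \blambda_{x,a}\cdot\blambda_{y,b}$ translates (\ref{SDP1}) into $M\geq 0$, (\ref{SDP2}) into (\ref{r1}), and (\ref{SDP3}) into (\ref{r2}). Summing over $a,b$ yields $(\be_x\otimes\bone_n)^T M (\be_y\otimes\bone_n)=\blambda_{x,\A}\cdot\blambda_{y,\A}$, so (\ref{SDP4}) becomes precisely (\ref{r6}); Proposition~\ref{prop_equivalence_some_conditions_SDP}$(ii)$ then promotes this to (\ref{r3}), (\ref{r4}), (\ref{r5}) thanks to $M\succcurlyeq 0$. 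Conversely, given an $\SDP$-matrix $M$, take any decomposition $M=L^TL$ (which exists since $M\succcurlyeq 0$), define $\blambda_{x,a}$ as the $(x,a)$-th column of $L$, and read off (\ref{SDP1})--(\ref{SDP4}) from the same identities, this time using Proposition~\ref{prop_equivalence_some_conditions_SDP}$(i)$ to pass from (\ref{r3}), (\ref{r4}), (\ref{r5}) to (\ref{r6}).

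For part $(ii)$, the $\SDP$ part is handled exactly as in $(i)$. For the $\AIP$ part, given a solution $\mu$, I would define the integer matrix $N$ block-by-block: in each diagonal block $(x,x)$ place $\mu_{x,a}$ on the $(a,a)$-th diagonal entry and $0$ elsewhere; in each edge block $\bx=(x,y)\in\Eset(\X)$ place $\mu_{\bx,\ba}$ on the $(a,b)$-th entry when $\ba=(a,b)\in\Eset(\A)$ and $0$ elsewhere; in each remaining block $(x,y)$ use the integer outer product $\bmu_x\bmu_y^T$, where $\bmu_x=(\mu_{x,a})_{a\in\Vset(\A)}$. A direct verification shows that the row-sum vector of the $(x,y)$-th block equals $\bmu_x$ in all three cases---for edge blocks this uses (\ref{AIP2}) with $i=1$, for other blocks it uses (\ref{AIP1})---and symmetrically the column-sum vector equals $\bmu_y$. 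This gives (\ref{r3}), (\ref{r4}), and, combined with (\ref{AIP1}), also (\ref{r5}); conditions (\ref{r1}) and (\ref{r2}) hold by construction, so $N$ is an $\AIP$-matrix. Conversely, from an $\AIP$-matrix $N$ one reads off $\mu_{x,a}$ from diagonal blocks and $\mu_{\bx,\ba}$ from edge blocks; (\ref{AIP1}) is then derived from (\ref{r6}) (obtained via Proposition~\ref{prop_equivalence_some_conditions_SDP}$(i)$) and (\ref{AIP2}) is read off directly from (\ref{r3}) and (\ref{r4}).

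The main subtlety is verifying that the support condition $N\circ((I_p+\adj(\X))\otimes J_n)\,\triangleleft\, M$ faithfully encodes the refinement condition~\eqref{eqn_refinement_condition_SDA}. Here the assumption that $\X$ is loopless is used so that $I_p+\adj(\X)$ is a Boolean matrix, ensuring that the Schur product zeroes out every block of $N$ except the diagonal and the edge ones. Within those blocks, the only entries of $N$ that can be nonzero are the $\mu_{x,a}$ and $\mu_{\bx,\ba}$ positions, which sit respectively at $M$-entries equal to $\|\blambda_{x,a}\|^2$ and $\blambda_{x,a}\cdot\blambda_{y,b}$. The containment $\triangleleft M$ then says exactly that each implication in~\eqref{eqn_refinement_condition_SDA} is in force, and this correspondence goes through in both directions.
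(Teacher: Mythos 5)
Your proposal is correct and follows essentially the same route as the paper's own proof: part $(i)$ via $M=L^TL$ and Proposition~\ref{prop_equivalence_some_conditions_SDP}, and part $(ii)$ via exactly the paper's block-wise construction of $N$ (diagonal entries $\mu_{x,a}$ on diagonal blocks, $\mu_{\bx,\ba}$ on edge blocks, outer products $\bmu_x\bmu_y^T$ elsewhere), with the same translation between the support condition and the refinement condition~\eqref{eqn_refinement_condition_SDA}. The only cosmetic differences are bookkeeping ones—e.g., recovering (\ref{AIP2}) in the converse direction also uses (\ref{r1}) and (\ref{r2}), and looplessness is really what keeps the diagonal-block and edge-block cases of the definition of $N$ disjoint and compatible with (\ref{r1})—but these do not affect the argument.
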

\begin{proof}
    To prove the first statement of the proposition, let the vectors $\{\blambda_{x,a}:x\in \Vset(\X),a\in \Vset(\A)\}$ witness that $\SDP(\X,\A)=\YES$, where each $\blambda_{x,a}$ belongs to $\R^{pn}$. Consider the $pn\times pn$ matrix $L$ whose columns are the vectors $\blambda_{x,a}$, and let $M=L^TL$. We claim that $M$ is an $\SDP$-matrix for $\X,\A$. 
Since $M\succcurlyeq 0$ by construction and $M\geq 0$ by (\ref{SDP1}), it is enough to show that $M$ is a relaxation matrix.   
Observe that
\begin{align}
\label{eqn_1725_2205}
\notag
    (\be_x\otimes\be_a)^TM(\be_y\otimes\be_b)
    &\spaceeq
    (\be_x\otimes\be_a)^TL^TL(\be_y\otimes\be_b)
    \spaceeq
    (L(\be_x\otimes\be_a))^T(L(\be_y\otimes\be_b))\\
    &\spaceeq
    \blambda_{x,a}\cdot\blambda_{y,b}
\end{align}
for each $x,y\in \Vset(\X)$, $a,b\in \Vset(\A)$. We obtain
\begin{align*}
    \be_a^T(\be_x\otimes I_n)^T M (\be_x\otimes I_n)\be_b
    &\spaceeq
    (1\otimes\be_a)^T(\be_x\otimes I_n)^T M (\be_x\otimes I_n)(1\otimes\be_b)\\
    &\spaceeq
    (\be_x\otimes\be_a)^TM(\be_x\otimes\be_b)
    \spaceeq
    \blambda_{x,a}\cdot\blambda_{x,b},
\end{align*}
so the condition (\ref{r1}) is guaranteed by (\ref{SDP2}). Also, (\ref{r2}) directly follows from (\ref{SDP3}) using~\eqref{eqn_1725_2205}.
From (\ref{SDP4}), we see that 
\begin{align}
\label{eqn_3105_1223}
\notag
    (\be_x\otimes\bone_n)^T M (\be_y\otimes\bone_n)
    &\spaceeq
    \Big(\be_x\otimes\sum_{a\in \Vset(\A)}\be_a\Big)^T M \Big(\be_y\otimes\sum_{b\in \Vset(\A)}\be_b\Big)\\
    \notag
    &\spaceeq
    \sum_{a,b\in \Vset(\A)}(\be_x\otimes\be_a)^TM(\be_y\otimes\be_b)
    \spaceeq
    \sum_{a,b\in \Vset(\A)}\blambda_{x,a}\cdot\blambda_{y,b}\\
    &\spaceeq
    \blambda_{x,\A}\cdot\blambda_{y,\A}
    \spaceeq
    1
\end{align}
for each $x,y\in \Vset(\X)$. It follows from Proposition~\ref{prop_equivalence_some_conditions_SDP} that the conditions (\ref{r3}), (\ref{r4}), and (\ref{r5}) are satisfied.

Conversely, let $M$ be an $\SDP$-matrix for $\X,\A$. Since $M\succcurlyeq 0$, we can find a Cholesky decomposition $M=L^TL$ for some $pn\times pn$ matrix $L$. Letting $\blambda_{x,a}=L(\be_x\otimes\be_a)$ for each $x\in \Vset(\X)$, $a\in \Vset(\A)$, we see that the equation~\eqref{eqn_1725_2205} holds. Thus, the conditions (\ref{SDP1}), (\ref{SDP2}), and (\ref{SDP3}) follow from the entrywise nonnegativity of $M$, (\ref{r1}), and (\ref{r2}), respectively. Moreover, since $M$ satisfies (\ref{r6}) by Proposition~\ref{prop_equivalence_some_conditions_SDP}, we can reverse~\eqref{eqn_3105_1223} to see that $\blambda_{x,\A}\cdot\blambda_{y,\A}=(\be_x\otimes\bone_n)^T M (\be_y\otimes\bone_n)=1$ for each $x,y\in \Vset(\X)$,
which gives (\ref{SDP4}).  

Next, we establish the second statement of the proposition. Let $\blambda$ and $\mu$ witness that $\SDA(\X,\A)=\YES$, and let $M$ be the $\SDP$-matrix corresponding to $\blambda$ as defined in part $(i)$ of this proposition. Consider the $pn\times pn$ matrix $N$ defined by
    \begin{align}
    \label{eqn_0905_1102}
    (\be_x\otimes\be_a)^TN(\be_y\otimes\be_b)
        &\spaceeq
        \left\{
        \begin{array}{llll}
            \mu_{x,a} & \mbox{ if } x=y,\, a=b \\
             \mu_{x,a}\mu_{y,b}& \mbox{ if } x\neq y,\, (x,y)\not\in\Eset(\X)\\
             \mu_{(x,y),(a,b)}& \mbox{ if }(x,y)\in\Eset(\X),\,(a,b)\in\Eset(\A)\\
             0& \mbox{ otherwise}.
        \end{array}
        \right.
    \end{align}
    We claim that $N$ is a relaxation matrix for $\X,\A$. Since all entries of $N$ are integral, this would mean that $N$ is an $\AIP$-matrix. The conditions (\ref{r1}) and (\ref{r2}) are immediate from~\eqref{eqn_0905_1102}, using the assumption that $\X$ is loopless.
From (\ref{AIP1}) and (\ref{AIP2}), we straightforwardly check that the equations
    \begin{align}
    \label{eqn_0905_1103}
    (\be_x\otimes\be_a)^T N(\be_y\otimes\bone_n)
        \spaceeq
        (\be_x\otimes\be_a)^T N^T(\be_y\otimes\bone_n)
        \spaceeq
        \mu_{x,a}
    \end{align}
    hold for each $x,y\in \Vset(\X)$ and $a\in \Vset(\A)$. In particular, this means that the vectors $N(\be_y\otimes\bone_n)$ and $N^T(\be_y\otimes\bone_n)$ do not depend on $y$, which yields (\ref{r3}) and (\ref{r4}). Finally, using~\eqref{eqn_0905_1103} and (\ref{AIP1}), we find
    \begin{align*}
        \bone_{pn}^TN\bone_{pn}
        \spaceeq
        \sum_{\substack{(x,y)\in \Vset(\X)^2\\a\in \Vset(\A)}}(\be_x\otimes\be_a)^TN(\be_{y}\otimes\bone_n)
        \spaceeq
        \sum_{\substack{(x,y)\in \Vset(\X)^2\\a\in \Vset(\A)}}\mu_{x,a}
        \spaceeq
        \sum_{(x,y)\in \Vset(\X)^2}1
        \spaceeq
        p^2,
    \end{align*}
    which yields (\ref{r5}) and proves the claim. Since $N\circ ((I_p+\adj(\X))\otimes J_n)=N\circ(I_p\otimes J_n)+N\circ(\adj(\X)\otimes J_n)$,
    \begin{align*}
        \supp(N\circ ((I_p+\adj(\X))\otimes J_n))
        &\;\;\subseteq\;\;
        \supp(N\circ (I_p\otimes J_n))\;\cup\;
        \supp(N\circ (\adj(\X)\otimes J_n))\\
        &\spaceeq
        \{((x,a),(x,a)):\mu_{x,a}\neq 0\}\\
        &\;\;\cup\;\;\{((x,a),(y,b)):(x,y)\in\Eset(\X),\,(a,b)\in\Eset(\A),\,\mu_{(x,y),(a,b)}\neq 0\}.
    \end{align*}
By~\eqref{eqn_refinement_condition_SDA}, $\mu_{x,a}\neq 0$ implies $\|\blambda_{x,a}\|\neq 0$, which means that $(\be_x\otimes\be_a)^TM(\be_x\otimes\be_a)\neq 0$. Similarly, $\mu_{(x,y),(a,b)}\neq 0$ implies that $\blambda_{x,a}\cdot\blambda_{y,b}\neq 0$, which means that $(\be_x\otimes\be_a)^TM(\be_y\otimes\be_b)\neq 0$. It follows that $N\circ ((I_p+\adj(\X))\otimes J_n)\;\triangleleft\;M$, as required.

Conversely, let $M$ and $N$ be an $\SDP$-matrix and an $\AIP$-matrix for $\X,\A$, respectively. As was shown above, the vectors $\blambda_{x,a}$ obtained through a Cholesky decomposition of $M$ give a solution to $\SDP(\X,\A)$. Moreover, we define $\mu_{x,a}=(\be_x\otimes\be_a)^TN(\be_x\otimes\be_a)$ for $x\in \Vset(\X)$ and $a\in \Vset(\A)$, and $\mu_{(x,y),(a,b)}=(\be_x\otimes\be_a)^TN(\be_y\otimes\be_b)$ for $(x,y)\in\Eset(\X)$ and $(a,b)\in\Eset(\A)$. Reversing the argument above, it easily follows from the conditions defining an $\AIP$-matrix that $\mu$ satisfies (\ref{AIP1}) and (\ref{AIP2}), while the requirement~\eqref{eqn_refinement_condition_SDA} follows from the fact that $N\circ ((I_p+\adj(\X))\otimes J_n)\;\triangleleft\;M$.
\end{proof}
Recall that, given two finite sets $R$ and $S$ and a function $f:R\to S$, $Q_f$ denotes the $\lvert R\rvert\times \lvert S\rvert$ matrix whose $(r,s)$-th entry is $1$ if $f(r)=s$, $0$ otherwise.

\begin{prop*}
[Proposition~\ref{prop_relaxation_matrix_preservation_homo} restated]
    Let $\X,\X',\A,\A'$ be digraphs, let $f:\X'\to\X$ and $g:\A\to\A'$ be homomorphisms, and let $M$ be a relaxation matrix for $\X,\A$. Then
\begin{align*}
    M^{(f,g)}\spaceeq(Q_f\otimes Q_g^T)M(Q_f^T\otimes Q_g)
\end{align*}
is a relaxation matrix for $\X',\A'$. Furthermore, if $M$ is an $\SDP$-matrix (resp. $\AIP$-matrix) for $\X,\A$, then $M^{(f,g)}$ is an $\SDP$-matrix (resp. $\AIP$-matrix) for $\X',\A'$.
\end{prop*}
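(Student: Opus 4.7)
The plan is to verify each of the defining conditions $(r_1)$–$(r_5)$ for $M^{(f,g)}$, and then separately check that the semidefiniteness/nonnegativity (for $\SDP$-matrices) and the integrality (for $\AIP$-matrices) are preserved. The workhorse throughout is the interplay between Kronecker products and the identities from Lemma~\ref{lem_basic_Q_f}, especially $Q_f^T\be_{x'}=\be_{f(x')}$, $Q_g\be_{b'}=\sum_{b\in g^{-1}(b')}\be_b$, and $Q_f\bone=\bone$, $Q_g\bone=\bone$.

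First I would treat $(r_1)$: using $(\be_{x'}\otimes I_{n'})^T M^{(f,g)}(\be_{x'}\otimes I_{n'}) = Q_g^T\bigl[(\be_{f(x')}\otimes I_n)^T M (\be_{f(x')}\otimes I_n)\bigr]Q_g$, and then observing that for any diagonal matrix $D$, the matrix $Q_g^T D Q_g$ has $(a',b')$-entry equal to $\sum_{a:g(a)=a'=b'} D_{aa}$, hence is diagonal. For $(r_2)$, I would expand $(\be_{x'}\otimes\be_{a'})^T M^{(f,g)}(\be_{y'}\otimes\be_{b'}) = \sum_{a\in g^{-1}(a'),\,b\in g^{-1}(b')} (\be_{f(x')}\otimes\be_a)^T M (\be_{f(y')}\otimes\be_b)$; since $f$ is a homomorphism $(f(x'),f(y'))\in\Eset(\X)$, and since $g$ is a homomorphism, $(a,b)\in\Eset(\A)$ would force $(a',b')\in\Eset(\A')$, contradicting hypothesis, so every summand vanishes by $(r_2)$ for $M$.

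For $(r_3)$ I would compute $M^{(f,g)}(\be_{x'}\otimes\bone_{n'}) = (Q_f\otimes Q_g^T)\,M\,(\be_{f(x')}\otimes\bone_n)$ (using $Q_g\bone=\bone$), which is independent of $x'$ by $(r_3)$ for $M$; condition $(r_4)$ follows from the same computation applied to $(M^{(f,g)})^T = (Q_f\otimes Q_g^T)M^T(Q_f^T\otimes Q_g)$. The naive attack on $(r_5)$ breaks, because $Q_f^T\bone_{p'}$ is not $\bone_p$ in general. Instead I would first verify $(r_6)$ for $M^{(f,g)}$ by the calculation
\begin{align*}
(\be_{x'}\otimes\bone_{n'})^T M^{(f,g)}(\be_{y'}\otimes\bone_{n'}) \;=\; (\be_{f(x')}\otimes\bone_n)^T M(\be_{f(y')}\otimes\bone_n) \;=\; 1,
\end{align*}
using Proposition~\ref{prop_equivalence_some_conditions_SDP}$(i)$ (applied to $M$) for the last equality, and then deduce $(r_5)$ by summing $(r_6)$ over all $(x',y')\in\Vset(\X')^2$.

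Finally, the entrywise-nonnegativity statement for $\SDP$-matrices is immediate since $Q_f$ and $Q_g$ are Boolean. Positive semidefiniteness is preserved because $M^{(f,g)} = P M P^T$ with $P=Q_f\otimes Q_g^T$, so $\bv^T M^{(f,g)}\bv = (P^T\bv)^T M (P^T\bv)\geq 0$. Integrality is preserved because $Q_f,Q_g$ have integer entries, so the same holds for $M^{(f,g)}$ when $M$ is integral. None of the steps poses a real obstacle; the only place where a little care is required is the treatment of $(r_5)$, which must be routed through $(r_6)$ since $Q_f^T\bone\neq\bone$ when $f$ is not a bijection.
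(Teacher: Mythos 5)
Your proposal is correct and follows essentially the same route as the paper: the same block-entry identity $(\be_{x'}\otimes\be_{a'})^TM^{(f,g)}(\be_{y'}\otimes\be_{b'})=\sum_{a\in g^{-1}(a'),\,b\in g^{-1}(b')}(\be_{f(x')}\otimes\be_a)^TM(\be_{f(y')}\otimes\be_b)$ drives (\ref{r1})--(\ref{r2}), the identity $M^{(f,g)}(\be_{x'}\otimes\bone_{n'})=(Q_f\otimes Q_g^T)M(\be_{f(x')}\otimes\bone_n)$ gives (\ref{r3})--(\ref{r4}), and (\ref{r5}) is obtained by reducing block sums to the value $1$ (the paper rederives this inline from (\ref{r3})--(\ref{r5}) for $M$, whereas you invoke Proposition~\ref{prop_equivalence_some_conditions_SDP}$(i)$, which is the same argument). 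The preservation of nonnegativity, integrality, and positive semidefiniteness is handled exactly as in the paper.
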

\begin{proof}
    For $x',y'\in \Vset(\X')$ and $a',b'\in \Vset(\A')$, we find 
    \begin{align}
    \label{eqn_0303_1848}
    \notag
        (\be_{x'}\otimes\be_{a'})^TM^{(f,g)}(\be_{y'}\otimes\be_{b'})
        &\spaceeq
        (\be_{x'}\otimes\be_{a'})^T(Q_f\otimes Q_g^T)M(Q_f^T\otimes Q_g)(\be_{y'}\otimes\be_{b'})\\
        \notag
        &\spaceeq
        (\be_{x'}^TQ_f\otimes \be_{a'}^TQ_g^T)M(Q_f^T\be_{y'}\otimes Q_g\be_{b'})\\
        &\spaceeq
        \sum_{\substack{a\in g^{-1}(a')\\b\in g^{-1}(b')}}(\be_{f(x')}\otimes\be_{a})^TM(\be_{f(y')}\otimes \be_{b}).
    \end{align}
Note that, if $x'=y'$ and $a'\neq b'$, we have $f(x')=f(y')$ and $g^{-1}(a')\cap g^{-1}(b')=\emptyset$. Similarly, if $(x',y')\in\Eset(\X')$ and $(a',b')\not\in \Eset(\A')$, we have $(f(x'),f(y'))\in\Eset(\X)$ and $(g^{-1}(a')\times g^{-1}(b'))\,\cap\, \Eset(\A)=\emptyset$, since $f$ and $g$ are homomorphisms. Hence, the conditions (\ref{r1}) and (\ref{r2}) for $M^{(f,g)}$ follow from the same conditions applied to $M$. 
Let $p'=\lvert\Vset(\X')\rvert$ and $n'=\lvert\Vset(\A')\rvert$. Observe that
    \begin{align*}
        M^{(f,g)}(\be_{x'}\otimes\bone_{n'})
        &\spaceeq
        (Q_f\otimes Q_g^T)M(Q_f^T\otimes Q_g)(\be_{x'}\otimes\bone_{n'})
        \spaceeq
        (Q_f\otimes Q_g^T)M(\be_{f(x')}\otimes\bone_n).
    \end{align*}
    Therefore, (\ref{r3}) follows from the same condition applied to $M$, and (\ref{r4}) follows analogously. Fix $\tilde x,\tilde y\in \Vset(\X')$. From (\ref{r3}) and (\ref{r4}), we find
    \begin{align*}
        \bone_{pn}^TM\bone_{pn}
        &\spaceeq
        (\bone_p\otimes\bone_n)^TM(\bone_p\otimes\bone_n)
        \spaceeq
        \sum_{x,y\in \Vset(\X)}(\be_x\otimes\bone_n)^TM(\be_y\otimes\bone_n)\\
        &\spaceeq
        p^2(\be_{f(\tilde x)}\otimes\bone_n)^TM(\be_{f(\tilde y)}\otimes\bone_n)
        \intertext{and, similarly,}
\bone_{p'n'}^TM^{(f,g)}\bone_{p'n'}
        &\spaceeq
        (p')^2(\be_{\tilde x}\otimes\bone_{n'})^TM^{(f,g)}(\be_{\tilde y}\otimes\bone_{n'}).
    \end{align*}
    Using~\eqref{eqn_0303_1848}, we obtain
    \begin{align*}
        (\be_{\tilde x}\otimes\bone_{n'})^TM^{(f,g)}(\be_{\tilde y}\otimes\bone_{n'})
        &\spaceeq
        \sum_{a',b'\in \Vset(\A')}(\be_{\tilde x}\otimes\be_{a'})^TM^{(f,g)}(\be_{\tilde y}\otimes\be_{b'})\\
        &\spaceeq
        \sum_{a',b'\in \Vset(\A')}\sum_{\substack{a\in g^{-1}(a')\\b\in g^{-1}(b')}}(\be_{f(\tilde x)}\otimes\be_{a})^TM(\be_{f(\tilde y)}\otimes \be_{b})\\
        &\spaceeq
        \sum_{a,b\in \Vset(\A)}(\be_{f(\tilde x)}\otimes\be_{a})^TM(\be_{f(\tilde y)}\otimes \be_{b})\\
        &\spaceeq
        (\be_{f(\tilde x)}\otimes\bone_n)^TM(\be_{f(\tilde y)}\otimes \bone_n).
    \end{align*}
    From (\ref{r5}) applied to $M$, we get
    \begin{align*}
        \bone_{p'n'}^TM^{(f,g)}\bone_{p'n'}
        \spaceeq
        (p')^2(\be_{f(\tilde x)}\otimes\bone_n)^TM(\be_{f(\tilde y)}\otimes \bone_n)
        \spaceeq
        (p')^2\frac{\bone_{pn}^TM\bone_{pn}}{p^2}
        \spaceeq
        (p')^2,
    \end{align*}
    which shows that $M^{(f,g)}$ satisfies (\ref{r5}) and, thus, it is a relaxation matrix.

    To prove the last part of the statement, we simply observe that  $Q_f$ and $Q_g$ are Boolean matrices, so the product in the right-hand side of~\eqref{eqn_0106_1611} preserves the nonnegativity and the integrality of $M$. Moreover, since we can write $M^{(f,g)}=PMP^T$ for $P=Q_f\otimes Q_g^T$, it easily follows that  $M^{(f,g)}\succcurlyeq 0$ when $M\succcurlyeq 0$.
\end{proof}

\noindent To prove Proposition~\ref{prop_monotonicity_SDP_SDA_wrt_homomorphisms} we will need the following lemma, whose trivial proof is omitted.

\begin{lem}
\label{lem_basic_support}
    Let $A,B$ be $q\times r$ matrices and let $C,D$ be $r\times s$ matrices, and suppose that $A\triangleleft B$ and $C\triangleleft D$ and that $B,D\geq 0$. Then, $AC\;\triangleleft\; BD$.
\end{lem}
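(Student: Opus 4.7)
The plan is to argue contrapositively at the level of individual entries: I will show that whenever an entry of $AC$ is nonzero, the corresponding entry of $BD$ must also be nonzero. Fix indices $i\in[q]$ and $j\in[s]$ and suppose $(AC)_{ij}\neq 0$. By the definition of matrix multiplication, $(AC)_{ij}=\sum_{k=1}^r A_{ik}C_{kj}$, so this sum being nonzero forces the existence of some index $k_0\in[r]$ for which both $A_{ik_0}\neq 0$ and $C_{k_0 j}\neq 0$.

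From the hypotheses $A\triangleleft B$ and $C\triangleleft D$, it follows that $B_{ik_0}\neq 0$ and $D_{k_0 j}\neq 0$. This is where the nonnegativity assumption enters crucially: since $B\geq 0$ and $D\geq 0$, these nonzero entries are in fact strictly positive, i.e., $B_{ik_0}>0$ and $D_{k_0 j}>0$. Now, the full sum defining $(BD)_{ij}$ is
\begin{align*}
    (BD)_{ij}\spaceeq\sum_{k=1}^r B_{ik}D_{kj},
\end{align*}
and every summand is nonnegative (being a product of two nonnegative reals), while the $k_0$-th summand is strictly positive. Hence $(BD)_{ij}>0$, and in particular $(BD)_{ij}\neq 0$. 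This shows $\supp(AC)\subseteq\supp(BD)$, which is the required containment $AC\triangleleft BD$.

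The argument is entirely routine, as indicated by the excerpt (which omits the proof). The only point worth flagging is the indispensability of the assumption $B,D\geq 0$: without it, the products $B_{ik}D_{kj}$ could have mixed signs and cancel in the sum $\sum_k B_{ik}D_{kj}$, so a nonzero contribution from $k_0$ would not guarantee that the total is nonzero. There is no real obstacle to overcome here; the proof is a one-line entrywise inspection.
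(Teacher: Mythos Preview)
Your argument is correct and is exactly the routine entrywise verification one would expect; the paper itself omits the proof as trivial, so there is nothing to compare against. Your remark on why the nonnegativity of $B$ and $D$ is needed (to prevent cancellation in $\sum_k B_{ik}D_{kj}$) is also accurate.
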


\begin{prop*}
[Proposition~\ref{prop_monotonicity_SDP_SDA_wrt_homomorphisms} restated]
    Let $\X,\X',\A,\A'$ be digraphs such that $\X'\to\X$ and $\A\to\A'$. Then
    \begin{itemize}
        \item[$(i)$] $\SDP(\X,\A)=\YES$ implies $\SDP(\X',\A')=\YES$;
        \item[$(ii)$] if $\X$ is loopless, $\SDA(\X,\A)=\YES$ implies $\SDA(\X',\A')=\YES$.
    \end{itemize}
\end{prop*}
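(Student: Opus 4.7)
The plan is to reduce both parts of the statement to an application of Proposition~\ref{prop_relaxation_matrix_preservation_homo} via the matrix characterisation given in Proposition~\ref{prop_matrix_formulation_SDP_SDA}. For part $(i)$, I would take a witnessing $\SDP$-matrix $M$ for $\X,\A$ as supplied by Proposition~\ref{prop_matrix_formulation_SDP_SDA}$(i)$, apply Proposition~\ref{prop_relaxation_matrix_preservation_homo} to the homomorphisms $f:\X'\to\X$ and $g:\A\to\A'$ to obtain an $\SDP$-matrix $M^{(f,g)}$ for $\X',\A'$, and then invoke Proposition~\ref{prop_matrix_formulation_SDP_SDA}$(i)$ again to conclude $\SDP(\X',\A')=\YES$.

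For part $(ii)$, I would first observe that $\X'$ is loopless: any loop $(x',x')\in \Eset(\X')$ would force $(f(x'),f(x'))\in \Eset(\X)$, contradicting the hypothesis on $\X$. Take witnesses $M,N$ for $\SDA(\X,\A)=\YES$ as provided by Proposition~\ref{prop_matrix_formulation_SDP_SDA}$(ii)$, with $Z\triangleleft M$ where $Z=N\circ((I_p+\adj(\X))\otimes J_n)$. Proposition~\ref{prop_relaxation_matrix_preservation_homo} ensures that $M^{(f,g)}$ and $N^{(f,g)}$ are respectively an $\SDP$-matrix and an $\AIP$-matrix for $\X',\A'$, so the only remaining task is to verify the refinement condition $N^{(f,g)}\circ((I_{p'}+\adj(\X'))\otimes J_{n'})\triangleleft M^{(f,g)}$.

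For the refinement condition, setting $P=Q_f\otimes Q_g^T$ so that $M^{(f,g)}=PMP^T$ and $N^{(f,g)}=PNP^T$, I would establish the desired inclusion via the chain
\begin{align*}
    N^{(f,g)}\circ((I_{p'}+\adj(\X'))\otimes J_{n'})\;\;\triangleleft\;\;PZP^T\;\;\triangleleft\;\;PMP^T=M^{(f,g)}.
\end{align*}
The first inclusion follows by a direct entrywise check: on block positions $(x',y')$ with $x'=y'$ or $(x',y')\in\Eset(\X')$, the homomorphism property forces $(f(x'),f(y'))$ to be either a diagonal pair or an edge of $\X$, so the Schur product with $(I_p+\adj(\X))\otimes J_n$ does not kill the relevant entries of $N$, and the corresponding sums $\sum_{a\in g^{-1}(a'),b\in g^{-1}(b')}(\cdot)_{(f(x'),a),(f(y'),b)}$ coincide for $N$ and $Z$; on all other block positions the left-hand side is identically zero. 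The second inclusion follows from two applications of Lemma~\ref{lem_basic_support}, using that $P$, $P^T$, and $M$ are all entrywise nonnegative.

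The expected main obstacle is the entrywise-nonnegativity hypothesis in Lemma~\ref{lem_basic_support}: since $N$ is merely integral, one cannot apply the lemma directly to $N$ itself, and a naive comparison of $N^{(f,g)}$ with $M^{(f,g)}$ could fail because of cancellations among negative entries of $N$. Passing through the auxiliary matrix $PZP^T$ sidesteps this issue, as the intermediate support comparisons only require nonnegativity of the dominators $P$ and $M$ rather than of $N$.
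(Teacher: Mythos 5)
Your proposal is correct and follows essentially the same route as the paper: reduce both parts to Proposition~\ref{prop_relaxation_matrix_preservation_homo} via the matrix formulation of Proposition~\ref{prop_matrix_formulation_SDP_SDA}, note that $\X'$ inherits looplessness, and verify the refinement condition by a support comparison through $P(\cdot)P^T$ using Lemma~\ref{lem_basic_support} with the nonnegativity of $P$ and $M$. The only (harmless) difference is that you work directly with the combined matrix $Z=N\circ((I_p+\adj(\X))\otimes J_n)$ and an exact blockwise identity, whereas the paper splits $Z$ into its diagonal and edge parts and invokes disjointness of the transformed supports before applying the same lemma.
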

\begin{proof}
    Part $(i)$ directly follows from Proposition~\ref{prop_matrix_formulation_SDP_SDA} and Proposition~\ref{prop_relaxation_matrix_preservation_homo}. To prove part $(ii)$, let $M$ and $N$ be the $\SDP$-matrix and the $\AIP$-matrix for $\X,\A$ whose existence is guaranteed by Proposition~\ref{prop_matrix_formulation_SDP_SDA}. Choose two homomorphisms $f:\X'\to\X$ and $g:\A\to\A'$.
Proposition~\ref{prop_relaxation_matrix_preservation_homo} gives that $M^{(f,g)}$ and $N^{(f,g)}$ are an $\SDP$-matrix and an $\AIP$-matrix for $\X',\A'$, respectively. Let $N_{\operatorname{d}}=N\circ(I_p\otimes J_n)$, $N_{\operatorname{e}}=N\circ(\adj(\X)\otimes J_n)$, $N^{(f,g)}_{\operatorname{d}}=N^{(f,g)}\circ(I_{p'}\otimes J_{n'})$, and $N^{(f,g)}_{\operatorname{e}}=N^{(f,g)}\circ(\adj(\X')\otimes J_{n'})$. Note that, if $\X'\to\X$ and $\X$ is loopless, then $\X'$ is also loopless. Hence,
if we establish that
    \begin{align}
        \label{eqn_0206_1208}N^{(f,g)}_{\operatorname{d}}+
        N^{(f,g)}_{\operatorname{e}}\;\;\triangleleft\;\;M^{(f,g)},
    \end{align}
    we can conclude that $\SDA(\X',\A')=\YES$ using Proposition~\ref{prop_matrix_formulation_SDP_SDA}. 
    Call $P=Q_f\otimes Q_g^T$, so that $N^{(f,g)}=PNP^T$ and $M^{(f,g)}=PMP^T$. A straightforward computation shows that $(PN_{\operatorname{d}}P^T)\circ(I_{p'}\otimes J_{n'})=N^{(f,g)}_{\operatorname{d}}$ and, similarly, $(PN_{\operatorname{e}}P^T)\circ(\adj(\X')\otimes J_{n'})=N^{(f,g)}_{\operatorname{e}}$. In particular, this means that $N^{(f,g)}_{\operatorname{d}}\;\triangleleft\; PN_{\operatorname{d}}P^T$ and $N^{(f,g)}_{\operatorname{e}}\;\triangleleft \;PN_{\operatorname{e}}P^T$. Since $\X'$ is loopless, the support of $PN_{\operatorname{d}}P^T$ and the support of $PN_{\operatorname{e}}P^T$ are disjoint, which means that the support of their sum equals the union of their supports. Hence, 
    \begin{align*}
        \supp(N^{(f,g)}_{\operatorname{d}}+N^{(f,g)}_{\operatorname{e}})
        &\;\;\subseteq\;\;
        \supp(N^{(f,g)}_{\operatorname{d}})\cup\supp(N^{(f,g)}_{\operatorname{e}})
        \;\;\subseteq\;\;
        \supp(PN_{\operatorname{d}}P^T)\cup\supp(PN_{\operatorname{e}}P^T)\\
        &\spaceeq
        \supp(PN_{\operatorname{d}}P^T+PN_{\operatorname{e}}P^T).
    \end{align*}
Recall that, by Proposition~\ref{prop_matrix_formulation_SDP_SDA}, $N_{\operatorname{d}}+N_{\operatorname{e}}\;\triangleleft\; M$. Since $M$ and $P$ are entrywise nonnegative, we can apply Lemma~\ref{lem_basic_support} to obtain
\begin{align*}
    N^{(f,g)}_{\operatorname{d}}+N^{(f,g)}_{\operatorname{e}}
    \;\;\triangleleft\;\;
    PN_{\operatorname{d}}P^T+PN_{\operatorname{e}}P^T
    \spaceeq
    P(N_{\operatorname{d}}+N_{\operatorname{e}})P^T
    \;\;\triangleleft\;\;
    PMP^T
    \spaceeq
    M^{(f,g)},
\end{align*}
    thus establishing~\eqref{eqn_0206_1208}.
\end{proof}

\end{document}